\begin{document}

\makeatletter

\newenvironment{algorithm}{\begin{algorithm1}\ \\
    \vspace{-0.2cm}}{\end{algorithm1}}

\newenvironment{proofsk}{\begin{proof}[Proof Sketch:]}
{\end{proof}}

\newenvironment{smallproof}{\nopagebreak \begin{quote} %
\begin{small} \noindent{\bf Proof:}}{ \qed \par %
\end{small} \end{quote} \medskip}

\newenvironment{note}{\nopagebreak \begin{quote} %
\noindent{\bf Note:}}{%
\end{quote} \medskip}

\newenvironment{notes}{\nopagebreak \begin{quote} %
\noindent{\bf Notes:} \par%
\begin{itemize}}{%
\end{itemize}\end{quote} \medskip}

\newenvironment{summary}{\begin{quote} {\bf Summary:}}{\end{quote}}

%%%%%%%%%%%%%%%%%%%%%%%%%%%%%%%%%
% General Macros

\newcommand{\eqdef}{\stackrel{def}{=}}
\newcommand{\N}{\mathbb{N}}
\newcommand{\R}{\mathbb{R}}
\newcommand{\C}{\mathbb{C}}
\newcommand{\Z}{\mathbb{Z}}
\newcommand{\F}{\mathbb{F}}
\newcommand{\Zn}{{\Z}_n}
\newcommand{\bits}{\{0,1\}}
\newcommand{\inr}{\in_{\mbox{\tiny R}}}
\newcommand{\getsr}{\gets_{\mbox{\tiny R}}}
\newcommand{\st}{\mbox{ s.t. }}
\newcommand{\etal}{{\it et al }}
\newcommand{\into}{\rightarrow}
\newcommand{\ExactT}{$f^G_{k,T}$~}
\newcommand{\Exactn}{{\rm Exactly-}$n$~}
\newcommand{\ExactN}{{\rm Exactly-}$N$~}

\newcommand{\Ex}{\mathbb{E}}
\newcommand{\To}{\rightarrow}
\newcommand{\e}{\epsilon}
\newcommand{\ee}{\varepsilon}
\newcommand{\ceil}[1]{{\lceil{#1}\rceil}}
\newcommand{\floor}[1]{{\lfloor{#1}\rfloor}}
\newcommand{\angles}[1]{\langle #1 \rangle}
\newcommand{\var}{\mbox{var}}
\newcommand{\trace}{\mbox{trace}}
\newcommand{\ignore}[1]{}
\newcommand{\Alg}{\mathrm{Alg}}
\newcommand{\fro}[1]{\|#1\|_F}
\newcommand{\trn}[1]{\|#1\|_{tr}}
\newcommand{\norm}[1]{\|#1\|}

% complexity classes
\newcommand{\NP}{\mathbf{NP}}
\renewcommand{\P}{\mathbf{P}}
\newcommand{\PCP}{\mathbf{PCP}}
\newcommand{\RP}{\mathbf{RP}}
\newcommand{\BPP}{\mathbf{BPP}}
\newcommand{\Tr}{\mathrm{Tr}}

\newcommand{\lang}{\mathcal{L}}

\newcommand{\Poincare}{Poincar�}

%%%%%%%%%%%%%%%%%%%%%%%%%%%%%%%%%%%%%%%%%%%%%%%%%%%%%%%%%%%%%%%%%%%%%%%%%%%%%%%%%%%%%%%%%%%%%%%%%%%%%%%%%%%%%
%% Environment definitions
%%%%%%%%%%%%%%%%%%%%%%%%%%%%%%%%%%%%%%%%%%%%%%%%%%%%%%%%%%%%%%%%%%%%%%%%%%%%%%%%%%%%%%%%%%%%%%%%%%%%%%%%%%%%%

\newtheorem{theorem}{Theorem}
\newtheorem{lemma}[theorem]{Lemma}
\newtheorem{fact}[theorem]{Fact}
\newtheorem{claim}[theorem]{Claim}
\newtheorem{corollary}[theorem]{Corollary}
\newtheorem{conjecture}[theorem]{Conjecture}
\newtheorem{question}[theorem]{Question}
\newtheorem{proposition}[theorem]{Proposition}
\newtheorem{axiom}[theorem]{Axiom}
\newtheorem{remark}[theorem]{Remark}
\newtheorem{example}[theorem]{Example}
\newtheorem{exercise}[theorem]{Exercise}
\newtheorem{definition}[theorem]{Definition}
\newtheorem{observation}[theorem]{Observation}

\def\pproof{\par\penalty-1000\vskip .5 pt\noindent{\bf Proof\/ }}
\newcommand{\QED}{\hfill$\;\;\;\rule[0.1mm]{2mm}{2mm}$}

\newenvironment{proof}{\begin{pproof}}{\QED\end{pproof}~\\}

%%%%%%%%%%%%%%%%%%%%%%%%%%%%%%%%%%%%%%%%%%%%%%%%%%%%%%%%%%%%%%%%%%%%%%%%%%%%%%%%%%%%%%%%%%%%%%%%%%%%%%%%%%%%%
%% Macro definitions
%%%%%%%%%%%%%%%%%%%%%%%%%%%%%%%%%%%%%%%%%%%%%%%%%%%%%%%%%%%%%%%%%%%%%%%%%%%%%%%%%%%%%%%%%%%%%%%%%%%%%%%%%%%%%

\newcommand{\NChooseM}[2]{\ensuremath{\lp{(}\begin{array}{cc}#1\\#2\end{array}\rp{)}}}
\newcommand{\mc}[1]{{\cal{#1}}}
\newcommand{\lp}[1]{\left #1}
\newcommand{\rp}[1]{\right #1}
\newcommand{\vect}[1]{\ensuremath{{\mathbf #1}}}
\newcommand{\IP}[2]{\ensuremath{\lp{<}#1,#2\rp{>}}}
% sign vector
\newcommand{\sv}{\ensuremath{\vect{v}}}
% sign vector entry
\newcommand{\sve}{\ensuremath{v}}

\newcommand{\op}{W}
\newcommand{\opr}{\vect{w}}
\newcommand{\allvects}[1]{L_{#1}}
\newcommand{\avn}[1]{L_{#1}}
\newcommand{\av}{L_n}
\newcommand{\ff}{\lfloor f \rfloor}

\newcommand{\sizeo}[1][\e]{\mathrm{size}^{(v)}_{#1}}
\newcommand{\corro}[2][\e]{\mathrm{ubc}^{(#2)}_{#1}}
\newcommand{\corr}[1][\e]{\mathrm{ubc}_{#1}}
\newcommand{\rcorro}[2][\e]{\mathrm{corr}^{(#2)}_{#1}}
\newcommand{\rcorr}[1][\e]{\mathrm{corr}_{#1}}
\newcommand{\size}[1][\e]{\mathrm{size}_{#1}}
\newcommand{\scorr}[1][\e]{\mathrm{sc}_{#1}}
\newcommand{\monoo}[2][\rho]{\mathrm{mono}^{(#2)}_{#1}}
\newcommand{\mono}[1][\rho]{\mathrm{mono}_{#1}}
\newcommand{\hmonoo}[2][\rho]{\mathrm{hmono}^{(#2)}_{#1}}
\newcommand{\hmono}[1][\rho]{\mathrm{hmono}_{#1}}

\newcommand{\rk}{\mathrm{rank}}

\newcommand{\ind}{\ensuremath{\alpha}}
\newcommand{\chr}{\ensuremath{\chi}}

\newcommand{\rs}{Ruzsa-Szemer\'{e}di}

\title{On The Communication Complexity of\\ High-Dimensional Permutations}
\author{Nati Linial\thanks{Supported in part by ERC grant 339096, High-dimensional combinatorics.}\\
	Hebrew University of Jerusalem\\
	Jerusalem, Israel \\
    	{\tt nati@cs.huji.ac.il}
\and Toniann Pitassi \\
   	University of Toronto and the Institute of Advanced Studies\\
    	Toronto, Canada and Princeton, U.S.A\\
    	{\tt toni@cs.toronto.edu}
\and Adi Shraibman \\
   	The Academic College of Tel-Aviv-Yaffo \\
    	Tel-Aviv, Israel \\
    	{\tt adish@mta.ac.il}
}

\date{}

\maketitle

\begin{abstract}
We study the multiparty communication complexity of
high dimensional permutations, in the Number On the Forehead (NOF) model. This model is due to Chandra, Furst and Lipton (CFL) who also gave a nontrivial protocol for the \Exactn problem where three players receive integer inputs and need to decide if their inputs sum to a given integer $n$. There is a considerable body of literature dealing with the same problem, where $(\mathbb{N},+)$ is replaced by some other abelian group.
Our work can be viewed as a far-reaching extension of this line of work.

We show that the known lower bounds for that group-theoretic problem apply to all high dimensional permutations. We introduce new proof techniques that appeal to recent advances in Additive Combinatorics and Ramsey theory. We reveal new and unexpected connections between the NOF communication complexity of high dimensional permutations and a variety of well known and thoroughly studied problems in combinatorics. 

Previous protocols for \Exactn all rely on the construction of large sets of integers 
without a 3-term arithmetic progression. No direct {\em algorithmic} protocol was 
previously known for the problem, and we provide the first such algorithm. This suggests new ways to significantly improve the CFL protocol. 

Many new open questions are presented throughout.
\end{abstract}

\section{Introduction}

The multiplayer Number On the Forehead (NOF) model of communication complexity
was first introduced by Chandra, Furst and Lipton \cite{CFL83}.
Here $k$ players need to evaluate a given function $f: [n]^{k}\to \{0,1\}$. We actually think of $f$ as having $k$ arguments $x_1,\ldots,x_k$, each comprised of $\log n$ bits.
The $i$-th input vector $x_i$ is placed metaphorically on player $i$'s forehead, so that every player sees the whole input but one argument. 
Players communicate by writing bits on a shared blackboard (according
to an agreed-upon protocol) in order to compute $f$.

The NOF communication model has turned out to be a fascinating, though exceedingly difficult object of study. Indeed, good lower bounds in the NOF model would resolve several longstanding open problems in complexity theory, such as lower bounds on the size of $ACC^0$ circuits for a natural function in $P$ \cite{Yao90, HG91}.
They also imply lower bounds for branching programs,
time-space tradeoffs for Turing machines \cite{KN97}, and proof complexity
lower bounds \cite{BPS06}. The implications of good NOF lower bounds go in other, less expected directions as well. E.g., knowing the communication complexity of specific natural functions, even for $k=3$, would have profound implications in graph theory and combinatorics. Finally, the search for nontrivial protocols in this area is a wonderful challenge for algorithms designers. There is a short list of such beautiful examples \cite{CFL83, Gro94} which beg to be extended.

Furthermore, our understanding of NOF communication complexity, even for $k =3$ players, lags well behind our understanding of the standard  model ($k=2$ players). This gap is usually attributed to the dearth of proof techniques in the NOF setting. In the 2-party setting, many measures of complexity allow us to prove both upper and lower bounds. Such measures include matrix rank, various matrix norms, nonnegative rank, discrepancy, corruption bounds and information complexity. 
Most of these measures are computationally simple and admit dual characterizations which are very helpful in proving both upper and lower bounds.
On the other hand, in the NOF setting for $k \geq 3$, the key combinatorial objects are cylinder intersections (rather than combinatorial rectangles)
and tensor norms. These are far more complex, and thus far have resisted
a workable characterization. 

A case in point is the separation of randomized from deterministic communication complexity. The 2-party {\em equality function} has a randomized protocol of bounded cost, whereas a simple rank argument shows that every deterministic protocol must incur linear cost. This provides an optimal separation of deterministic and randomized communication complexity \cite{KN97}.
On the other hand, for $k \ge 3$, the best {\em explicit} separation between nondeterministic
and randomized NOF complexity is sub-logarithmic, even though
counting arguments yield linear separations \cite{BDPW07}. The \Exactn function is defined as follows: Input $x_1,\ldots,x_k \in [n]$ is accepted iff $\sum_i x_i=n$.
In their seminal paper, Chandra, Furst and Lipton \cite{CFL83} conjectured that \Exactn achieves a strong separation. They also found a relation between the communication 
complexity of this function and well-known problems in additive combinatorics and Ramsey theory. But thus far, despite considerable research effort, the lower bounds for \Exactn are much weaker even than the best (sub-logarithmic) explicit separations.

The main goal of our work is to further investigate the connections between NOF complexity of functions and questions in additive combinatorics, with the hope of stimulating further research to make progress in both directions. A large and rapidly growing body of work establishes interesting relationships
between problems in additive combinatorics and complexity theory. For example, the study of expander graphs and extractors, pseudorandomness, and property testing is closely related, some time even synonymous with similar notions in additive combinatorics. Moreover, techniques from complexity theory have been useful in additive combinatorics and vice versa. 
Some recent examples include the proof of the cap-set 
conjecture \cite{croot2016progression,ellenberg2017large} and Dvir's resolution \cite{dvir-kakeya} of the finite field Kakeya problem, both using the polynomial method, as well as the beautiful interplay between dense model theorems in additive combinatorics and the notions of boosting and hardcore sets from complexity theory. (For example, see the surveys \cite{Bibak13, Trevisan09, Lovett17} and the references therein.)

Here we consider a broad class of functions called high dimensional permutations. We uncover strong connections between the NOF communication complexity of these functions and several fundamental problems in additive combinatorics. 
Originally defined in \cite{li:lu}, a $(k-1)$-dimensional permutation is a function $f:[n]^k \to \{0,1\}$ such that
for every index $k \ge i \ge 1$ and for every choice of $x_1,\ldots,x_{i-1},x_{i+1},\ldots,x_k \in [n]$,
there is exactly one value of $x_i \in [n]$ for which $f(x_1,\ldots,x_{i-1},x_i,x_{i+1},\ldots,x_k)=1$.
This class of functions generalizes many well-studied functions in communication complexity. 
It is also closely related to many other functions such as the \Exactn function mentioned above.

We will show that many well-studied problems
in NOF complexity are not just related, but are in fact identical or nearly identical
to central problems in additive combinatorics. We feel that this mutual relation deserves much more attention, and that progress in this area is likely to greatly advance both domains.
Specifically we believe that the study of the communication complexity of high
dimensional permutations and related graph functions (defined in \cite{BDPW07}) is a worthwhile undertaking that will help us develop new lower bounds proof techniques for the notoriously
difficult NOF model.
Using these connections, we make modest progress on several upper and lower bounds in NOF communication
complexity.

\subsection{Our Contributions}

As mentioned above, our main goal and contribution is to unveil the strong relationships
between the NOF complexity of high dimensional permutation problems and central problems in
additive combinatorics and Ramsey theory.
Already the founding paper of Chandra, Furst and Lipton \cite{CFL83} makes a connection between the NOF complexity of
\Exactn and the areas of Ramsey theory and additive combinatorics.
A more general framework was introduced in \cite{beigel2006multiparty}: Given an abelian group $G$ and $T\in G$, 
the function \ExactT evaluates to $1$ on input $x_1,\ldots,x_k \in G$ iff $\sum_i x_i = T$ (this expression is well-defined since $G$ is abelian). The functions \ExactT are high dimensional permutations. (Note that this  holds as well for non-abelian $G$, though we need to specify the order at which $\prod_i x_i$ is evaluated).
Another strong connection 
is that the Hales-Jewett theorem, a cornerstone of Ramsey theory, can be interpreted in terms of communication complexity \cite{shraibman2017note}.

We establish a new and close connection between the
NOF communication complexity of high dimensional permutations and 
dense \rs\ graphs.
These graphs appear in various contexts in Combinatorics, Computer Science and Information Theory, thus
highlighting new connections between communication complexity and these
various problems. For example, an efficient deterministic
communication protocol for any permutation yields an efficient wiring scheme for shared
directional multi-channels. For more on this, see e.g., \cite{birk1993uniform} and \cite{alon2012nearly}.
In the classical, $k=2$ case, monochromatic submatrices play a key role in the theory. For higher $k$
this is replaced by the much more poorly understood monochromatic {\em cylinder intersection}. Naturally,
much of our work here revolves around these complicated objects. However, in certain simple cases we
are able to get a grip on the largest size of a cylinder intersection that contains only $1$-inputs of $f$.
As we show, in this case knowledge of this quantity essentially determines the NOF
communication complexity of $f$ (see more on this in the next section). The case in question is $k=3$ 
and the group $G = \mathbb{Z}_2^n$. As we show, the size of
the largest cylinder intersection containing only $1$-inputs of $f$ is the largest cardinality of a subset
$W\subseteq\mathbb{Z}_4^n$ such that for every three distinct members
$\mathbf{x},\mathbf{y},\mathbf{z}\in W$ there is an index $1\le i \le n$ for which $(x_i,y_i,z_i) \not\in X$, where
\[X = \{ (0,0,0), (1,1,1), (2,2,2), (3,3,3), (0,1,2), (1,0,3), (2,3,0), (3,2,1)\}.\]

This parameter may seem artificial, but in fact, this framework includes several important problems in combinatorics, for different choices of $X$. Thus, if we take 
$$X:=\{ (0,0,0), (1,1,1), (2,2,2), (3,3,3), (0,1,2) \},$$ 
then this becomes precisely the density Hales-Jewett problem, solved in \cite{furstenberg1991density}. Also, if $X$ is comprised of all triplets
$(a,b,c)\in \mathbb{Z}^3_4$ with $a + c = 2b$, we arrive at the cap-set problem for $\mathbb{Z}_4^n$ which was recently settled in breakthrough papers
by Croot, Lev, and Pach, and by Ellenberg and Gijswijt \cite{croot2016progression, ellenberg2017large}.
In the next subsection we list our new results that stem from these connections.

\medskip

\subsubsection{Upper Bounds}

We give a new algorithm for \Exactn as well as several other instances of \ExactT.
All previous upper bounds for these functions crucially depend on Behrend's famous construction \cite{behrend1946sets}
of a large set of integers with no 3-term arithmetic progressions.
This yields a large monochromatic cylinder intersection, and a simple probabilistic translation lemma then shows how to cover the whole space by monochromatic cylinder intersections, thus providing an efficient protocol. To show that this indeed yields a large monochromatic cylinder intersection, we appeal to the notion of {\em corner-free} sets \cite{ajtai1974sets, Solymosi2003}  which here, too, plays a key role.
We cannot realistically hope to improve the bounds by finding a construction better than Behrend's, in view of the many such failed attempts throughout the past 70 years (but note \cite{elkin2010improved}). However, Behrend's construction is actually more than we need. The solution of \ExactT only requires corner-free sets. That is, 3-term AP freeness implies corner-freeness, but we do not expect that the two concepts are equivalent.
We take a first step in this direction and give a new algorithm which is not dependent on 3-term AP freeness. We 
hope that this indicates a viable approach to improved protocols for the \Exactn function. 
We also obtain a nontrivial protocol for the $f^G_{3,T}$ problem for $G = \mathbb{Z}_2^n$.

Thus far the connections mentioned above are mostly applications of tools and results from Ramsey Theory and Additive Combinatorics such as Behrend's construction to questions in communication complexity, and hardly anything in the opposite direction. There are a few recent applications of communication complexity protocols to Ramsey Theory, such as
\cite{ada2015nof} and \cite{shraibman2017note}. In order to advance this line of research we need to develop algorithmic 
protocols for the underlying functions, and the protocols discussed above constitute a first step in this direction.

\subsubsection{Lower Bounds} 

We give a counting argument which shows that almost every $k$-dimensional permutation has communication complexity  $\Omega(\frac{\log n}{k})$. Clearly, up to the $\frac 1k$ factor, this is as high as this quantity can get.
Our proof relies on a recent lower bound of Keevash \cite{keevash2018existence} on the number of high-dimensional permutations. This method resembles the counting argument for graph functions of \cite{BDPW07}, which does not apply, though, to permutations.

Regarding bounds on explicit functions, we prove a weak upper bound on the size of a 1-monochromatic
cylinder intersection for any permutation (in fact our result holds for a wider family of functions that we call 
linjections). 
This bound uses a graph theoretic characterization of the communication complexity of permutations,
connecting it also to \rs\ graphs.
Not unexpectedly, our proof mirrors a similar result for \rs\ graphs: Solymosi \cite{Solymosi2003} showed that the multidimensional
Szemer\'{e}edi theorem follows from the triangle removal lemma.
We adapt Solymosi's proof to our context.
The main tools in the proof are thus the graph and the hypergraph removal lemmas.
The quantitative aspects of these lemmas are still poorly understood, so there is hope for possible future improvements here.

We note that previous results were limited to the \ExactT function for abelian groups with many factors, whereas
ours works for general permutations.

To emphasize the significance of the last point, consider the NOF complexity of following three classes of functions: (i) Permutations that come from Abelian groups, (ii) Those that come from general groups, (iii) Latin squares. We consider each such function up to an arbitrary renaming of rows and columns. The sizes of these three classes differ very substantially. For a given order $n$ the size of the relevant class is (i) $\exp(O(\sqrt{\log n}))$, (ii) At most \mbox{$\exp((\frac{2}{27}+o(1))\log^3 n)$,} and (iii) $((1+o(1))\frac{n}{e^2})^{n^2}$.

\medskip
For $k=3$ we can say more: The communication complexity of every $2$-dimensional  permutation $[n]^3 \to \{0,1\}$ is 
$$\Omega(\log \log \log n).$$

This extends the lower bound of \cite{beigel2006multiparty} from the realm of abelian groups
to all permutations.
The proof of the this lower bound
uses only elementary counting arguments, and is closely related to the result of \cite{graham2006monochromatic} on
monochromatic corners on the integer grid.

The above lower bound also implies a result of Meshulam that was derived toward the study of shared directional multi-channels.
Meshulam's result appears as Proposition~4.3 in \cite{alon2012nearly}, where further background can be found.

\subsection{Related Work}

The NOF model was introduced by Chandra, Furst and Lipton in \cite{CFL83}.
One of the functions they consider is \Exactn $:[n]^3\to\{0,1\}$. For $x,y,z \in [n]$, we let
\Exactn $(x,y,z)=1$ if and only if $x+y+z=n$.
Surprisingly, they proved that the communication complexity of this function is only $O(\sqrt{\log n})$, but their proof yields no explicit protocol. Although this function is not a permutation, it was observed in \cite{beigel2006multiparty},
the proofs go through as well if we work modulo $n$, in which case we deal with a $2$-dimensional permutation.
Thus far, this is the most efficient protocol found for any permutation. 

The protocol of \cite{CFL83} is based on Behrend's famous construction \cite{behrend1946sets}
of a large subset of $[n]$ with no three-term arithmetic progression. In addition, they prove an inexplicit
lower bound of $\omega_n(1)$ on the complexity of \Exactn. This is based on Gallai's result \cite[p.\ 38]{graham1990ramsey} that every finite coloring of a Euclidean space contains a monochromatic homoteth of every finite set in that space.

Beigel, Gasarch and Glenn \cite{beigel2006multiparty} have refined the study of \Exactn, and
considered the more general \ExactT problem. Here $G$ is an abelian group, $T$ is an element of $G$ and $k \ge 2$ an integer. In this scenario $k$ players need to decide whether $x_1+x_2+\cdots+x_k = T$, where the inputs $x_1,\ldots,x_k \in G$ are given to them in the NOF format. That paper showed that the communication complexity of $f^G_{3,T}$ is at least $\Omega(\log \log \log n)$ for every abelian group $G$ and any $T \in G$. 
For the case $G=\mathbb{Z}_n$, this follows as well from \cite{graham2006monochromatic} 
and a recent result of Shkerdov \cite{shkredov2009two} also yields a similar lower bound 
for every abelian group $G$.

For general $k\ge 3$ and for an abelian group $G$ that is the product of $t$ cyclic groups, it is shown in \cite{beigel2006multiparty} that the deterministic NOF complexity of \ExactT is $\omega_t(1)$.
The proof is by reduction to a lower bound from \cite{tessonapplication},
that is based on the Hales-Jewett Theorem (see \cite{graham1990ramsey}).
This lower bound is again not explicit, and yields only that the complexity is unbounded.

Note that \ExactT can be defined as well in non-abelian groups $G$. Namely, 
$f^G_{k,T}(x_1,\ldots, x_k)=1$ iff $x_1\cdot x_2\cdot\ldots\cdot x_k=T$, where now the order of multiplication matters.
Note also that the function \ExactT is a permutation for every group $G$, every
$T \in G$ and $k\ge 2$.

As mentioned above, \cite{BDPW07} studies graph functions and
give a nonexplicit strong separation between randomized and
deterministic NOF complexity. 
To be precise, this counting argument shows that most graph
functions $f: [n]^{k-1} \times [N] \to \{0,1\}$ with $N\cong \sqrt{\frac{n}{k}}$ have deterministic
communication complexity $\Omega(\log \frac{n}{k})$.
Still, even for $k=3$ it remains open to find {\em explicit} graph functions with
high deterministic communication complexity. Currently, the best lower bound on the deterministic communication complexity of a graph function $f: [n]^{k-1}\times  [N] \to \{0,1\}$
for $k\ge 3$ is $\Omega(\log \log n)$ proved in \cite{BDPW07} (using also results from \cite{BHK01}).
We note that for functions that are hard for randomized communication,
the discrepancy method has been used to establish NOF lower bounds (e.g., \cite{BNS92} and other papers.)
Unfortunately, the discrepancy
method cannot be used to the end of separating deterministic from randomized communication complexity,
since it applies to deterministic as well as to randomized communication complexity.

Lastly, we comment on the Hales-Jewett theorem, a pillar of Ramsey theory. It was previously
applied in the study of the combinatorial problems mentioned above. It turns out that this theorem has an equivalent formulation in the language of communication complexity \cite{shraibman2017note}, and is tightly coupled with the NOF multiparty communication complexity of high dimensional permutations.

\medskip

\noindent {\bf Organization.} In Section 2, we first give a brief survey of some of the key concepts and theorems  in additive combinatorics that we will be discussing throughout, and then define the NOF communication complexity model, graph, permutation functions and linjections, and some basic structural results about their complexity. In Section 3, we establish a connection between the NOF communication complexity of high dimensional permutations and \rs\ graphs, and give a special characterization for the case of $Z_2^n$ . Section 4 contains our new upper bounds, and Section 5 contains our new lower bounds. We conclude in Section 6 with many open problems and directions.

\section{Basics}

{\subsection{A Little Bit of Additive Combinatorics}

The basic tenet of this area is that the additive group $(\mathbb{Z},+)$ is not only an algebraic object, but carries as well a lot of combinatorial structure. 
The field is over a hundred years old and still teems with breakthroughs, fascinating open questions and lots of drama. 
Such a brief description cannot do justice to this area, and we refer the reader to the excellent book by Tao and Vu \cite{tao2006additive} 
and to several good online reviews. We start with van der Waerden's theorem
\cite{van1927beweis} from 1927: 
\begin{quotation}
{\it For every $r$ and $k$ and for every large 
enough $N$, if the elements of $[N]:=\{1,\ldots,N\}$ 
are colored by $r$ colors, then there must exist a length-$k$ monochromatic arithmetic progression}. 
\end{quotation}

This remarkable theorem suggests many further avenues of research. E.g., shouldn't {\it the most abundant color} 
necessarily contain long arithmetic progressions? This has led Erd\H{o}s 
and Tur\'an \cite{erdos1936some} to pose several 
questions some of which have already found spectacular solutions and some still open. Answering one of their questions, 
Szemer\'edi famously proved \cite{szemeredi1975sets} in 1975\footnote{For this discovery he was awarded the 2012 Abel Prize.} 
\begin{quotation}
{\it For every $\epsilon>0$ and every integer $k$ there is an $n_0$ such that if $N > n_0$, every subset of $[N]$ 
of cardinality $\ge \epsilon N$ must contain a $k$-term arithmetic progression}. 
\end{quotation}
Key to Szemer\'edi's proof is his 
Regularity Lemma (SzRL). Considerable research effort is dedicated to the challenging question of determining how $N$ depends 
on $k$ and $\epsilon$. Another crowning achievement in this area is Green and Tao's theorem \cite{green2008primes} of 2004:
\footnote{This was one of the achievements for which Tao won the 2006 Fields Medal.} 
\begin{quotation}
{\it There are arbitrarily 
long arithmetic progressions comprised only of prime numbers}. 
\end{quotation}

The special case of Szemer\'edi's Theorem dealing with {\em arithmetic triples} had already been proved by 
Roth \cite{roth1953certain} in 1953 using harmonic analysis.\footnote{One of several reasons for his 1958 Fields Medal.} 
Here, too, the quantitative side of the theorem is far from being resolved. The best {\em lower bound} that 
we have comes from a 1946 construction of Behrend \cite{behrend1946sets} - 
A subset of $[N]$ of density $\exp(-c\cdot\sqrt{\log N})$ that contains no arithmetic triple.

These theorems are in some well-defined sense one-dimensional. What about higher dimensions? 
A subset of vectors $\vec{a}_1,\ldots,\vec{a}_n \in [n]^d$ forms a 
{\em combinatorial line} if for every 
$d\ge i\ge 1$ either $\vec{a}_1,\ldots,\vec{a}_n$ all have the same $i$-th coordinate or their $i$-th coordinates
contain every value in $\{1,\ldots,n\}$. In 1963 Hales and Jewett (HJ) \cite{hales2009regularity} proved:
\begin{quotation}
{\it  For every $k$ and $n$ there is $d$ such that in every coloring of $[n]^d$ by $k$ colors there must be 
a monochromatic combinatorial line}.
\end{quotation}
The standard text of this field \cite{graham1990ramsey} places this theorem as the 
cornerstone of Ramsey Theory. The quantitative aspect of the HJ theorem is still poorly understood. The 
{\em density version} of the HJ theorem is related to the HJ theorem in the same way that Szemer\'edi's theorem is related to the van der Waerden's theorem. This was established in Furstenberg and Katznelson 
\cite{furstenberg1991density} in 1991 
using tools from {\em ergodic theory}. They also proved the {\em high-dimensional Szemer\'edi theorem}:
\begin{quotation}
{\it  Given a finite $S\subset \mathbb{Z}^d$ and $\delta>0$, if $n$ is large enough, then every subset of $[n]^d$ of cardinality $\delta n^d$ must contain a {\em homothet} of $S$, i.e., a subset of the form 
$a\cdot S + \vec b$ for some integer $a$ and $\vec b\in \mathbb{Z}^d$}. 
\end{quotation}
By nature, ergodic-theoretic 
methods provide {\em no} quantitative bounds, and only in 2012 did a {\em polymath} group \cite{polymath2009new} 
find combinatorial proofs for these theorems.

One of the earliest applications of SzRL is {\em the Ruzsa-Szemer\'edi (6,3)-theorem} \cite{ruzsa1978triple}:
\begin{quotation}
{\it A 3-uniform hypergraph on $n$ vertices in which no $6$ vertices contain $3$ edges has at most $o(n^2)$ edges}. 
\end{quotation}
The (6,3)-theorem is essentially equivalent to a weak version of the {\em triangle removal lemma} :
\begin{quotation}
{\it  For every $\epsilon > 0$ 
there is a $\delta > 0$ such that for large enough $n$, every order-$n$ graph with fewer than $\delta n^3$ 
triangles can be made triangle-free by removing $\epsilon n^2$ edges}. 
\end{quotation}
The quantitative aspects of these 
theorems remain poorly understood. It was observed by Solymosi \cite{Solymosi2003} that the Triangle Removal Lemma 
implies the {\em Corners Theorem} of Ajtai and Szemer\'edi \cite{ajtai1974sets}: 
\begin{quotation}
{\it For every $\delta>0$ and 
large enough $n$, every subset of $[n]\times[n]$ of cardinality $\delta n^2$ must contain three elements of 
the form $(a,b), (a+d, b), (a, b+d)$}.
\end{quotation}
This theorem easily yields Roth's theorem.

\subsection{NOF Communication Complexity}
\label{sec_cc-basics}

In the Number On the Forehead (NOF) multiparty communication complexity game, $k$ players collaborate to compute a function $f: X_1 \times \ldots \times X_k \rightarrow \{0,1\}$. Usually, $X_i = [n]$ for all $i \in [k]$, but we also consider occasionally a variation where the last player is exceptional and $X_k=[N]$ for some integer $N$ that is not necessarily equal to $n$.

For $(x_1,\ldots,x_k) \in X_1 \times \ldots \times X_k$, and
for each $i \in [k]$, player $i$ receives $x^{-i} \in X_1 \times \ldots \times X_{i-1}\times  X_{i+1} \times \ldots \times X_k$; that is, all but $x_i$. 
The players exchange bits according to an agreed-upon protocol, by writing
them on a publicly visible blackboard. The protocol specifies, for every possible
blackboard contents, whether or not the communication is ongoing. It shows the final output when the communication is over, and shows the next player to speak if the communication is still ongoing.
The protocol also specifies what each player writes as a function
of the blackboard contents and of the inputs seen by that player.
The {\it cost} of the protocol is the maximum number of bits written on the blackboard.

The {\it deterministic communication complexity} of $f$, $D_k(f)$, is the
minimum cost of a deterministic protocol for $f$ that always outputs the
correct answer.
A randomized protocol of cost $c$ is just a distribution over deterministic protocols
each of cost at most $c$.
For $0 \leq \epsilon < 1/2$,
the {\it randomized communication complexity} of $f$, $R_{k,\epsilon}(f)$,
is the minimum cost over randomized protocols such that
for every input, err with probability at most $\epsilon$
(over the distribution of deterministic protocols).

In the $k=2$ players case, the key combinatorial objects of study
are combinatorial rectangles: Every cost-$c$ communication protocol for 
$f: X_1 \times X_2 \rightarrow \{0,1\}$ partitions
$X_1 \times X_2$ into $2^c$ monochromatic combinatorial rectangles.
For $k$-party NOF communication, {\it cylinder intersections} take center stage:

\begin{definition}
A {\it cylinder in dimension} $i$ is a subset $S\subseteq\prod X_i$ such that if $(x_1,\ldots,x_k) \in S$, then $(x_1,\ldots,x_{i-1},x_i',x_{i+1},\ldots,x_k) \in S$  for all $x_i'$.
A {\it cylinder} {\it intersection} is a set of the forn $\cap_{i=1}^k T_i$, where $T_i$ is a cylinder in dimension $i$.
\end{definition}

The higher-dimensional counterpart of the above statement says that a cost-$c$ NOF communication protocol for $f :X_1 \times \ldots X_k \rightarrow \{0,1\}$ induces a partition of $X_1 \times \ldots \times X_k$ into $2^c$ {\it monochromatic cylinder intersections}. Here is how the argument starts: Suppose that player $i$ is the first to communicate. The input set of all other players is split in two: That set of values on which $P_i$ sends zero resp.\ one in the first transmission. This partitions the $k$-dimensional binary cube into two cylinders. The rest of the argument is routine.

\subsection{Graph Functions, Permutations and Linjections}

\begin{definition}
The line $L \subseteq [n]^k$, defined by a pair $(a,i)$, where $a \in [n]^{k-1}$, $i \in [k]$, is the set of vectors $v \in [n]^{k}$ such that $v^{-i}=a$ and $v_i$ is an arbitrary element in $[n]$.
\end{definition}

\begin{definition} 
A function $f: [n]^{k-1}\times [N] \to \{0,1\}$ is a {\em graph
function} if for every $(x_1,\ldots,x_{k-1})$ there is a unique $b \in [N]$ such that
$f(x_1,\ldots,x_{k-1},b)=1$. In other words, every line in the $k^{th}$ dimension, $L=(a,k)$, intersects $f^{-1}(1)$ in exactly one point.
\end{definition}

Associated with every graph function $f: [n]^{k-1} \times [N] \rightarrow \{0,1\}$ is a map $A(f):[n]^{k-1} \to [N]$,  where $A(f)(x_1,\ldots,x_{k-1})=y$ if and only if $f(x_1,\ldots,x_{k-1},y)=1$. We consider the two as one and the same object and freely switch back and forth between the two descriptions.

\begin{definition}
Let $f: [n]^{k-1} \times [N] \to \{0,1\}$ be a graph function. We denote by $\ind_k(f)$ the largest size of a cylinder intersection that is contained in $f^{-1}(1)$. In other words, the largest cardinality of $1$-monochromatic cylinder intersection with respect to $f$. Also, let  $\chr_k(f)$ be the least number of $1$-monochromatic cylinder intersections whose union is $f^{-1}(1)$. We omit the subscript $k$ when it is clear from context.
\end{definition}

Given a graph function $f$, the measure $\chr(f)$ corresponds to the nondeterministic NOF communication complexity
of $f$, since it is a covering of the 1's of $f$ by cylinder intersections \cite{KN97}.  
In general, the nondeterministic NOF communication complexity
of a Boolean function can be much smaller than the deterministic complexity -- in fact, for the set disjointness function,
nondeterministic complexity is logarithmic in the deterministic complexity (for constant $k$). However, graph functions are special;
the following lemma shows that for graph functions, the two notions basically coincide.
The proof is an adaptation of a proof from \cite{CFL83};
see also \cite{beigel2006multiparty, BDPW07} for similar arguments.

\begin{theorem}
\label{D_chr_gf}
For every graph function $f:[n]^{k-1} \times [N] \to \{0,1\}$,
$$
\log \chr_k(f) \le D_k(f) \le \lceil \log \chr_k(f) \rceil + k - 1.
$$
\end{theorem}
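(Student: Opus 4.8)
The plan is to prove the two inequalities separately; the left one is routine and the right one requires exhibiting a protocol.

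For the lower bound $\log \chr_k(f) \le D_k(f)$, I would invoke the structural fact recalled in Section~\ref{sec_cc-basics}: any deterministic protocol of cost $c$ partitions $X_1\times\cdots\times X_k$ into at most $2^c$ monochromatic cylinder intersections. Restricting to the cells on which the protocol outputs $1$ gives a cover of $f^{-1}(1)$ by at most $2^c$ $1$-monochromatic cylinder intersections, so $\chr_k(f)\le 2^{D_k(f)}$, and taking logarithms finishes this direction.

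For the upper bound, set $m=\chr_k(f)$ and fix $1$-monochromatic cylinder intersections $C_1,\ldots,C_m$ whose union is $f^{-1}(1)$, writing $C_j=\bigcap_{i=1}^k T_i^j$ with $T_i^j$ a cylinder in dimension $i$. The protocol has two phases. First, player $k$, who sees $(x_1,\ldots,x_{k-1})$, computes $b:=A(f)(x_1,\ldots,x_{k-1})$, the unique value in $[N]$ with $(x_1,\ldots,x_{k-1},b)\in f^{-1}(1)$; this point lies in some $C_j$, and player $k$ writes the least such index $j$ on the board using $\lceil\log m\rceil$ bits. Second, each player $i\in\{1,\ldots,k-1\}$ in turn writes one bit indicating whether the actual input $(x_1,\ldots,x_{k-1},x_k)$ lies in $T_i^j$; player $i$ can decide this, since membership in a cylinder in dimension $i$ does not depend on the $i$-th coordinate, which is precisely the coordinate player $i$ cannot see. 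The output is $1$ iff all $k-1$ reported bits are $1$, for a total cost of $\lceil\log m\rceil+k-1 = \lceil\log \chr_k(f)\rceil + k-1$.

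The remaining work is correctness, and this is where the graph-function hypothesis is essential. The one point that needs care — and the step I expect to need the most attention in a full write-up — is why nobody has to verify membership in $T_k^j$: since $(x_1,\ldots,x_{k-1},b)\in C_j\subseteq T_k^j$ and $T_k^j$ is a cylinder in dimension $k$, the point $(x_1,\ldots,x_{k-1},x_k)$, differing only in the $k$-th coordinate, lies in $T_k^j$ automatically. Granting this, if $f(x_1,\ldots,x_k)=1$ then $x_k=b$ by uniqueness, so $(x_1,\ldots,x_{k-1},x_k)\in C_j\subseteq T_i^j$ for all $i$, every bit is $1$, and the output is correct; conversely, if all reported bits are $1$ then $(x_1,\ldots,x_{k-1},x_k)\in T_i^j$ for every $i<k$, hence (with the free $T_k^j$ membership) $(x_1,\ldots,x_{k-1},x_k)\in C_j\subseteq f^{-1}(1)$, so $f=1$, which means the output is also correct whenever $f=0$. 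In short, the defining property of a graph function is exactly what lets player $k$ isolate the single candidate value $b$ and thereby push the verification onto the $k-1$ non-exceptional players, each checking only its own cylinder.
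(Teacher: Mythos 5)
Your proof is correct and follows the paper's approach: the lower bound is the same partition-into-monochromatic-cylinder-intersections fact, and the upper-bound protocol has the identical structure and cost --- player $k$ names (with $\lceil\log\chr_k(f)\rceil$ bits) a cover element containing the unique $1$-point on its line, and each of the other $k-1$ players contributes a single verification bit. Where you differ is in what that bit certifies, and hence in the correctness argument. The paper has player $i$ check whether \emph{some} substitution $x_i'$ places the input inside the announced cylinder intersection $C_b$ itself, and proves soundness via the star/center closure property: an accepting run on a $0$-input would exhibit a star inside $C_b$, forcing the center --- the true input --- into $C_b\subseteq f^{-1}(1)$, a contradiction (the same mechanism as in Lemma~\ref{lem_cyl_int_gf}). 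You instead fix, once and for all, a representation $C_j=\bigcap_{i=1}^k T_i^j$ and have player $i$ test membership of the true input in its own cylinder $T_i^j$ (a test that indeed ignores $x_i$), obtaining soundness directly from the decomposition together with your observation that membership in $T_k^j$ comes for free because the announced point and the true input differ only in coordinate $k$. Both arguments are valid and give the same bound; yours bypasses the star argument at the harmless cost of fixing cylinder representatives for each cover element in the agreed-upon protocol, whereas the paper's test depends only on the set $C_b$ and not on any chosen decomposition.
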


\begin{proof}
The lower bound follows from
the fact that every $c$-bit communication protocol for a function $f$ partitions the input space
into at most $2^c$ cylinder intersections that are monochromatic with respect to $f$ (see \cite{KN97} for more details).

To prove the upper bound $D_k(f) \le \lceil \log \chr_k(f) \rceil + k - 1$, fix
a covering of the 1's of $f$ by $\chr_k(f)$ many $1$-monochromatic cylinder intersections, and associate
with each cylinder intersection a number/name in $[\chr_k(f)]$.
Consider the following protocol for $f$:
On input $x_1,x_2,\ldots,x_{k-1},y$, the last player, $P_k$, computes
$y'$ such that $f(x_1,x_2,\ldots,x_{k-1},y')=1$ and writes the name, $b$,
of the cylinder intersection containing $(x_1,x_2,\ldots,x_{k-1},y')$ on the board.
Since $f$ is a graph function $y'$ is unique, and so is $b$.
Then for each $i=1,\ldots,k-1$, player $P_i$ checks whether there is a value $x_i'$ such that 
$(x_1,x_2,\ldots,x_{i-1},x_i',x_{i+1},..,y)$ is in the $b^{th}$ cylinder intersection;
if so, $P_i$ writes a $1$ on the board, and otherwise writes $0$.
(In other words, each player checks whether this cylinder intersection $b$ could be consistent
with their view of the input.) The protocol outputs $1$ if and only if
all players write $1$'s on the board.

The total number of bits communicated in this protocol is $\lceil \log \chr_k(f) \rceil + k - 1$. We turn to prove that the
protocol is correct. When $f(x_1,x_2,\ldots,x_{k-1},y)=1$, the protocol clearly outputs $1$.
In the other direction, suppose that $f(x_1,\ldots,x_{k-1},y)=0$, but the protocol outputs 1.
Let $y'$ be the unique value such that $f(x_1,\ldots,x_{k-1},y')=1$ and suppose that
$(x_1,\ldots,x_{k-1},y')$ is in cylinder intersection $b$ so $P_k$ writes $b$ on the
blackboard, and then Players $P_1,\ldots,P_{k-1}$ all write $1$'s.
Then there exist $x_1',x_2',\ldots,y'$ for which
\[ \{(x_1',x_2,\ldots,x_{k-1},y),(x_1,x_2',\ldots,x_{k-1},y), \ldots , (x_1,x_2,\ldots,x_{k-1},y')\} \]
are all in the $b^{th}$ cylinder intersection, and therefore they are all in $f^{-1}(1)$.
By definition of a cylinder intersection, this implies that $(x_1,\ldots,x_{k-1},y)$ is
also in $b$, and therefore it is also in $f^{-1}(1)$.
\end{proof}

A communication protocol in which players write only one message on the board, of arbitrary length is called a {\it one-way} protocol. Note that this applies the protocol in the above proof. The permission to send messages of arbitrary length may make one-way protocols much more powerful than standard protocols \cite{nisan1991rounds, BHK01}. However for graph functions, one-way protocols and regular protocols are equally powerful:

\begin{corollary}
For every graph function $f:[n]^{k-1} \times [N] \to \{0,1\}$ there holds\\
$D_k(f) \le D^1_k(f) \le D_k(f) + k$
where $D^1_k(f)$ is the one-way communication complexity of $f$.
\end{corollary}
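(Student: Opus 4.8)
The plan is to derive both inequalities directly from Theorem~\ref{D_chr_gf} together with the observation, made just before the corollary, that the protocol constructed in the proof of that theorem is one-way. The left inequality $D_k(f) \le D^1_k(f)$ is immediate: a one-way protocol is in particular a protocol in the general model (a player writing an arbitrary-length message simply writes its bits one after another on the blackboard), so the minimum cost over all protocols is at most the minimum cost over one-way protocols.

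For the right inequality I would reuse the upper-bound construction inside the proof of Theorem~\ref{D_chr_gf}. That protocol is one-way: the last player $P_k$ writes the name $b \in [\chr_k(f)]$ of the $1$-monochromatic cylinder intersection containing $(x_1,\ldots,x_{k-1},y')$, and then each of $P_1,\ldots,P_{k-1}$ writes exactly one consistency bit; no player writes twice. Hence $D^1_k(f) \le \lceil \log \chr_k(f)\rceil + k - 1$. Combining this with the lower bound $\log \chr_k(f) \le D_k(f)$ from the same theorem, and using that $D_k(f)$ is a nonnegative integer so that $\lceil \log \chr_k(f)\rceil \le \lceil D_k(f)\rceil = D_k(f)$, we obtain $D^1_k(f) \le D_k(f) + k - 1 \le D_k(f) + k$.

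There is no genuine obstacle here; this is a short corollary of Theorem~\ref{D_chr_gf}. The only points that need a line of care are (i) confirming that arbitrary-length one-way messages are legitimate transmissions in the model as defined in Section~\ref{sec_cc-basics}, so that the first inequality holds, and (ii) handling the ceiling and the integrality of $D_k(f)$ when chaining the two bounds of Theorem~\ref{D_chr_gf}. In fact the argument yields the slightly sharper statement $D^1_k(f) \le D_k(f) + k - 1$, which one could record instead.
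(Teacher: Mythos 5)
Your proposal is correct and follows exactly the route the paper intends: the corollary is derived from Theorem~\ref{D_chr_gf} together with the observation (made in the text right before the corollary) that the covering protocol constructed there is one-way, while the reverse inequality $D_k(f)\le D^1_k(f)$ holds because one-way protocols are a special case of general protocols. Your handling of the ceiling and the integrality of $D_k(f)$ even gives the slightly sharper bound $D^1_k(f)\le D_k(f)+k-1$, which is consistent with (and stronger than) the stated $D_k(f)+k$.
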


Implicit in the above proofs is the fact that for
graph functions, monochromatic cylinder intersections can be nicely characterized by forbidden (dual) objects called {\it stars}, which we define next.
We will see in the next section that stars are very closely connected
to corners (and higher dimensional generalizations) in Ramsey theory.

\begin{definition}
A {\em star} $Star(\mathbf{x},\mathbf{x}')$ is a subset of
$[n]^{k-1}\times [N]$ of the form
\[\{(x'_1, x_2, \ldots, x_k), (x_1,x'_2, \ldots, x_k), \ldots, (x_1,x_2,\ldots, x'_k)\},\]
where $x_i \ne x'_i$ for each $i$. We refer to $\mathbf{x}=(x_1, x_2, \ldots, x_k)$ as the star's {\em center},
and note that the center does {\em not} belong to the star. 
\end{definition}

\begin{lemma}
\label{lem_cyl_int_gf}
Let $f: [n]^{k-1}\times [N] \to \{0,1\}$ be a graph function, and let $S \subseteq f^{-1}(1)$. Then $S$ is a ($1$-monochromatic) cylinder intersection with respect to $f$ if and only if it does not contain a star.
\end{lemma}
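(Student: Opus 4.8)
The plan is to prove the two implications separately; the forward direction is immediate, and the converse is where the (slight) work lies.

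For the forward direction, suppose $S=\bigcap_{i=1}^k T_i$ with each $T_i$ a cylinder in dimension $i$, and assume toward a contradiction that $S$ contains a star $Star(\mathbf{x},\mathbf{x}')$ with center $\mathbf{x}=(x_1,\dots,x_k)$. Writing the star's points as $p_i=(x_1,\dots,x_{i-1},x_i',x_{i+1},\dots,x_k)$, we have $p_i\in S\subseteq T_i$, and since $T_i$ is a cylinder in dimension $i$ while $p_i$ and $\mathbf{x}$ differ only in coordinate $i$, it follows that $\mathbf{x}\in T_i$. Doing this for every $i$ gives $\mathbf{x}\in\bigcap_i T_i=S\subseteq f^{-1}(1)$. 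But $p_k=(x_1,\dots,x_{k-1},x_k')$ also lies in $f^{-1}(1)$ and $x_k\neq x_k'$, contradicting the defining property of a graph function that a line in the last dimension meets $f^{-1}(1)$ in exactly one point. Hence a $1$-monochromatic cylinder intersection contains no star.

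For the converse, assume $S\subseteq f^{-1}(1)$ contains no star. I would take the natural candidate cylinders: let $T_i$ be the ``cylindrification'' of $S$ in dimension $i$, i.e., $y=(y_1,\dots,y_k)\in T_i$ iff $(y_1,\dots,y_{i-1},y_i',y_{i+1},\dots,y_k)\in S$ for some $y_i'$. Each $T_i$ is plainly a cylinder in dimension $i$, and $S\subseteq T_i$ (take $y_i'=y_i$), so $S\subseteq\bigcap_i T_i$; the content is the reverse inclusion. Given $z\in\bigcap_i T_i$, if $z\notin S$ then for each $i$ every witness $z_i'$ with $(z_1,\dots,z_i',\dots,z_k)\in S$ must satisfy $z_i'\neq z_i$ (otherwise $z$ itself would already be in $S$); choosing one such $z_i'$ for each $i$, the resulting $k$ points are exactly the star $Star(z,(z_1',\dots,z_k'))$ and all lie in $S$, contradicting star-freeness. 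So $\bigcap_i T_i=S$, a cylinder intersection contained in $f^{-1}(1)$, hence $1$-monochromatic.

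The only delicate point — the ``main obstacle'', though it is minor — is the bookkeeping in the converse: one must observe that the assumption $z\notin S$ forces the chosen line-witnesses to differ from $z$ in the relevant coordinate, so that they genuinely assemble into a star with all $k$ coordinates changed rather than into a degenerate subset. It is also worth noting that the graph-function hypothesis is used only in the forward direction, to convert ``the center lies in $S$'' into a contradiction; the converse uses nothing beyond $S\subseteq f^{-1}(1)$ and would work for an arbitrary Boolean $f$.
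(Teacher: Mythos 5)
Your proof is correct and takes essentially the same route as the paper's: the paper cites the standard fact that a set is a cylinder intersection iff every star it contains has its center in the set, and then uses the graph-function property to rule out the center, exactly as you do. The only difference is that you prove that cited fact yourself (center-in-$S$ from the cylinder definition, and star-freeness implies cylinder intersection via the explicit cylindrification), which the paper delegates to a reference.
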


\begin{proof}
It is not too hard to see \cite{KN97} that a subset $ S \subseteq [n]^{k-1} \times [N]$ is a cylinder intersection
if and only if for every star that is contained in $S$,
the center is also in $S$.
Thus for any $S$, if $S$ does not contain a star, then $S$ is a cylinder intersection.
For the other direction, let $S \subseteq f^{-1}(1)$, and
suppose that $S$ contains a star.
Then by definition of a graph function, $f(x_1,x_2,\ldots,x_k)=0$, where
$(x_1, x_2, \ldots, x_k)$ is the center of the star. Therefore
$S$ does not contain the center of star (so $S$ is not a cylinder
intersection.)
\end{proof}

Next we define high dimensional permutations and linjections.
 
\begin{definition}
A $(k-1)$-dimensional permutation of order-$n$ is a map $f:[n]^k \rightarrow \{0,1\}$ with the property
that for every line $L=(a,i)$ in $[n]^k$, $\left| L \cap f^{-1}(1) \right| =1$. 
\end{definition}

In other words, $f$ is a permutation function if and only if every line contains a unique 1 entry.
This property is easily seen to be equivalent to the property that for every choice of
$x_1,\ldots,x_{i-1},x_{i+1},\ldots,x_k \in [n]$, there is exactly one value,
$A_i(x^{-i})$ for $x_i \in [n]$ such that 
$$f(x_1,\ldots,x_{i-1},A_i(x^{-i}),x_{i+1},\ldots,x_k)=1.$$

\begin{example} 
\label{a-star-latin}
For the sake of gaining better intuition we often consider the important special case $k=3$. This is insightful, since $2$-dimensional permutations $f:[n]^3 \to \{0,1\}$ are synonymous with Latin squares. In this case $A(f)$ is an $n\times n$ matrix with entries in $[n]$ where every row and column contains each element in $[n]$ exactly once. Here we see an elementary but important connection with additive combinatorics; stars coincide with the well-studied notion of {\it corners}
\cite{ajtai1974sets, Solymosi2003}. 
A  star is a triplet of entries in $f^{-1}(1)$,  $(x,y,z'), (x',y,z), (x,y',z)$,
which corresponds to the "corner" or "$A$-star",  $(x,y),(x',y),(x,y')$, 
where $A(x',y) $ and $A(x,y')$ have the same value ($z$), but $A(x,y)$ has a different value ($z'$).
\end{example}

\begin{example}  
High dimensional permutations generalize the family of
functions \ExactT for abelian groups $G$.
For this communication problem, each player receives (on his/her forehead) an element $x_i \in G$,
and they want to decide whether or not $x_1 + \ldots + x_k =T$, that is, whether the sum of the
elements is exactly $T$.
\end{example}

We can further generalize the notion of a permutation function as follows. 

\begin{definition}
A {\em linjection} is a graph function $f: [n]^{k-1}\times [N] \to \{0,1\}$ with $N \ge n$
such that $|f^{-1}(1)| = n^{k-1}$ and every line contains {\bf at most} one point at which $f=1$.
A function $f$ is a linjection if and only if the restriction of $A(f)$ to any line is an injection.
\end{definition}

Linjections are graph functions where $N \geq n$, (with permutation functions
corresponding to $n=N$), but not vice versa.

There are very simple graph functions of bounded communication complexity. As we will see, this is not the case for permutations and linjections. Determining the least possible communication complexity of a linjection in certain dimensions
is an interesting and challenging problem.  Henceforth, we use and study the following two notions.

\begin{definition}
Define $\ind_k(n,N) = \max_f \ind_k(f)$, and $\chr_k(n,N) = \min_f \chr_k(f)$, both taken over all linjections
$f: [n]^{k-1} \times [N] \to \{0,1\}$. 
\end{definition}

Note that $\chr_k(n,N) \ge n^{k-1}/\ind_k(n,N)$.

\section{High Dimensional Permutations and Additive Combinatorics}

\subsection{A Graph-theoretic Characterization}
\label{sec:graph_theoretic_characterization}

In this section we give a new characterization of
$\alpha_k$ which will turn out to be a variant of the maximum density of \rs\ graphs.
We start with the case $k=3$.

Recall that we can view a linjection $f: [n]^2 \times [N] \to \{0,1\}$ as an $n \times n$ matrix, $A=A(f)$ with entries from $[N]$. Alternatively we view it as a tripartite graph $G(A)$ with parts $R=[n], C=[n]$ and $W\subseteq [N]$. Its edge set is defined as follows: for every triple $(x,y,b) \in f^{-1}(1)$,  we add the triangle $(x,y),(y,b),(x,b)$, $x \in R$, $y \in C$, $b \in W$ to $G(A)$. In particular, $R\cup C$
span  a complete bipartite subgraph of $G(A)$ and $(i,b)$,  $i\in R$, $b\in W$ is an edge iff there is a $b$ entry in row $i$ of $A$, likewise for columns.

\renewcommand{\rho}{\overline{\ind}}

Let us consider the triangles  $<x,y,b>$, $x \in R, y \in C, b \in W$, in $G(A)$.
A  triangle $<x,y,b>$  in $G$ is  {\em trivial} if $A(x,y)=b$. However, there can also
be nontrivial (induced) triangles in $G$, which correspond to centers of stars.
We define a $G$-{\em star} to be a
triple of triangles in $G$  of the form \[<x,y,b'>, <x',y,b>, <x,y',b>.\] The point is that
while these (trivial) triangles are edge-disjoint, their union contains the additional induced triangle $<x,y,b>$.
Define $\rho(G)$ to be the largest cardinality of
a family of edge-disjoint triangles that contains no $G$-star. In other words, a family of
edge-disjoint triangles the union of which contains no additional triangle.

Let $\rho(n,N) = \max_G \rho(G)$ where the maximum is over subgraphs of $K_{n,n,N}$. Then:
\begin{theorem}
\label{equiv-rho-kappa}
For every two integers $n,N >0$,  if $n\le N$  then $\ind_3(n,N) \le \rho(n,N)$. If $N \ge 2n-1$,
then $\ind_3(n,N) = \rho(n,N)$.
\end{theorem}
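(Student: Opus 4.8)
The plan is to route everything through a dictionary between subsets of $f^{-1}(1)$ for a linjection $f:[n]^2\times[N]\to\{0,1\}$ and families of triangles in the associated graph $G=G(A)$, where $A=A(f)$. By Lemma~\ref{lem_cyl_int_gf}, $\ind_3(f)$ is the largest size of a \emph{star-free} subset $S\subseteq f^{-1}(1)$. I would first record three facts: the map $(x,y,b)\mapsto\langle x,y,b\rangle$ sends $f^{-1}(1)$ bijectively onto the set of \emph{trivial} triangles of $G$ (those with $A(x,y)=b$); two trivial triangles that share an edge must coincide, since $A$ has distinct entries along every row and every column; and, crucially, $S$ contains a star with center $(x,y,b)$ if and only if its image contains the $G$-star $\langle x',y,b\rangle,\langle x,y',b\rangle,\langle x,y,b'\rangle$, whose union carries the extra triangle $\langle x,y,b\rangle$ (which is nontrivial, since $A(x,y)=b'\ne b$). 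Both directions of this last equivalence are the same short computation, using that in a star the center differs from each of its three points in exactly one coordinate.

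Granting the dictionary, the upper bound $\ind_3(n,N)\le\rho(n,N)$ for $n\le N$ is immediate: a maximum star-free $S\subseteq f^{-1}(1)$ maps to a family of $|S|$ edge-disjoint triangles of $G$ containing no $G$-star, so $\ind_3(f)=|S|\le\rho(G)\le\rho(n,N)$ because $G$ is a subgraph of $K_{n,n,N}$; now maximize over all linjections $f$.

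For the reverse inequality when $N\ge 2n-1$, I would run the dictionary backwards. Start from an arbitrary subgraph $G\subseteq K_{n,n,N}$ and an arbitrary family $\mathcal{T}$ of edge-disjoint triangles of $G$ with no $G$-star. Setting $A(x,y):=b$ whenever $\langle x,y,b\rangle\in\mathcal{T}$ defines a well-defined \emph{partial} $n\times n$ array over $[N]$ (by edge-disjointness on $R$--$C$ edges), and edge-disjointness on $R$--$W$ and $C$--$W$ edges says exactly that no symbol repeats in a row or in a column; that is, $A$ is a partial proper edge-coloring of $K_{n,n}$. Complete it greedily, one empty cell at a time: a cell meets $n-1$ other cells in its row and $n-1$ in its column, so at most $2(n-1)<2n-1\le N$ symbols are forbidden and some symbol of $[N]$ is available. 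The completed array $A'$ defines a linjection $f'$ (distinct entries per row and column give a graph function with $|f'^{-1}(1)|=n^2$ and at most one $1$ per line). I then claim $S:=\{(x,y,b):\langle x,y,b\rangle\in\mathcal{T}\}\subseteq f'^{-1}(1)$ is star-free: a star in $S$ with center $(x,y,b)$ would place $\langle x',y,b\rangle,\langle x,y',b\rangle,\langle x,y,b'\rangle$ in $\mathcal{T}$, and then the three edges $(x,y),(x,b),(y,b)$ of the phantom triangle $\langle x,y,b\rangle$ are supplied by these three triangles of $\mathcal{T}$, so $\langle x,y,b\rangle$ is a triangle of $G$ outside $\mathcal{T}$ (it shares an edge with, but differs from, each of them) --- a $G$-star in $\mathcal{T}$, contradiction. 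Hence $\ind_3(f')\ge|S|=|\mathcal{T}|$, and maximizing over $G$ and $\mathcal{T}$ gives $\rho(n,N)\le\ind_3(n,N)$, completing the equality.

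The only genuinely non-formal step is the array completion, and that is precisely where $N\ge 2n-1$ is used, through the greedy count that leaves at most $2(n-1)$ forbidden symbols per cell. The subtle point I would be most careful about is that star-freeness of $S$ for the \emph{completed} $f'$ must be deduced purely from $G$-star-freeness of $\mathcal{T}$, even though $G(A')$ has many more edges than $G$; this works only because every edge of the phantom triangle is already an edge of $G$, being an edge of one of the three triangles of the hypothetical star. With fewer than $2n-1$ symbols the completion can fail, which is why only the inequality $\ind_3(n,N)\le\rho(n,N)$ is asserted in the range $n\le N<2n-1$.
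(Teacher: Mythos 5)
Your proposal is correct and follows essentially the same route as the paper: the dictionary between star-free subsets of $f^{-1}(1)$ and $G$-star-free families of edge-disjoint trivial triangles for one direction, and, for the converse when $N\ge 2n-1$, the partial definition of $A$ from the triangle family followed by a greedy completion using that at most $2(n-1)$ symbols are forbidden per cell. You merely spell out some details the paper leaves implicit (edge-disjointness of trivial triangles, well-definedness of the partial array, and that star-freeness of $S$ survives the completion).
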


\begin{proof}
We show first that $\ind_3(n,N) \le \rho(n,N)$. Let $f: [n] \times [n] \times  [N] \to \{0,1\}$  be a linjection and let
$S \subseteq [n]\times[n] \times [N]$ be a star-free subset of $f^{-1}(1)$. We prove the claim by constructing a $G$-star-free family $T$ of $|S|$ edge-disjoint triangles in $G=G(A(f))$. Let
\[T = \{<x,y,b>|(x,y,b) \in S\}.\]
The claim follows, since stars $\{(x',y,b),(x,y',b),(x,y,b)\}$  correspond to $G$-stars in $T$.
Next we prove the reverse inequality $\ind_3(n,N) \ge \rho(n,N)$ when $N \ge 2n-1$.

Given a $G$-star-free family $T$ of edge-disjoint triangles
in a subgraph $G$ of $K_{n,n,N}$, we find a
linjection $A: [n]\times [n] \to [N]$ that contains an $A$-star-free
subset $S \subset [n]^2$ of size $|T|$.
In the proof we actually first construct $S$ and only then proceed to define $A$ in full.

We define $S$ to be the projection of $T$ to its first two coordinates. Namely,
\[
S = \{(x,y) ~|~  <x,y,b> \in T\text{~for some~}b\}.
\]
To define $A$, we first let $A(x,y)=b$ for every $<x,y,b>\; \in T$.

Since $T$ is $G$-star-free, it follows that $S$ is $A$-star-free. What is missing is that $A$
is only partially defined. We show that when $N \ge 2n-1$ this partial definition can be extended to a linjection.
Since the triangles in $T$ are edge-disjoint it follows that in the partially defined $A$, no value
appears more than once in any row or column.
It remains to define $A$ on all the entries outside of $S$ and maintain this property.
Indeed this can be done entry by entry. At worst there are $2n-2$ values that are forbidden for the
entry of $A$ that we attempt to define next, and therefore there is always an acceptable choice.
\end{proof}

\medskip

\noindent{\bf General $k$.} The construction for general $k$ is a natural extension of the case $k=3$.
We associate with every linjection $A: [n]^{k-1} \to [N]$ a $k$-partite $(k-1)$-uniform hypergraph $H(A)$.
The parts of the vertex set are denoted $Q_1,\ldots,Q_{k-1}$ and $W$. Each $Q_i$ is a copy of $[n]$ and, as above,
$W$ is the range of $A$. There is a complete $(k-1)$-partite hypergraph on the \mbox{$k-1$} parts $Q_1,\ldots,Q_{k-1}$. Given
\mbox{$x_1\in Q_1,\ldots,x_{i-1}\in Q_{i-1}, x_{i+1}\in Q_{i+1},\ldots,x_{k-1}\in Q_{k-1}$} and $w\in W$, we put
the hyperedge $x_1,\ldots,x_{i-1}, x_{i+1},\ldots,x_{k-1}, w$ in $H(A)$ iff there is a (necessarily unique) $x_i^{\ast}\in [n]$
for which $A(x_1,\ldots,x_{i-1},x_i^{\ast}, x_{i+1},\ldots,x_{k-1})= w$.

We proceed to investigate {\em cliques} in $H(A)$, i.e., sets of $k$ vertices, every $k-1$ of which form an edge.
For $k=3$, we distinguished between those triangles in $G(A)$ that correspond to an entry in $[n]^2$ and those that form a star, and a similar distinction applies for general $k$.

It is easy to see that if $A(x_1,\ldots,x_{k-1})= w$, then
$x_1,\ldots,x_k,w$ from a clique. Such a clique is considered {\em trivial}.
In contrast, $x_1,\ldots,x_{k-1},w$ is a nontrivial clique iff for every $i$ there exists an $x_i'\ne x_i$ such that
\mbox{$A(x_1,\ldots,x_{i-1},x_i', x_{i+1},\ldots,x_{k-1})= w$}.

As above, we define for $H=H(A)$ the parameter
$\rho_k(H)$. It is the largest size of a family $K$ of cliques in $H$
such that: (i) No two share a hyperedge, and (ii) The hypergraph comprised of all cliques in $K$ contains no additional cliques.
Let $\rho_k(n,N) = \max_H \rho_k(H)$ over all $k$-partite $(k-1)$-uniform hypergraphs $H$.
Then
\begin{theorem}
\label{equiv-rho-kappa-general-k}
For every two integers $n\le N$ there holds $\ind_k(n,N) \le \rho_k(n,N)$, and if $N > (k-1)(n-1)$
then $\ind_k(n,N) = \rho_k(n,N)$.
\end{theorem}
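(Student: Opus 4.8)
The plan is to mimic the $k=3$ proof of Theorem~\ref{equiv-rho-kappa} essentially verbatim, replacing triangles by cliques, graphs by $(k-1)$-uniform hypergraphs, and the "at most $2n-2$ forbidden values" count by a "$(k-1)(n-1)$ forbidden values" count. The statement has two halves: the general inequality $\ind_k(n,N)\le\rho_k(n,N)$ for all $n\le N$, and the reverse inequality (hence equality) when $N>(k-1)(n-1)$.

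\textbf{The easy direction.} First I would prove $\ind_k(n,N)\le\rho_k(n,N)$. Let $f:[n]^{k-1}\times[N]\to\{0,1\}$ be a linjection with associated map $A=A(f)$, and let $S\subseteq f^{-1}(1)$ be a star-free subset of maximum size $\ind_k(f)$. Using Lemma~\ref{lem_cyl_int_gf} this captures the largest $1$-monochromatic cylinder intersection. For each point $(x_1,\ldots,x_{k-1},w)\in S$ — equivalently, each $(x_1,\ldots,x_{k-1})$ with $A(x_1,\ldots,x_{k-1})=w$ — take the corresponding trivial clique $\{x_1,\ldots,x_{k-1},w\}$ in $H(A)$, and let $K$ be the collection of these cliques. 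Since $f$ is a linjection, along any line $A$ is injective, so these trivial cliques are pairwise hyperedge-disjoint (two cliques sharing a hyperedge would force two equal $A$-values on a line). Finally I must check that the union of the cliques in $K$ contains no extra clique: an extra (nontrivial) clique $\{x_1,\ldots,x_{k-1},w\}$ in $\bigcup K$ would, by the definition of nontrivial cliques, correspond to exactly a star contained in $S$ with center the $0$-point $(x_1,\ldots,x_{k-1},w)$, contradicting star-freeness of $S$. Hence $|K|=|S|=\ind_k(f)\le\rho_k(H(A))\le\rho_k(n,N)$, and taking the max over $f$ gives the inequality.

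\textbf{The reverse direction.} Now assume $N>(k-1)(n-1)$ and let $H$ be a $k$-partite $(k-1)$-uniform hypergraph with parts $Q_1,\ldots,Q_{k-1},W$ (each $Q_i=[n]$, $W\subseteq[N]$) achieving $\rho_k(H)=\rho_k(n,N)$, witnessed by a family $K$ of hyperedge-disjoint cliques whose union contains no further clique. As in the $k=3$ case I first build the star-free set and only afterwards complete it to a linjection. Define $S=\{(x_1,\ldots,x_{k-1}) : \{x_1,\ldots,x_{k-1},w\}\in K \text{ for some } w\in W\}$ and partially define $A$ by $A(x_1,\ldots,x_{k-1})=w$ for each such clique. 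This is well-defined and injective along every line: two cliques of $K$ assigning the same value on a common line would share a hyperedge (all but one coordinate plus $w$), contradicting hyperedge-disjointness. The no-extra-clique condition translates exactly into $S$ being star-free (an $A$-star in $S$ is a nontrivial clique in $\bigcup K$). It remains to extend the partial $A$ to a full linjection $[n]^{k-1}\to[N]$. I would do this greedily, entry by entry over the undefined cells: when assigning $A$ to a new cell $(x_1,\ldots,x_{k-1})$, the values already used on the $k-1$ lines through this cell number at most $(k-1)(n-1)$ (at most $n-1$ previously-set entries on each of the $k-1$ lines), so since $N>(k-1)(n-1)$ there is always a legal choice that keeps $A$ injective on every line. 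The resulting $A$ defines a linjection $f$ with $S$ a star-free subset of $f^{-1}(1)$, so $\ind_k(n,N)\ge|S|=|K|=\rho_k(n,N)$, completing the proof.

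\textbf{Main obstacle.} None of the steps is deep, but the step needing the most care is verifying that the two notions of "obstruction" — a star in $S$ versus an extra clique in $\bigcup K$ — really coincide under the correspondence, since the nontrivial-clique condition in $H(A)$ quantifies over all $k-1$ coordinate directions simultaneously, and one must be sure the unique $x_i^\ast$ in the definition of the hyperedges of $H(A)$ lines up with the coordinates of the star's center; the bookkeeping here is the analogue of the "$A(x',y)$ and $A(x,y')$ have the same value but $A(x,y)$ differs" picture from Example~\ref{a-star-latin}, now in $k-1$ directions. The greedy completion is routine given the clean bound $(k-1)(n-1)$, which is exactly why the hypothesis $N>(k-1)(n-1)$ appears.
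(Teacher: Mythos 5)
Your proposal is correct and follows essentially the same route as the paper: map star-free subsets of a linjection to hyperedge-disjoint trivial cliques (and back), identify the obstruction in both settings via the $H$-star/star correspondence, and complete the partial array greedily using the bound of at most $(k-1)(n-1)$ forbidden values per cell, which is exactly why the hypothesis $N>(k-1)(n-1)$ appears. In fact you spell out the greedy extension step that the paper explicitly omits as "similar to the proof of Theorem~\ref{equiv-rho-kappa}," so nothing is missing.
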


\begin{proof}
It is not hard to check that a family of hyperedge-disjoint cliques induces an additional clique if and only if
it contains $k$ cliques of the form:
$$
<x_1,\ldots,x_{k-1},b'>,<x_1',\ldots,x_{k-1},b>, \ldots,<x_1,\ldots,x_{k-1}',b>.
$$
We call such a set of cliques an {\em $H$-star}.

The proof is similar to the proof of Theorem~\ref{equiv-rho-kappa}. 

Let $f: [n]^{k-1} \times [N] \to \{0,1\}$ be a linjection, and let $S$ be a star-free subset of $[n]^{k-1}\times [N]$.
Define the following family of (trivial) cliques in $H=H(A)$:
$$
K = \{<x_1,\ldots,x_{k-1},b> ~|~ (x_1,\ldots,x_{k-1},b) \in S\}.
$$
Since the cliques in $K$ are trivial, they are hyperedge-disjoint. Also, since
$S$ is star-free, it follows that $K$ contains no $H$-stars, as $H$-stars directly correspond to stars
in $[n]^{k-1} \times [N]$.

For the reverse inequality, given a family $K$ of edge-disjoint cliques with no $H$-stars
in the complete $k$-partite $(k-1)$-uniform hypergraph, define similarly
\[
S = \{(x_1,\ldots,x_{k-1},b) ~|~  <x_1,\ldots,x_{k-1},b> \in K\}.
\]
and let $f(x_1,\ldots,x_{k-1},b)=1$ for every $(x_1,\ldots,x_{k-1},b)\; \in S$.

Since the cliques in $K$ are pairwise edge-disjoint, $S$ contains at most one entry
in every line, and $S$ is star-free since $K$ does not contain a $H$-star.

It remains to show that $f$ can be extended to a linjection when $N > (k-1)(n-1)$.
We omit this argument which is similar to the proof of
Theorem~\ref{equiv-rho-kappa} and only note that it is better to formulate it in terms of $A=A(f)$.
\end{proof}

The proofs of Theorem~\ref{equiv-rho-kappa-general-k} and~\ref{equiv-rho-kappa} make it interesting to better
understand the relationship between $\rho_k(n,N)$ and $\ind_k(n,N)$. As the proofs show, $\rho_k(n,N)$ is the
largest cardinality of a star-free subset of $[n]^{k-1}\times[N]$ that meets every line in $[n]^{k-1}\times[N]$ at most once.
To qualify for $\ind_k(n,N)$ this subset must, in addition, be extendable to a linjection,
so clearly $\rho_k(n,N)\ge\ind_k(n,N)$. We wonder whether this additional requirement creates a substantial difference between the two
parameters. Specifically, how are $\rho_k(n,N)$ and $\ind_k(n,N)$ related in the range $n\le N \le (k-1)(n-1)$?
These two parameters need not be equal in this range, since $\ind_3(4,4)=8$ and $\rho_3(4,4)=9$, as we show
in Section 4.3.

\medskip

\noindent{\bf Connection to \rs\ Graphs.} A graph is called an $(r,t)$-\rs\ graph if its edge set can be partitioned into $t$
edge-disjoint induced matchings, each of size $r$. These graphs were introduced in 1978 and
have been extensively studied since then. Of particular interest are
dense \rs\ graphs, with $r$ and $t$ large, in terms of $n$, the number of vertices. Such graphs have applications in
Combinatorics, Complexity theory and Information theory.
Also, there are several known interesting constructions, relying on different techniques.

Let $G$ be a tripartite graph with parts $R, C, W$ of cardinalities $n, n, N$
respectively. Let $T$ be a $G$-star-free family of edge disjoint triangles
in $G$. Let $F$ be the bipartite graph with parts $R$ and $C$ where there is an edge between $r\in R$ and $c\in C$
iff there is some $b\in W$ such that $(r,c,b)\in T$.
Then $F$ is the union of at most $N$ edge-disjoint induced matchings, since all the edges
that correspond to a given $b \in W$ form an induced matching.

This construction can easily be reversed: Let $F$ be a subgraph of $K_{n,n}$ that is the union of $N$ edge disjoint induced matchings,
with a total of $\rho$ edges. We can construct a tripartite $G$ (a subgraph of $K_{n,n,N}$) that contains
a family of $\rho$ pairwise disjoint triangles, and has no $G$-stars. We conclude that

\begin{observation}
Let $n \le N$ be positive integers, then $\rho_3(n,N)$ is the largest number of edges
in a union of $N$ edge-disjoint induced matchings in $K_{n,n}$.
\end{observation}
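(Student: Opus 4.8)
The statement is exactly the (essentially bijective) reformulation sketched in the two paragraphs preceding it, so the plan is to verify the two inequalities by matching edge-disjointness of triangles with the partition into matchings, and matching the $G$-star-free condition with the \emph{induced} property. Throughout, write $m(n,N)$ for the maximum number of edges in a union of $N$ edge-disjoint induced matchings in $K_{n,n}$.

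For $\rho_3(n,N)\le m(n,N)$: take $G\subseteq K_{n,n,N}$ attaining $\rho_3(n,N)$, witnessed by a $G$-star-free family $T$ of $\rho_3(n,N)$ edge-disjoint triangles $\langle r,c,b\rangle$, $r\in R$, $c\in C$, $b\in W$. First I would note that $\langle r,c,b\rangle\mapsto(r,c)$ is injective on $T$ (two such triangles with the same $(r,c)$ would share the edge $rc$), so the bipartite graph $F$ on $R\cup C$ with edges $\{rc:\langle r,c,b\rangle\in T\}$ has $\rho_3(n,N)$ edges. Setting $M_b=\{rc:\langle r,c,b\rangle\in T\}$ gives a partition of $E(F)$, and each $M_b$ is a matching (two of its edges meeting at a vertex would come from two triangles of $T$ sharing an edge $rb$ or $cb$). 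To see $M_b$ is \emph{induced}, suppose $rc_1,r_2c\in M_b$ are distinct (hence vertex-disjoint) edges and $rc\in E(F)$, say via $\langle r,c,b''\rangle\in T$; then $rc$, $rb$, $cb$ all occur among the edges of $T$ (from $\langle r,c,b''\rangle$, $\langle r,c_1,b\rangle$, $\langle r_2,c,b\rangle$ respectively), so $\langle r,c,b\rangle$ is a triangle of $\bigcup_{t\in T}E(t)$; it is not in $T$ (it would share $rb$ with $\langle r,c_1,b\rangle$), contradicting $G$-star-freeness. Hence $F$ exhibits $\rho_3(n,N)$ edges as a union of at most $N$ edge-disjoint induced matchings.

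For the reverse, take $F\subseteq K_{n,n}$ with $E(F)=M_1\sqcup\cdots\sqcup M_N$ attaining $m(n,N)$, set $R=C=[n]$, $W=[N]$, and let $T=\{\langle r,c,i\rangle:rc\in M_i\}$ define the edges of a subgraph $G\subseteq K_{n,n,N}$. Then $|T|=m(n,N)$, and $T$ is edge-disjoint by a short case analysis (a shared $R\times C$ edge pins down $r,c$ and hence $i$; a shared $R\times W$ or $C\times W$ edge pins down one $[n]$-coordinate and $i$, and then the matching property of $M_i$ pins down the other $[n]$-coordinate). For $G$-star-freeness, let $\langle r,c,i\rangle$ be an arbitrary triangle in $\bigcup_{t\in T}E(t)$; tracing the provenance of its three edges yields $rc_1\in M_i$, $r_2c\in M_i$, and $rc\in E(F)$. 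If $rc_1=r_2c$ then $rc\in M_i$ and $\langle r,c,i\rangle\in T$; otherwise $rc_1$ and $r_2c$ are vertex-disjoint edges of $M_i$ with $r,c\in V(M_i)$ and $rc\in E(F)$, contradicting inducedness of $M_i$. So every triangle of $\bigcup_{t\in T}E(t)$ lies in $T$, giving $\rho(G)\ge m(n,N)$ and hence $\rho_3(n,N)\ge m(n,N)$.

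The one genuinely delicate point — and the step I would expect scrutiny on — is the equivalence between ``$T$ contains no $G$-star'' and ``the $M_b$ are induced matchings'': in both directions one must reconstruct the forbidden triangle $\langle r,c,b\rangle$ by selecting exactly one edge from each of three \emph{distinct} members of $T$ and then argue that triangle cannot itself belong to $T$. The remaining ingredients (injectivity of the projection, the matching property, and the bookkeeping of which tripartite class each edge of $G$ lies in) are routine. I would also remark that the hypothesis $n\le N$ is not actually used here and is kept only for consistency with Theorem~\ref{equiv-rho-kappa}, and that allowing empty $M_i$ makes ``exactly $N$'' and ``at most $N$'' matchings interchangeable.
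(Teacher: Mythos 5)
Your proof is correct and follows essentially the same route as the paper, which only sketches this correspondence in the two paragraphs before the observation: project the $G$-star-free triangle family to $R\times C$ with the colour classes $M_b$ as induced matchings, and reverse the construction by turning each matching $M_i$ into triangles through the vertex $i\in W$. You merely supply the routine verifications (injectivity, the matching and inducedness checks, and the equivalence of $G$-star-freeness with ``no additional triangle'') that the paper leaves implicit, so no further changes are needed.
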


This observation exhibits a strong connection between (i) The problem of constructing dense $(r,t)$-\rs\ graphs,
and (ii) The construction of a large star-free subset $S \subseteq [n]\times[n]\times[t]$ that meets
every line at most once. The two problems differ only slightly.
In one, the underlying graph is bipartite and in the other all induced matching must have the same cardinality.
But these differences can be bridged quite easily, as observed in the following lemma.

\begin{lemma}
\begin{enumerate}
\item
If there exists an $(r,t)$-\rs\ graph on $n$ vertices, then $\rho_3(\frac{n}{2},t) \ge \frac{rt}{2}$.
\item
If $\rho_3(n,t) \ge rt$ then there exists a $(\frac{r}{2},t)$-\rs\ graph on $n$ vertices.
\end{enumerate}
\end{lemma}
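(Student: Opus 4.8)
The plan is to derive both statements from the Observation above, which identifies $\rho_3(n,N)$ with the maximum number of edges in a union of $N$ edge-disjoint induced matchings inside $K_{n,n}$; the only gaps to bridge are (a) bipartite versus general host graph and (b) induced matchings of a common size versus of varying sizes.

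For part~1, start from an $(r,t)$-\rs\ graph $H$ on vertex set $[n]$, with $E(H)=M_1\sqcup\cdots\sqcup M_t$ and each $M_j$ an induced matching of size $r$. First I would pick a uniformly random balanced bipartition $[n]=A\sqcup B$ and let $F$ be the spanning subgraph of $H$ consisting of the edges crossing between $A$ and $B$, and set $M_j'=M_j\cap E(F)$. Each $M_j'$ is a matching (a subset of $M_j$), the $M_j'$ are pairwise edge-disjoint, and — this is the point to check — each $M_j'$ is \emph{induced in $F$}: among the endpoints of two distinct edges of $M_j'\subseteq M_j$, the only edges of $H$, hence of $F\subseteq H$, are those two edges, precisely because $M_j$ is induced in the ambient graph $H=\bigcup_\ell M_\ell$. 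So $F\subseteq K_{A,B}$ is a union of $t$ edge-disjoint induced matchings, and the Observation gives $\rho_3(n/2,t)\ge |E(F)|=\sum_j|M_j'|$. Since a fixed edge is cut by a random balanced bipartition with probability $\frac{n}{2(n-1)}\ge\frac12$, linearity of expectation gives $\mathbb{E}\big[\sum_j|M_j'|\big]=\frac{n}{2(n-1)}\,rt\ge\frac{rt}{2}$, so some bipartition attains $\sum_j|M_j'|\ge rt/2$, which is the claim (for odd $n$ the parts have sizes $\lceil n/2\rceil$ and $\lfloor n/2\rfloor$, an immaterial rounding).

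For part~2, the Observation turns $\rho_3(n,t)\ge rt$ into a bipartite graph $F\subseteq K_{n,n}$ that is a union of $t$ edge-disjoint induced matchings $N_1,\dots,N_t$ with $\sum_j|N_j|\ge rt$. To equalize the sizes, partition each $N_j$ into $\lfloor 2|N_j|/r\rfloor$ matchings of size exactly $r/2$, discarding a remainder of fewer than $r/2$ edges. Every such piece is again an induced matching: it is a subset of the induced matching $N_j$, and being induced is only easier in a subgraph than in $F$. The pieces, over all $j$, are pairwise edge-disjoint, and their number is $\sum_j\lfloor 2|N_j|/r\rfloor\ge\frac2r\sum_j|N_j|-t\ge 2t-t=t$; keeping any $t$ of them and taking their union yields a graph whose edge set splits into $t$ edge-disjoint induced matchings of size $r/2$, i.e.\ a $(r/2,t)$-\rs\ graph.

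The one genuine subtlety is that this graph is bipartite with parts of size $n$, hence literally on $2n$ rather than $n$ vertices; in the applications only the order of magnitude matters, so the factor of $2$ is harmless, but if one insists on exactly $n$ vertices I would identify the two sides of $F$ through a uniformly random bijection $x_i\leftrightarrow y_{\pi(i)}$ and check, by a first-moment estimate, that after discarding the few matchings in which a loop or a new incidence appears one is still left with at least $t$ induced matchings of size $r/2$ on the common vertex set, the slack between the $rt$ edges on hand and the $rt/2$ required absorbing the loss. This bipartite-to-general step, together with the repeated need to verify that "induced" is preserved under restriction to crossing edges (part~1) and under sub-dividing a matching (part~2), is where I expect essentially all of the (modest) work to lie; everything else is bookkeeping on top of the Observation.
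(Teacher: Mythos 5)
Your proof is correct and takes essentially the same route as the paper: for part 1 a random (near-)balanced bipartition keeping at least half the edges, with the cut of each $E_j$ remaining an induced matching, and for part 2 splitting each matching into pieces of size $r/2$, using that a subset of an induced matching is induced and that $\sum_j\lfloor 2|N_j|/r\rfloor\ge \frac{2}{r}\sum_j|N_j|-t\ge t$. Your closing worry about converting the bipartite graph on $2n$ vertices into one on exactly $n$ vertices goes beyond the paper, which simply works with the subgraph of $K_{n,n}$ and ignores this factor of $2$, so the (somewhat sketchy) random-identification step is not needed to match the paper's argument.
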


\begin{proof}
For the first claim, let $G = (V,E)$ be a $(r,t)$-\rs\ graph on $n$ vertices, and
let $E_1,E_2,\ldots, E_t$ be the partition of $E$ into induced matchings. We can find (e.g., by a random choice) a subset $A\subset V$
of $\lfloor\frac n2\rfloor$ vertices, so that at least $|E|/2$ edges are
in the cut $C=(A, \bar{A})$.
Also, $C \cap E_1, C \cap E_2,\ldots, C \cap E_t$ is a partition of the edges of
the bipartite graph $(A, \bar{A}, C)$ into $t$ disjoint induced matchings.
Therefore, $\rho_3(\frac{n}{2},t) \ge \frac{rt}{2}$.

For the second part, suppose that $\rho_3(n,t) \ge rt$. Namely, there is a collection
of disjoint induced matchings $M_1,\ldots M_t \subseteq E(K_{n,n})$ with $\sum_1^t |M_i|\ge rt$. We split each $M_i$ into $\lfloor\frac{2|M_i|}{r}\rfloor$ sets of $\ge r/2$ edges each. Note that $\sum_1^t a_i \ge rt$ implies that $\sum_1^t \lfloor\frac{2a_i}{r}\rfloor \ge t$ and a subset of an induced matching is an induced matching, so we finally
have a family of at least $t$ disjoint induced matchings each of size $\frac{r}{2}$.
\end{proof}

\subsubsection{Application to Shared Directional Multi-channels}
\rs\ graphs have various applications in several fields \cite{Solymosi2003, alon2012nearly, ruzsa1978triple, alon2002testing, haastad2003simple, birk1993uniform}.
In \cite{birk1993uniform} they are applied to Information Theory, and the study of {\em shared directional multi-channels}, a subject that is strongly related to communication complexity.
Such a channel is comprised of a set of inputs and a set of outputs
to which are connected transmitters and receivers respectively. Associated with each input is a set of outputs, that
receive any signal placed at that input.
A message is received successfully at an output
of the channel if and only if it is addressed to the receiver connected to that output and no other
signals concurrently reach that output. Therefore, when communicating over a shared channel,
we want the edges (corresponding to messages sent in one round) to form an induced matching.
The challenge is to partition $K_{n,n}$ into families of pairwise disjoint induced matchings.
The number of parts correspond to the number of receivers allowed at each output, and
the number of matchings in each partition corresponds to the number of rounds.

The relation to communication complexity is as follows: A $c$-bit communication protocol for
any linjection $A: [n]\times[n] \to [N]$ induces a partition of $K_{n,n}$ into $c$ such families of disjoint induced matchings.
Thus, such a communication protocol, gives an $N$ round protocol for the shared directional multi-channel, with $c$ receivers
per station, and vice-versa.

In constructing a shared directional multi-channel, we seek to minimize the number of rounds required for
a given number of transmitters.
Alon, Moitra, and Sudakov \cite{alon2012nearly} showed that for any $\epsilon > 0$ there is partition of $K_{n,n}$ into at most $2^{O(\frac{1}{\epsilon})}$
graphs each of which is a family of at most $O(n^{1+\epsilon})$ induced matchings. This gives an
$O(n^{1+\epsilon})$ round protocol for shared directional multi-channel with $2^{O(\frac{1}{\epsilon})}$ receivers.

Translated to the language of NOF protocols and combining with Corollary~\ref{cor:chr_N} (see
Section~\ref{sec:a_lower_bounds_on_chr_3_n_N} in the sequel), we conclude:

\begin{theorem}
For all $\epsilon >0$ and all large enough $n$, there holds:
\[2^{O(\frac{1}{\epsilon})} \ge
\chr_3(n,n^{1+\epsilon}) \ge \Omega(\log \frac{1}{\epsilon}).\]
\end{theorem}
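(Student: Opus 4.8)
The plan is to get the upper bound by translating the Alon--Moitra--Sudakov partition theorem into the language of linjections and monochromatic cylinder intersections, and to get the lower bound by instantiating Corollary~\ref{cor:chr_N} at $N=n^{1+\epsilon}$. Since the lower bound is established later (Section~\ref{sec:a_lower_bounds_on_chr_3_n_N}), I would concentrate on the upper bound here and invoke the corollary as a black box.

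First I would apply the AMS theorem \cite{alon2012nearly} with parameter $\epsilon/2$ rather than $\epsilon$, obtaining a partition $E(K_{n,n})=F_1\cup\cdots\cup F_m$ with $m=2^{O(1/\epsilon)}$ in which each $F_j$ is a union of at most $O(n^{1+\epsilon/2})$ matchings, each one induced in $F_j$. Relabelling all of these matchings as $M_1,\dots,M_{N_0}$ with $N_0\le m\cdot O(n^{1+\epsilon/2})$, the point is that $m$ is a constant (for fixed $\epsilon$) while $n^{1+\epsilon/2}=o(n^{1+\epsilon})$, so $N_0\le n^{1+\epsilon}$ once $n$ is large enough --- this is exactly why the statement asks for large $n$. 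Then I would colour every edge of $K_{n,n}$ by the index of the matching containing it; since two edges sharing a row or column always lie in distinct matchings, this is a proper edge-colouring, which is the same thing as a linjection $A:[n]^2\to[N_0]\subseteq[n^{1+\epsilon}]$.

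Next I would exhibit a cover of $f^{-1}(1)=\{(x,y,A(x,y))\}$ by $m$ one-monochromatic cylinder intersections: put $S_j=\{(x,y,A(x,y)):(x,y)\in F_j\}$, so the $S_j$ partition $f^{-1}(1)$. By Lemma~\ref{lem_cyl_int_gf} each $S_j$ is a cylinder intersection as soon as it contains no star, and I would check this directly. A star in $S_j$ is a triple $(x',y,b),(x,y',b),(x,y,b')$ with $x\ne x'$, $y\ne y'$, $b\ne b'$, all in $f^{-1}(1)$; then $A(x',y)=A(x,y')=b$ forces the edges $(x',y)$ and $(x,y')$ into the common matching $M_b$, while the chord $(x,y)$ lies in $F_j$ (it carries the third star point) but not in $M_b$ --- and since $x$ and $y$ are endpoints of edges of $M_b$, this chord contradicts $M_b$ being induced in $F_j$. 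Hence $\chr_3(A)\le m=2^{O(1/\epsilon)}$, and as $A$ is a bona fide linjection into $[n^{1+\epsilon}]$ this gives the upper bound. For the lower bound I would just substitute $N=n^{1+\epsilon}$, so $N/n=n^{\epsilon}$, into Corollary~\ref{cor:chr_N}, whose general (corners/removal-lemma-based) bound on $\chr_3(n,N)$ then reads $\Omega(\log(1/\epsilon))$.

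The only genuinely delicate point in the upper bound is the definitional dictionary: one must recognise that ``$M_b$ is an induced matching of $F_j$'' is precisely ``$S_j$ contains no star'', i.e.\ that a chord of an induced matching is the same object as a star centre. Everything else --- the proper-colouring observation, the bookkeeping that forces $N_0\le n^{1+\epsilon}$ (handled by running AMS at $\epsilon/2$), and the appeal to Lemma~\ref{lem_cyl_int_gf} --- is routine. The real substance of the lower bound, of course, lives entirely inside Corollary~\ref{cor:chr_N}, which is proved separately.
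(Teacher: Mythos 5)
Your proposal is correct and follows essentially the paper's own route: the upper bound is exactly the translation of the Alon--Moitra--Sudakov partition of $K_{n,n}$ into a linjection whose colour classes are star-free (hence $1$-monochromatic cylinder intersections), and the lower bound is the substitution $N=n^{1+\epsilon}$ into Corollary~\ref{cor:chr_N}; your extra bookkeeping (globally distinct labels for the matchings, running the partition at parameter $\epsilon/2$ so the number of labels stays below $n^{1+\epsilon}$ for large $n$) just makes explicit what the paper leaves implicit. One cosmetic point: Corollary~\ref{cor:chr_N} comes from the elementary iterated-rectangle counting argument of Lemma~\ref{lb-1} (in the spirit of monochromatic corners), not from the removal lemma, but since you invoke it as a black box this changes nothing.
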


\subsection{A Characterization of $\ind_k(f^{\mathbb{Z}_2^n}_{k,T})$}

In this section we focus on the problem \ExactT for the abelian group $\mathbb{Z}_2^n$. In other words, we study
the permutation $f^{\mathbb{Z}_2^n}_{k,T}$. 
We give an alternative characterization of 
$\ind_3(f^{\mathbb{Z}_2^n}_{3,T})$ which brings forth the relation
between this problem and several known combinatorial objects.
The complexity of $f^{\mathbb{Z}_2^n}_{k,T}$ is independent of $T$, so we will omit the subscript $T$ in this section. 
Also, throughout this section we let $A^{G}_{k} = A(f^{G}_{k})$.

Let $X \subset \mathbb{Z}_4^3$. We call a subset of $W\subseteq \mathbb{Z}_4^n$ {\em $X$-free} if 
for every three distinct members
$\mathbf{x},\mathbf{y},\mathbf{z}\in W$ there is an index $1\le i \le n$ for which $(x_i,y_i,z_i) \not\in X$.
\begin{theorem}
\label{th:alt_char}
Let \[X = \{ (0,0,0), (1,1,1), (2,2,2), (3,3,3), (0,1,2), (1,0,3), (2,3,0), (3,2,1)\} \subset \mathbb{Z}_4^3,\] then
$\ind_3(A^{\mathbb{Z}_2^n}_3)$ is the largest cardinality of an $X$-free subset of $\mathbb{Z}_4^n$.
\end{theorem}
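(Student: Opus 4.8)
The idea is to strip away the communication-complexity language, recognize $\ind_3(A^{\mathbb{Z}_2^n}_3)$ as the largest corner-free subset of the addition table of $\mathbb{Z}_2^n$, and then transport that quantity to $\mathbb{Z}_4^n$ along an explicit coordinatewise bijection that turns corners into the forbidden pattern $X$. First, since the complexity does not depend on $T$ we set $T=0$, so $A:=A^{\mathbb{Z}_2^n}_3$ is the map $A(x,y)=x+y$ and $f^{-1}(1)=\{(x,y,x+y): x,y\in\mathbb{Z}_2^n\}$. A permutation is in particular a graph function, so Lemma~\ref{lem_cyl_int_gf} says that $\ind_3(A)$ is the largest size of a \emph{star-free} subset $S\subseteq f^{-1}(1)$. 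The projection $(x,y,x+y)\mapsto (x,y)$ is a bijection of $f^{-1}(1)$ onto $(\mathbb{Z}_2^n)^2$, and by the dictionary of Example~\ref{a-star-latin}: a star $\{(x',y,z),(x,y',z),(x,y,z')\}\subseteq f^{-1}(1)$ must satisfy $x+x'=y+y'=z+z'=:d\ne 0$ and $z=x+y+d$, so it projects to the corner $\{(x,y),(x+d,y),(x,y+d)\}$, and conversely every such corner with $0\ne d\in\mathbb{Z}_2^n$ lifts back to a star. Hence $\ind_3(A)$ equals the maximum size of a subset of $(\mathbb{Z}_2^n)^2$ containing no corner.

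Next, write $(\mathbb{Z}_2^n)^2=(\mathbb{Z}_2\times\mathbb{Z}_2)^n$ by pairing the $i$-th bit of $x$ with the $i$-th bit of $y$, and let $\Phi\colon(\mathbb{Z}_2^n)^2\to\mathbb{Z}_4^n$ be the coordinatewise map induced by $\phi\colon\mathbb{Z}_2\times\mathbb{Z}_2\to\mathbb{Z}_4$, $\phi(a,b)=a+2b\pmod 4$. Then $\phi$, hence $\Phi$, is a bijection, and $|\Phi(P)|=|P|$ for every $P$. The heart of the proof is the claim that \emph{$P$ contains a corner if and only if $W:=\Phi(P)$ is not $X$-free}, i.e.\ iff $W$ has three distinct members $\mathbf u,\mathbf v,\mathbf w$ with $(u_i,v_i,w_i)\in X$ for all $i$. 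Granting it, $P$ is corner-free iff $W$ is $X$-free, and since $\Phi$ is a size-preserving bijection of $(\mathbb{Z}_2^n)^2$ onto $\mathbb{Z}_4^n$, maximizing over $P$ gives the theorem.

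I would prove the claim one coordinate at a time. Take a corner $\{(x,y),(x+d,y),(x,y+d)\}$ and fix a coordinate $i$. If $d_i=0$ the three points agree there, so $(u_i,v_i,w_i)=(c,c,c)$ with $c=\phi(x_i,y_i)\in\{0,1,2,3\}$ --- these are the first four elements of $X$. If $d_i=1$ the three points are the three distinct elements $(x_i,y_i),(x_i+1,y_i),(x_i,y_i+1)$ of $\mathbb{Z}_2\times\mathbb{Z}_2$, and as $(x_i,y_i)$ ranges over the four possibilities, $(\phi(x_i,y_i),\phi(x_i+1,y_i),\phi(x_i,y_i+1))$ runs over exactly $(0,1,2),(1,0,3),(2,3,0),(3,2,1)$ --- the remaining four elements of $X$. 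So a corner maps to a non-$X$-free triple. Conversely, given distinct $\mathbf u,\mathbf v,\mathbf w\in W$ with every $(u_i,v_i,w_i)\in X$, in each coordinate the triple is either $(c,c,c)$ --- forcing $\Phi^{-1}(\mathbf v),\Phi^{-1}(\mathbf w)$ to agree with $\Phi^{-1}(\mathbf u)$ there, so we set $d_i:=0$ --- or one of the four mixed triples; here the four mixed triples have pairwise distinct first coordinates, so $u_i$ alone names the triple, hence fixes $(x_i,y_i)=\phi^{-1}(u_i)$ and forces $\Phi^{-1}(\mathbf v),\Phi^{-1}(\mathbf w)$ to read $(x_i+1,y_i),(x_i,y_i+1)$ there, so we set $d_i:=1$. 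Assembling the coordinates, $\Phi^{-1}(\mathbf v)=(x+d,y)$ and $\Phi^{-1}(\mathbf w)=(x,y+d)$ with $(x,y)=\Phi^{-1}(\mathbf u)$, and $d\ne 0$ because otherwise every coordinate triple would be of type $(c,c,c)$ and we would get $\mathbf u=\mathbf v=\mathbf w$; so $P$ contains a corner.

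Essentially everything above is definition-chasing, and the one load-bearing step --- the one that pins down this particular eight-element $X$ --- is the converse direction just described: it works precisely because the four ``mixed'' triples of $X$ are determined by their first coordinates, which is exactly what guarantees that an $X$-pattern in $\Phi(P)$ can only be the image of a corner. A poorly chosen coordinate bijection $\phi$ (or a poorly chosen $X$) would break the reduction in that direction; the rest --- the sum-to-zero bookkeeping that identifies stars with additive corners, and the finite verification of $\phi$ on $\mathbb{Z}_2\times\mathbb{Z}_2$ --- is routine.
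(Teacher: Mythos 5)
Your proposal is correct and follows essentially the same route as the paper: reduce $\ind_3$ to star-free (corner-free) subsets of $(\mathbb{Z}_2^n)^2$, transport via a coordinatewise bijection $\mathbb{Z}_2^2\to\mathbb{Z}_4$, and verify the eight forbidden patterns coordinate by coordinate; your $\phi(a,b)=a+2b$ is just the paper's $\psi$ with the roles of the two shifted points swapped, which is immaterial. Your spelled-out converse (using that the four mixed triples of $X$ have distinct first coordinates) is a slightly more explicit version of the paper's ``not hard to check'' step.
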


\begin{proof}
Recall that $\ind_3(A^{\mathbb{Z}_2^n}_3)$ is the largest cardinality of an $A_n$-star free subset of $(\mathbb{Z}_2^n)^2$, where $A_n = A^{\mathbb{Z}_2^n}_3$. So it suffices to
find a bijection $\psi$ from $(\mathbb{Z}_2^n)^2$ to $\mathbb{Z}_4^n$
such that $S\subseteq (\mathbb{Z}_2^n)^2$ is mapped to an $X$-free set if and only if $S$ is $A_n$-star free.

We define $\psi$ for $n=1$ and extend is entry-wise to a mapping from $(\mathbb{Z}_2^n)^2$ to $\mathbb{Z}_4^n$.
The definition for $n=1$ is as follows:
$\psi(0,0) = 0$, $\psi(0,1) = 1$, $\psi(1,0) = 2$ and $\psi(1,1) = 3$.

We need to show that if $(x_1,y_1), (x_2,y_2), (x_3,y_3) \in (\mathbb{Z}_2^n)^2$ is a $A_n$-star, then every coordinate in
$(\psi(x_1,y_1), \psi(x_2,y_2), \psi(x_3,y_3))$ belongs to $X$, and vice versa.
Since the map $\psi$ is defined coordinate-wise it suffices to check this for $n=1$. 
A triple $(x_1,y_1), (x_1,y_1+d), (x_1+d',y_1)$
is a (trivial or non-trivial) star in $A_1$ iff $x_1+(y_1+d)=(x_1+d')+y_1$, i.e., $d=d'$, and thus an $A_1$-star
is a triple of the form $(x_1,y_1), (x_1,y_1+d), (x_1+d,y_1)$.
If $d=0$ then obviously
$(\psi(x_1,y_1), \psi(x_1,y_1+d), \psi(x_1+d,y_1)) \in \{(0,0,0),(1,1,1),(2,2,2),(3,3,3)\} \subset X$. When $d=1$
there are four cases to check:
\begin{enumerate}

\item $x_1 = 0$ and $y_1=0$ then $(\psi(x_1,y_1), \psi(x_1,y_1+d), \psi(x_1+d,y_1))=(0,1,2) \in X$.

\item $x_1 = 0$ and $y_1=1$ then $(\psi(x_1,y_1), \psi(x_1,y_1+d), \psi(x_1+d,y_1))=(1,0,3) \in X$.

\item $x_1 = 1$ and $y_1=0$ then $(\psi(x_1,y_1), \psi(x_1,y_1+d), \psi(x_1+d,y_1))=(2,3,0) \in X$.

\item $x_1 = 1$ and $y_1=1$ then $(\psi(x_1,y_1), \psi(x_1,y_1+d), \psi(x_1+d,y_1))=(3,2,1) \in X$.

\end{enumerate}
On the other hand it is not hard to check that for each $(a,b,c) \in X$ the triplet $\psi^{-1}(a), \psi^{-1}(b), \psi^{-1}(c)$
is a star in $A_1$ or $a=b=c$. This proves the claim.
\end{proof}

Fix an integer $s\ge 2$ and let $HJ(n,s)$ denote the largest size of a $Y_s$-free subset of $[s]^n$,
where $Y_s$ is the following set of $s$-tuples:
$\{(1,\ldots,s)\}\cup\{(i,i,\ldots,i) | i=1,2,\ldots,s\}$.
The density Hales-Jewett theorem states that $HJ(n,s) = o(s^n)$
for every fixed $s$ \cite{furstenberg1991density, polymath2009new}.

Theorem~\ref{th:alt_char}, and the observation that the first three coordinates of
the $4$-tupples in $Y_4$ all bolong to $X$,
imply that $\ind_3(A^{\mathbb{Z}_2^n}_3) \le HJ(n,4)$.

The cap-set problem for $\mathbb{Z}_4^n$ also belongs to the same circle of problems.
It concerns the largest size of an arithmetic-triple-free set in $\mathbb{Z}_4^n$.
We mention in passing the recent breakthrough \cite{croot2016progression, ellenberg2017large} in this area
which showed that this size is at most $4^{(\gamma + o(1))\cdot n}$ with $\gamma \approx 0.926$.
Let $Z \subset \mathbb{Z}_4^3$ be the set of all ordered triplets
$(a,b,c)\in \mathbb{Z}^3_4$ satisfying $a + c = 2b$. The cap set problems concerns
exactly the largest possible cardinality of a $Z$-free subset of $\mathbb{Z}_4^n$.
Since $X \subset Z$ it follows that this size is bounded by $\ind_3(A^{\mathbb{Z}_2^n}_3)$.

The proof of Theorem~\ref{th:alt_char} extends verbatim to general $k\ge 3$.
It yields a subset $X \subset \mathbb{Z}_{2^{k-1}}^k$ such that
$\ind_k(A^{\mathbb{Z}_2^n}_k)$ is the largest cardinality of an $X$-free subset of
$\mathbb{Z}_{2^{k-1}}^n$.

By taking $X$ that includes all vectors
$(a,a,\ldots,a)\in \mathbb{Z}_{2^{k-1}}^k$ for $a \in \mathbb{Z}_{2^{k-1}}$ and the vector $(0,1,2,4,\ldots,2^{k-2})$
we can maintain the relation
between $\ind_k(A^{\mathbb{Z}_2^n}_k)$ and the density Hales-Jewett theorem for every $k$.

\section{Upper Bounds}

The results in this section are all restricted to case of three players. The extension to $k>3$ players
requires additional tools and ideas and is a good topic for future research.

\subsection{An Algorithmic Protocol for {\rm Exact-}$T$ over $\mathbb{Z}^d$}
\label{sec:a_protocol_for}

The aim of this section is to give the first algorithmic protocol for
\Exactn as well as other {\rm Exact-}$T$ functions. Our protocol is explicit, and
does not rely on a construction of a large set without a 3-term AP. We only appeal to the elementary fact that no sphere can contain three equally spaced colinear points.

The algorithm has two main steps. We first provide a very efficient protocol for {\rm Exact-}$T$  
over $\mathbb{Z}^d$, whose cost grows only logarithmically with $d$.

Let $f: ([m]^d)^3 \to \{0,1\}$ be defined via $f(x,y,z)=1$ if and only if  $x+y+z = T$, where
$T \in \mathbb{Z}^d$ is some fixed vector. We provide an explicit NOF protocol for $f$ whose cost is only $O(\log md)$. In words,  players try to compute the vector $x+2y+3z$ ``to the best of their knowledge" and then they compare notes. 

\begin{enumerate}

\item Player 1 computes $v_x = T-y-z + 2y + 3z$.

\item Player 2 computes $v_y = x + 2(T-x-z) + 3z$.

\item Player 3 computes $v_z = x + 2y + 3(T-x-y)$.

\item Player 1 writes $\|v_x\|_2^2$ on the blackboard.

\item Player 2 writes $1$ or $0$ on the blackboard depending on whether $\|v_y\|_2^2 = \|v_x\|_2^2$.

\item Player 3 writes $1$ or $0$ on the blackboard depending on whether $\|v_z\|_2^2 = \|v_x\|_2^2$.

\item The protocol outputs $1$ if the last two bits were both equal to $1$, and $0$ otherwise.

\end{enumerate}

The cost of the above protocol is essentially determined by the largest possible value of $\|v_x\|_2^2$ in step 4 which is at most $O(m^2d)$. Therefore, this cost does not exceed $O(\log md)$. We turn to prove correctness.
\begin{lemma}
The above protocol is correct.
\end{lemma}

\begin{proof}
First note that the protocol outputs $1$ if and only if $\|v_x\|_2^2 = \|v_y\|_2^2 = \|v_z\|_2^2$.
Also, $v_x + v_z = 2v_y$, so that this condition holds only if all three vectors are equal, in which case $T-x-y-z = 0$, as claimed. 
\end{proof}

\begin{remark}[More general protocols]
Several variations on the above theme suggest them selves. Fix
integers $a,b,c \in \mathbb{Z}$ and a $d \times d$ positive 
definite matrix $D$ with integer entries. The players compute
$a(T-y-z) + by + cz$, $ax + b(T-x-z) + cz$ and $ax + by + c(T-x-y)$, and rather
than comparing the values of $\|v\|_2$, they consider the values of $vDv^t$. We wonder
if these, or similar variations can together improve the complexity of the protocol.
\end{remark}

\subsection{Algorithmic Protocols for \Exactn and \ExactT over $\mathbb{Z}_m^d$}
\label{sec:reduction}

We seek algorithmically explicit  protocols for the exact-T problem over $\mathbb{Z}$
or equivalently over $\mathbb{Z}_n$.
That is the \Exactn problem with the function $f: [n]^3 \to \{0,1\}$ such that
$f(x,y,z) = 1$ if and only if $x+y+z=n$.  We can give an efficient 
protocol to this problem by reduction to the protocol in the previous section,
even though when applied directly to $\mathbb{Z}_n$ they give no improvement 
over the trivial protocol.
 
First fix a base $m$ and let $d = 1+\lceil \log_m n \rceil$. Consider the base-$m$ representation on the elements
of $[n]$. Given a representation $x \in \{0,1,\ldots,m-1\}^d$ of a number base $m$, for convenience we consider $x_1$ as the least significant digit. 
Note that all representations are of length $d$, if a number is small  its representation is padded with zeros.
The following protocol solves the \Exactn problem in these settings. Let $T$ be the base-$m$ representation of $n$.

\begin{enumerate}

\item Player $1$ computes the vector $C \in \{0,1,2\}^d$ defined as follows:
the $i$-th entry of $C$ is equal to $k \in \{0,1,2\}$ satisfying
$$
T_i + (k-1)m < y_i + z_i + C_{i-1} \le T_i + km,
$$
where addition is over $\mathbb{Z}$, and we define $C_0=0$.  

\item Denote by $C_x$ the carry vector computed by Player 1 in step 1. Player $2$ and $3$ compute corresponding vectors $C_y$ and $C_z$,
in a similar way.

\item Player $1$ writes $C=C_x$ on the board. 

\item Player $2$ and $3$, in turn, write $1$ on the board if and only if 
their vector $C_y$ ($C_z$) is equal to $C_x$.

\item If the last two bits written on the board are equal to $1$, continue.
Otherwise output $0$ and terminate.

\item All players compute (in private) the vector $T'_i = T_i + mC_i - C_{i-1}$,
for $i=1,\ldots,d$.

\item The players run a protocol for the exact-T problem over $\mathbb{Z}^d$ 
with $x,y,z$ and $T'$. 

\end{enumerate} 

The cost of the above protocol is $O(d+2)$ for steps 1-5, plus the cost of the protocol used in step 7. 
The cost is thus $O(d+\log md)$ if the players use the protocol from Section~\ref{sec:a_protocol_for} in the last step. 
We prove next that this protocol is correct.
\begin{lemma}
The above protocol is correct.
\end{lemma}

\begin{proof}
First assume $n=x+y+z$ over $\mathbb{Z}$. It is easy to verify the correctness 
of the protocol in this case, except maybe step 5. The correctness of step 5 follows from the 
following simple observation: 
assume $x_i+y_i+z_i+C_{i-1}= T_i + km$ (over $\mathbb{Z}$) for $k\in \{0,1,2\}$, then it must be that 
the sum of any pair of $x_i,y_i,z_i$ and $C_{i-1}$ is larger than $T_i + (k-1)m$ and at most
$T_i + km$. 

Now consider the case $T\ne x+y+z$. If the protocol rejects on step 5 then obviously
this is correct. If it does not reject then all players compute the same vector $T'$,
and $x+y+z = n$ over $\mathbb{Z}$ if and only if $x+y+z = T'$ over $\mathbb{Z}^d$.
The correctness now follows from the correctness of the protocol over $\mathbb{Z}^d$. 
\end{proof}

The above protocol for \Exactn is correct for any choice of base $m$.
To get an efficient protocol we optimize the choice of $m$. 
The running time of the protocol is $O(d+\log md) = O(d+\log m)$. Since $m^d = n$, 
we get that $\log n = d \log m$, and thus the optimal choice is roughly $m = 2^{\sqrt{\log n}}$
which gives a running time of $O(\sqrt{\log n})$.  

\begin{remark}[The group $\mathbb{Z}_m^d$]
The above protocol can also be adapted for $\mathbb{Z}_m^d$ (with addition modulo $m$). The idea is very similar,
the only difference is that in the first steps Player $1$ computes the vector 
$I_x \in \{0,1,2\}^n$ defined as follows:
the $i$-th entry of $I_x$ is equal to $k \in \{0,1,2\}$ satisfying
$$
T_i + (k-1)m < y_i + z_i \le T_i + km,
$$
where addition is over $\mathbb{Z}$. The other two players compute analogous vectors. 
\end{remark}

\subsection{A Protocol for \ExactT over $\mathbb{Z}_2^n$}

In this section we focus on the exact-$T$ problem for the abelian group $\mathbb{Z}_2^n$. In other words, we study
the permutation $f^{\mathbb{Z}_2^n}_{k,T}$. 
First we prove a lower bound on $\ind_3(f^{\mathbb{Z}_2^n}_{3,T})$, and then show that this lower bound
implies the existence of an efficient protocol for $f^{\mathbb{Z}_2^n}_{3,T}$. 
The complexity of $f^{\mathbb{Z}_2^n}_{k,T}$ is independent of $T$, so we can and will omit the subscript $T$ in this section. 
Throughout this subsection we let $A^{G}_{k} = A(f^{G}_{k})$.

First we prove that $A^{\mathbb{Z}_2^n}_{t}$-star freeness is preserved
under tensor product.\\Let $S \subset (\mathbb{Z}_2^n)^{k-1}$, denote by $S \otimes S$
the subset of $(\mathbb{Z}_2^{2n})^{k-1}$ comprised of all
vectors $(x_1,y_1, \ldots, x_{k-1},y_{k-1})$ such that $x_i, y_i \in S$
for $i=1,\ldots,k-1$.
\begin{lemma}
\label{lem:tensor_product}
If $S$ is $A^{\mathbb{Z}_2^n}_{k}$-star free then
$S \otimes S$ is $A^{\mathbb{Z}_2^{2n}}_{k}$-star free.
\end{lemma}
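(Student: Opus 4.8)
The plan is to establish the contrapositive: if $S \otimes S$ contains an $A^{\mathbb{Z}_2^{2n}}_k$-star, then $S$ already contains an $A^{\mathbb{Z}_2^n}_k$-star. The only structural input this needs is an explicit description of $A^{\mathbb{Z}_2^m}_k$-stars for an arbitrary $m$. Since $A^{\mathbb{Z}_2^m}_k(w_1,\ldots,w_{k-1}) = T + \sum_{j=1}^{k-1} w_j$ over $\mathbb{Z}_2^m$, for any coordinate $i$ and any $\delta \in \mathbb{Z}_2^m$ we have $A^{\mathbb{Z}_2^m}_k(\mathbf{w} + \delta\, e_i) = A^{\mathbb{Z}_2^m}_k(\mathbf{w}) + \delta$, where $\mathbf{w} + \delta\, e_i$ denotes the tuple obtained from $\mathbf{w}$ by adding $\delta$ to its $i$-th coordinate; note the right-hand side does not depend on $i$. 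Transporting the notion of a star to the $A$-domain as in Example~\ref{a-star-latin} and feeding in this identity (this is exactly the computation in the proof of Theorem~\ref{th:alt_char}, which as noted there extends verbatim to all $k \ge 3$) yields: a set $S' \subseteq (\mathbb{Z}_2^m)^{k-1}$ contains an $A^{\mathbb{Z}_2^m}_k$-star if and only if there are $\mathbf{w} \in S'$ and $d \in \mathbb{Z}_2^m \setminus \{0\}$ with $\mathbf{w} + d\, e_i \in S'$ for every $i \in [k-1]$. (Here $\mathbf{w}$ is the $A$-domain shadow of the star's center, and $d \neq 0$ records nondegeneracy.)

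Next I would invoke the splitting $\mathbb{Z}_2^{2n} = \mathbb{Z}_2^n \oplus \mathbb{Z}_2^n$. Under it, a tuple $\mathbf{z} \in (\mathbb{Z}_2^{2n})^{k-1}$ is a pair $(\mathbf{x},\mathbf{y})$ with $\mathbf{x},\mathbf{y} \in (\mathbb{Z}_2^n)^{k-1}$, obtained by taking the first, resp.\ second, halves of its $k-1$ coordinates, and a difference $\mathbf{d} \in \mathbb{Z}_2^{2n}$ is a pair $(d,d')$ with $d,d' \in \mathbb{Z}_2^n$. Two facts are then immediate: first, $\mathbf{z} \in S \otimes S$ if and only if $\mathbf{x} \in S$ and $\mathbf{y} \in S$, which is the definition of $S \otimes S$; second, a coordinate shift respects the splitting, so $\mathbf{z} + \mathbf{d}\, e_i$ is the pair $(\mathbf{x} + d\, e_i,\; \mathbf{y} + d'\, e_i)$ and hence lies in $S \otimes S$ if and only if $\mathbf{x} + d\, e_i \in S$ and $\mathbf{y} + d'\, e_i \in S$.

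Finally, suppose $S \otimes S$ contains an $A^{\mathbb{Z}_2^{2n}}_k$-star: by the description of the first paragraph there are $\mathbf{z} = (\mathbf{x},\mathbf{y})$ and $\mathbf{d} = (d,d') \in \mathbb{Z}_2^{2n} \setminus \{0\}$ with $\mathbf{z} \in S \otimes S$ and $\mathbf{z} + \mathbf{d}\, e_i \in S \otimes S$ for all $i \in [k-1]$. By the two facts just recorded, $\mathbf{x}, \mathbf{x} + d\, e_1, \ldots, \mathbf{x} + d\, e_{k-1}$ all lie in $S$, and likewise $\mathbf{y}, \mathbf{y} + d'\, e_1, \ldots, \mathbf{y} + d'\, e_{k-1}$ all lie in $S$. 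Since $\mathbf{d} \neq 0$, at least one of $d, d'$ is nonzero; say $d \neq 0$ (the other case is symmetric). Then $\mathbf{x}$ together with $\mathbf{x} + d\, e_1, \ldots, \mathbf{x} + d\, e_{k-1}$ is an $A^{\mathbb{Z}_2^n}_k$-star contained in $S$, contradicting the hypothesis that $S$ is $A^{\mathbb{Z}_2^n}_k$-star free. I do not foresee a genuine obstacle: all the content sits in pinning down the shape of a star, after which it is bookkeeping. The two places that deserve a moment's attention are that the `center' of a star is, in the $A$-domain picture, literally one of the $k$ listed points rather than an omitted one, and that the single nondegeneracy requirement $\mathbf{d} \neq 0$ is inherited by at least one of the two halves -- which is precisely why the argument still goes through when one of $d, d'$ vanishes.
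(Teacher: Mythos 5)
Your proposal is correct and follows essentially the same route as the paper's own proof: both arguments take a star in $S\otimes S$, split its difference $\mathbf{d}\in\mathbb{Z}_2^{2n}$ into the two halves $(d,d')$, and note that since $\mathbf{d}\neq 0$ at least one half is nonzero, so the projection onto the corresponding copy of $(\mathbb{Z}_2^{n})^{k-1}$ is an $A^{\mathbb{Z}_2^{n}}_{k}$-star in $S$. The only difference is presentational: you make explicit the $A$-domain description of a star (center-shadow plus coordinate shifts by a common nonzero $d$), which the paper uses implicitly.
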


\begin{proof}
Let $A = A^{\mathbb{Z}_2^{2n}}_{k}$ and let
$$
(z_1,\ldots,z_{k-1}),(z_1+d,\ldots,z_{k-1}),\ldots,(z_1,\ldots,z_{k-1}+d)
$$
be an $A$-star in $S\times S$, where for each $1\le i \le k-1$, $z_i = (x_i, y_i)$ with $x_i, y_i \in S$.
Denote also $d = (d^1,d^2)$ where $d^1, d^2 \in \mathbb{Z}_2^n$. Then either
$$
(x_1,\ldots,x_{k-1}),(x_1+d^1,\ldots,x_{k-1}),\ldots,(x_1,\ldots,x_{k-1}+d^1)
$$
is an $A^{\mathbb{Z}_2^{n}}_{k}$-star in $S$, or
$$
(y_1,\ldots,y_{k-1}),(y_1+d^2,\ldots,y_{k-1}),\ldots,(y_1,\ldots,y_{k-1}+d^2)
$$
is an $A^{\mathbb{Z}_2^{n}}_{k}$-star in $S$, since either $d^1 \ne 0$ or $d^2 \ne 0$.
\end{proof}

It follows that if, for some fixed $m$, we can find a large $A^{\mathbb{Z}_2^{m}}_{k}$-star free subset $S$, then tensor powers of $S$ are large $A^{\mathbb{Z}_2^{n}}_{k}$-star free sets.
We show:
\begin{lemma}
\label{lem:n8}
$\ind_3(A^{\mathbb{Z}_2^{2}}_{3}) = \ind_3(n,n) = 8$.
\end{lemma}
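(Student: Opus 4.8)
The plan is to establish the two equalities by proving $\ind_3(A^{\mathbb{Z}_2^2}_3)=8$ directly (a finite computation on the Latin square $A^{\mathbb{Z}_2^2}_3$, which is the addition table of $\mathbb{Z}_2^2$, a $4\times 4$ array over an alphabet of size $4$), and then arguing that $\ind_3(4,4)=8$ as well. For the lower bound $\ind_3(A^{\mathbb{Z}_2^2}_3)\ge 8$ I would exhibit an explicit $A_2$-star-free subset $S\subseteq(\mathbb{Z}_2^2)^2$ of size $8$; equivalently, by Theorem~\ref{th:alt_char} with $n=1$ (over the group $\mathbb{Z}_2^2$, this is the $n=2$ case of that characterization, or just the coordinate-wise $\psi$ applied twice), it suffices to exhibit an $X$-free subset of $\mathbb{Z}_4^2$ of size $8$, where $X$ is the $8$-element set from Theorem~\ref{th:alt_char}. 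A natural guess is to take one full ``row'' together with carefully chosen extra points, or to take a coset structure; since $|\mathbb{Z}_4^2|=16$ and $|X|=8$, density $1/2$ should be attainable. I would simply write the eight pairs down and check $X$-freeness by hand (or note it reduces to finitely many triples).

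For the upper bound $\ind_3(A^{\mathbb{Z}_2^2}_3)\le 8$, the cleanest route is again the $X$-free formulation: I must show every $X$-free $W\subseteq\mathbb{Z}_4^2$ has $|W|\le 8$. Here $X$ contains all four ``diagonal'' triples $(a,a,a)$, so $X$-freeness already forbids nothing from those alone, but the remaining four triples $(0,1,2),(1,0,3),(2,3,0),(3,2,1)$ — which are exactly the images under $\psi$ of the $d=1$ stars — impose the real constraint. The argument I expect to work: suppose $|W|\ge 9$ and derive a forbidden triple. Since each coordinate takes values in $\mathbb{Z}_4$, if $|W|\ge 9$ then by pigeonhole many pairs of elements of $W$ agree in the first coordinate, or agree in the second; combining with the structure of the single ``off-diagonal orbit'' $\{(0,1,2),(1,0,3),(2,3,0),(3,2,1)\}$ (note these four triples are a single orbit under a cyclic relabeling, and their first two coordinates run over $\{(0,1),(1,0),(2,3),(3,2)\}$) one locates three elements $\mathbf x,\mathbf y,\mathbf z\in W$ realizing a forbidden pattern in every coordinate. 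Because $n=2$ this is a genuinely finite check, and if a slick pigeonhole does not immediately close it I would fall back on a short exhaustive case analysis over the $16$ elements of $\mathbb{Z}_4^2$, which is entirely routine.

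Finally, to pass from $\ind_3(A^{\mathbb{Z}_2^2}_3)=8$ to $\ind_3(n,n)=8$: one inequality, $\ind_3(4,4)\ge\ind_3(A^{\mathbb{Z}_2^2}_3)=8$, is immediate since $A^{\mathbb{Z}_2^2}_3$ is a particular linjection $[4]^2\to[4]$ (indeed a permutation, i.e.\ a Latin square of order $4$). For $\ind_3(4,4)\le 8$ I would invoke the graph-theoretic characterization: by Theorem~\ref{equiv-rho-kappa}, $\ind_3(4,4)\le\rho_3(4,4)$, and then argue $\rho_3(4,4)\le 8$ — wait, the excerpt explicitly states $\rho_3(4,4)=9$, so that bound is \emph{not} enough, and the gap between $\ind_3(4,4)=8$ and $\rho_3(4,4)=9$ is precisely the point being illustrated. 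Hence I must bound $\ind_3(4,4)$ directly: a star-free subset of size $\ge 9$ of $f^{-1}(1)$ for \emph{some} order-$4$ Latin square must be ruled out, which amounts to checking, over the finitely many Latin squares of order $4$ (there are $576$, falling into just two main classes up to row/column/symbol permutations — the cyclic group $\mathbb{Z}_4$ and the Klein group $\mathbb{Z}_2^2$), that no $A$-star-free subset exceeds $8$. Equivalently, in the $X$-free language this corresponds to the two relevant sets $X$ coming from the two order-$4$ groups, and both give the bound $8$; one can also observe that any order-$4$ Latin square is isotopic to a group table, and $A$-star-freeness is preserved under isotopy (row/column/symbol renaming), so it truly suffices to handle $\mathbb{Z}_4$ and $\mathbb{Z}_2^2$.

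\textbf{Main obstacle.} The crux is the upper bound $\ind_3(A^{\mathbb{Z}_2^2}_3)\le 8$ (and the parallel bound for $\mathbb{Z}_4$): establishing that no $X$-free set in $\mathbb{Z}_4^2$ has $9$ elements, while in principle a finite check, needs either a clean pigeonhole argument exploiting that the off-diagonal triples form a single length-$4$ orbit, or an organized exhaustive search; getting a genuinely short human-readable argument rather than brute force is the delicate part. I expect the lower-bound construction and the reduction $\ind_3(A^{\mathbb{Z}_2^2}_3)\le\ind_3(4,4)$ to be straightforward, and the subtlety with isotopy classes of order-$4$ Latin squares to be easily dispatched once one notes $A$-star-freeness is an isotopy invariant.
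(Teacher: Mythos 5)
Your proposal is sound and, like the paper's own proof, ultimately rests on a finite verification: exhibit an $8$-element star-free set and rule out $9$ by an organized exhaustive check. The differences are in the organization. For the lower bound the paper simply displays the witness in the Cayley table of $\mathbb{Z}_2^2$ (eight cells forming two disjoint blocks, rows $\{0,2\}\times$ columns $\{0,1\}$ and rows $\{1,3\}\times$ columns $\{2,3\}$, which is easily checked to be star-free); you only promise to write such a set down, so you still owe that explicit witness, though one certainly exists. For the upper bound the paper works directly with an arbitrary order-$4$ linjection: in a putative $9$-element star-free set some value must occur three times, star-freeness then excludes the six "corner-completion" cells determined by those three positions, so the set is pinned down to within one cell of a specific $10$-cell configuration, and a tiny exhaustive search finishes. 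You instead reduce $\ind_3(4,4)$ to the two Cayley tables of $\mathbb{Z}_4$ and $\mathbb{Z}_2^2$ via the (correct) observations that star-freeness is an isotopy invariant and that every order-$4$ Latin square is isotopic to one of these two group tables, and you rightly notice that Theorem~\ref{equiv-rho-kappa} is useless here because $\rho_3(4,4)=9$; you also offer the $X$-free reformulation of Theorem~\ref{th:alt_char} for the $\mathbb{Z}_2^2$ case. Both routes are valid; the paper's pigeonhole is self-contained and shrinks the search space without invoking the classification of order-$4$ Latin squares, whereas your route imports that (standard, finitely checkable) classification and still defers the actual checks for both group tables, which you assert but do not carry out. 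Neither difference affects correctness, but to match the paper's level of completeness you should exhibit the size-$8$ set and actually perform (or at least precisely describe) the size-$9$ exclusion.
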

Together with Lemma~\ref{lem:tensor_product} this yields:
\begin{corollary}
\label{cor:z2_ind}
For every integer $n \ge 2$, there holds $\ind_3(A^{\mathbb{Z}_2^{n}}_3) \ge 2^{3n/2}$.
\end{corollary}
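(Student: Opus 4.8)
The plan is to bootstrap the single base case $n=2$ furnished by Lemma~\ref{lem:n8} into the general bound by iterated tensoring, via Lemma~\ref{lem:tensor_product}. Write $g(n):=\ind_3(A^{\mathbb{Z}_2^{n}}_3)$, so that Lemma~\ref{lem:n8} reads $g(2)=8$ and the target is $g(n)\ge 2^{3n/2}$ for all $n\ge 2$. The one preliminary observation I would make is that the proof of Lemma~\ref{lem:tensor_product} nowhere uses that the two tensor factors have equal dimension: given star-free $S\subseteq(\mathbb{Z}_2^{a})^{2}$ and $S'\subseteq(\mathbb{Z}_2^{b})^{2}$, form $S\otimes S'\subseteq(\mathbb{Z}_2^{a+b})^{2}$ in the obvious coordinate-wise fashion; any $A^{\mathbb{Z}_2^{a+b}}_3$-star in $S\otimes S'$ has difference vector $d=(d^{1},d^{2})$ with $d^{1}\in\mathbb{Z}_2^{a}$ and $d^{2}\in\mathbb{Z}_2^{b}$, and since $d\ne 0$ the nonzero half projects to a star in $S$ or in $S'$. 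As $|S\otimes S'|=|S|\cdot|S'|$, this yields the super-multiplicative inequality $g(a+b)\ge g(a)\cdot g(b)$ for all $a,b\ge 1$.

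Granting this, the even case is immediate: for $n=2m$ the $m$-fold tensor power of the size-$8$ set supplied by Lemma~\ref{lem:n8} is $A^{\mathbb{Z}_2^{n}}_3$-star-free of size $8^{m}=2^{3m}=2^{3n/2}$, i.e.\ $g(2m)\ge g(2)^{m}\ge 2^{3n/2}$. For odd $n$ I would feed in one further, entirely elementary base case: over $\mathbb{Z}_2$ itself $A^{\mathbb{Z}_2}_3(x,y)=x+y$, the pair $\{(0,0),(0,1)\}$ is star-free, and since any three of the four points of $(\mathbb{Z}_2)^{2}$ form a star one gets $g(1)=2$. Then for $n=2m+1\ge 3$, $g(n)\ge g(2)^{m}g(1)\ge 2^{3m+1}=2^{(3n-1)/2}$, which already yields the stated bound up to the factor $\sqrt{2}$. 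To recover the clean exponent $3n/2$ in the odd case as well, it suffices to verify the single numerical inequality $g(3)\ge\lceil 2^{9/2}\rceil=23$; then, writing $n=2m+3$, one has $g(n)\ge g(2)^{m}g(3)\ge 2^{3m}\cdot 2^{9/2}=2^{3n/2}$.

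All of this is bookkeeping except the inequality $g(3)\ge 23$, and that is the only step I expect to require genuine effort: it amounts to exhibiting a star-free subset of $(\mathbb{Z}_2^{3})^{2}$ --- equivalently, by Theorem~\ref{th:alt_char}, an $X$-free subset of $\mathbb{Z}_4^{3}$ --- of size at least $23$ among the $64$ available points. This is a finite search rather than a structural argument, but it is the only place where the exact constant $3/2$ in the exponent is at stake; the asymmetric form of Lemma~\ref{lem:tensor_product} and the evaluation $g(1)=2$ are routine.
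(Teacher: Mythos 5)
Your route is the paper's route: tensor the size-$8$ set of Lemma~\ref{lem:n8} using Lemma~\ref{lem:tensor_product}. The paper's entire proof is that one sentence, and, read literally with the lemma as stated (where $S\otimes S$ doubles the dimension), it only yields the bound for dimensions reachable by doubling from $2$; your observation that the proof of Lemma~\ref{lem:tensor_product} works just as well for blocks of unequal length, giving $g(a+b)\ge g(a)\,g(b)$ for $g(n)=\ind_3(A^{\mathbb{Z}_2^{n}}_3)$, is correct and settles all even $n$, which is in substance what the paper proves. Where you go beyond the paper is in noticing the genuine issue it silently skips: for odd $n$ the available base cases $g(2)=8$ and $g(1)=2$ (your computation of $g(1)$ is right, since any three of the four points of $(\mathbb{Z}_2)^2$ form a star) only give $2^{(3n-1)/2}$, a factor $\sqrt{2}$ short of the stated bound, and recovering the exponent $3n/2$ needs one extra base case such as $g(3)\ge 23$. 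You leave that verification as an unexecuted finite search, so as written your argument, like the paper's, establishes the corollary only for even $n$, together with the slightly weaker $2^{(3n-1)/2}$ for odd $n$. Be aware that this last step is not a formality to be waved through: the paper itself lists determining $\ind_3(A^{\mathbb{Z}_2^{3}}_3)$ among its open problems, so $g(3)\ge 23$ is not available off the shelf and would have to be supplied by an explicit star-free construction in $(\mathbb{Z}_2^{3})^2$ (equivalently, an $X$-free subset of $\mathbb{Z}_4^{3}$ of size $23$ via Theorem~\ref{th:alt_char}) or by exhaustive search.
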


\begin{proof}
Let $S$ be a star-free subset in $A^{\mathbb{Z}_2^{2}}_{3}$ of cardinality $8=4^{3/2}$ as
in Lemma~\ref{lem:n8}. The claim follows by taking the tensor powers of $S$ as in Lemma~\ref{lem:tensor_product}.
\end{proof}

\begin{proof}[Proof of Lemma~\ref{lem:n8}]
We denote the elements of $\mathbb{Z}_2^2$ as follows $(0,0)=0$, $(0,1)=1$, $(1,0)=2$ and $(1,1)=3$. The matrix associated with $A^{\mathbb{Z}_2^2}_{3}$ is:
\begin{center}
\begin{tabular}{ c c c c }
 \bf{0} & \bf{1} & 2 & 3\\
 1 & 0 & \bf{3} & \bf{2}\\
 \bf{2} & \bf{3} & 0 & 1\\
 3 & 2 & \bf{1} & \bf{0}
\end{tabular}
\end{center}
The $8$ entries in bold form a star-free set, so that $\ind_3(A^{\mathbb{Z}_2^{2}}_{3}) \ge 8$, and consequently
$\ind_3(4,4) \ge 8$. One can verify that in fact $\ind_3(4,4) = \ind_3(A^{\mathbb{Z}_2^{2}}_3) = 8$. To see this
first notice that if there is a star-free subset of cardinality $9$ then one of the values must appear three
times which already determines $10$ out of the $16$ entries. One can now rule out the existence of a size $9$
star-free subset by exhaustive search.
\end{proof}

It is interesting to determine $\ind_k(n,n)$ for some small values of $n$. For example:
\begin{itemize}
\item
Determine $\ind_3(8,8)$, in particular compute $\ind_3(A^{\mathbb{Z}_2^{3}}_3)$.
\item
Determine $\ind_k(4,4)$, in particular compute $\ind_k(A^{\mathbb{Z}_2^{2}}_k)$,
for $k>3$.
\end{itemize}

It is interesting to note that, while as shown, $\ind_3(4,4) = 8$,
there holds $\rho_3(4,4)=9$. The fact that $\rho_3(4,4)\le 9$ is easy to verify,
and the following example shows the equality:
\begin{center}
\begin{tabular}{ c c c c }
 \bf{1} & * & * & \bf{3}\\
 * & \bf{1} & * & \bf{4}\\
 * & * & \bf{1} & \bf{2}\\
 \bf{2} & \bf{3} & \bf{4} & *
\end{tabular}
\end{center}

Thus, continuing the discussion at the end of Section~\ref{sec:graph_theoretic_characterization}, $\rho_3(n,N)$
and $\alpha_3(n,N)$ need not be equal when $N<2n-1$.

The following theorem shows that for groups, $\alpha_k$ 
(the size of the largeset 1-monochromatic cylinder intersection) completely
characterizes $\chi_k$ (the minimum number of cylinder intersections
that partition the 1's). The proof
 is a simple generalization of Theorem 4.3 in \cite{CFL83}.
\begin{theorem}
\label{kappa_and_corners}
If $G$ is a group of order $n$, then
$$
\chr_k(f^G_{k}) \le O\left( \frac{kn^{k-1}\log n}{\ind_k(f^G_{k})} \right).
$$
\end{theorem}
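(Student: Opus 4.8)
The goal is to cover $f^{-1}(1)$ — a set of size $n^{k-1}$, since $f^G_k$ is a permutation — by $O\!\left(\frac{k n^{k-1}\log n}{\ind_k(f^G_k)}\right)$ monochromatic cylinder intersections. The strategy, following Theorem 4.3 of \cite{CFL83}, is a translation-and-covering argument: take a single large $1$-monochromatic cylinder intersection $C$ of size $\ind_k(f^G_k)$, show that every group translate of $C$ (in an appropriate sense) is again a $1$-monochromatic cylinder intersection, and then invoke a probabilistic/greedy covering lemma to conclude that $O\!\left(\frac{n^{k-1}}{|C|}\cdot k\log n\right)$ such translates suffice to cover all of $f^{-1}(1)$. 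The extra factor of $k\log n$ (rather than just $\log n$) comes from the fact that the natural translates live in the $k$-dimensional space while the object to be covered has only $n^{k-1}$ points, so one needs to be slightly careful about what exactly is being translated.

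First I would set up the right notion of translate. Write $f = f^G_k$, so $f(x_1,\dots,x_k)=1$ iff $x_1\cdots x_k = T$ (fix an order for non-abelian $G$; abelian is the main case). For $g = (g_1,\dots,g_k) \in G^k$ with $g_1\cdots g_k = e$, the map $(x_1,\dots,x_k)\mapsto (g_1 x_1,\dots,g_k x_k)$ is a bijection of $G^k$ that preserves $f^{-1}(1)$ and sends cylinders in direction $i$ to cylinders in direction $i$ (since it acts coordinatewise). Hence if $C$ is a $1$-monochromatic cylinder intersection, so is $g\cdot C$ for every such $g$; and $C$ and its translates all have size $|C| = \ind_k(f)$. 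The set of valid translation parameters $g$ is a subgroup of $G^k$ of size $n^{k-1}$, and it acts transitively on $f^{-1}(1)$: given any target point $p \in f^{-1}(1)$ and any fixed base point $q \in C$, there is a translate sending $q$ to $p$. So $f^{-1}(1)$ is covered by the family $\{g\cdot C\}$ as $g$ ranges over this subgroup.

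Next comes the covering step, which is the heart of the argument. We have a set $U = f^{-1}(1)$ of size $n^{k-1}$ and a family $\mathcal{F} = \{g\cdot C : g_1\cdots g_k = e\}$ of subsets of $U$, each of size $|C|$, such that every point of $U$ lies in the same number of members of $\mathcal{F}$ (by the transitivity/regularity of the group action — each point is covered exactly $|C|\cdot n^{k-1}/|U| = |C|$ times, out of $|\mathcal{F}| = n^{k-1}$ sets). For such a regular covering, the standard greedy/probabilistic argument (pick translates at random; after $t$ picks the expected number of uncovered points is at most $|U|(1-|C|/|U|)^t$) shows that $O\!\left(\frac{|U|}{|C|}\log |U|\right)$ translates cover everything. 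Since $|U| = n^{k-1}$, $\log|U| = (k-1)\log n = O(k\log n)$, giving the claimed bound $O\!\left(\frac{k n^{k-1}\log n}{\ind_k(f^G_k)}\right)$. Each translate is a $1$-monochromatic cylinder intersection by the first step, so this is a valid cover of $f^{-1}(1)$ by that many cylinder intersections, i.e. an upper bound on $\chr_k(f^G_k)$.

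**Main obstacle.** The only real subtlety is verifying that the coordinatewise group action genuinely maps cylinder intersections to cylinder intersections and preserves $f^{-1}(1)$ — this is where one uses crucially that $f$ is the group function and not an arbitrary permutation (for a general permutation there is no such transitive symmetry group, which is exactly why the theorem is stated only for groups). Everything else — the regularity of the covering and the resulting logarithmic-overhead greedy bound — is routine. I would also double-check the non-abelian bookkeeping: the condition on valid $g$ and the order in which translates act must be chosen consistently with the fixed multiplication order defining $f^G_k$, but this does not change the count.
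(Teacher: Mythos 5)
Your proof is correct and follows essentially the same route as the paper's: translation invariance of $1$-monochromatic cylinder intersections under the group action, followed by the standard probabilistic covering argument showing that $O\!\left(\frac{n^{k-1}}{\ind_k(f^G_{k})}\,k\log n\right)$ random translates suffice. The paper phrases the translation step as preservation of $A$-star-freeness of subsets of $G^{k-1}$ under translation by $\mathbf{a}\in G^{k-1}$, which is just the graph-function view of your coordinatewise action on $f^{-1}(1)$; otherwise the two arguments coincide.
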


\begin{proof}
The proof is in two steps:\\
{\bf Step I:} $A$-star freeness is preserved under translation, where $A=A^G_{k}$.
Indeed, let $S \subset G^{k-1}$ and let $\mathbf{a} = (a_1,\ldots,a_{k-1}) \in G^{k-1}$. If
$$(x_1,\ldots,x_{k-1}),(x_1+d,\ldots,x_{k-1}),\ldots,(x_1,\ldots,x_{k-1}+d)$$
is an $A$-star in $S+\mathbf{a}$, then
$$(x_1,\ldots,x_{k-1}) - \mathbf{a},(x_1+d,\ldots,x_{k-1}) - \mathbf{a},\ldots,(x_1,\ldots,x_{k-1}+d) - \mathbf{a}$$
is an $A$-star in $S$.\\
{\bf Step II:} Every $S \subset G^{k-1}$ has $O(\frac{kn^{k-1}\log n}{|S|})$
translates whose union covers all of $G^{k-1}$. This follows from the integrality gap for covering \cite{Lov75}, but
for completeness here is a proof.
Pick at random $t$ translates $\mathbf{a}_1, \ldots,\mathbf{a}_{t} \in [n]^{k-1}$ of $S$. The
probability that a given element $\mathbf{x} \in [n]^{k-1}$ is covered by a random translate of $S$ is exactly $\frac{|S|}{n^{k-1}}$. Therefore, and since the translates are picked independently uniformly at random, the expected number of uncovered elements of $G^{k-1}$ is
$$
n^{k-1} \cdot \left( 1-\frac{|S|}{n^{k-1}} \right)^t.
$$
Taking $t = O(\frac{kn^{k-1}\log n}{|S|})$ makes the expectation less than $1$, which proves the lemma.
\end{proof}

\begin{corollary}
\label{cor:bound_z_n}
There holds
$$
\chr_3(f^{\mathbb{Z}_2^m}_{3}) \le O\left( m\cdot 2^{m/2} \right).
$$
\end{corollary}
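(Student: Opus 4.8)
The plan is to simply combine the general bound of Theorem~\ref{kappa_and_corners} with the lower bound on $\ind_3$ for the group $\mathbb{Z}_2^m$ recorded in Corollary~\ref{cor:z2_ind}. First I would instantiate Theorem~\ref{kappa_and_corners} with $k=3$ and $G=\mathbb{Z}_2^m$. The only point that needs care is the clash of notation: in Theorem~\ref{kappa_and_corners} the symbol $n$ denotes the \emph{order} of the group, so here that order is $N=2^m$, whereas in Corollary~\ref{cor:z2_ind} the symbol $n$ is the \emph{exponent}. Keeping this straight, Theorem~\ref{kappa_and_corners} gives
\[
\chr_3(f^{\mathbb{Z}_2^m}_{3}) \le O\!\left(\frac{3\cdot (2^m)^2 \log(2^m)}{\ind_3(f^{\mathbb{Z}_2^m}_{3})}\right)
= O\!\left(\frac{m\cdot 2^{2m}}{\ind_3(f^{\mathbb{Z}_2^m}_{3})}\right).
\]

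Next I would substitute the lower bound $\ind_3(A^{\mathbb{Z}_2^m}_3)\ge 2^{3m/2}$ from Corollary~\ref{cor:z2_ind} (valid for $m\ge 2$; for $m=1$ the claimed bound is trivial). This yields
\[
\chr_3(f^{\mathbb{Z}_2^m}_{3}) \le O\!\left(\frac{m\cdot 2^{2m}}{2^{3m/2}}\right) = O\!\left(m\cdot 2^{m/2}\right),
\]
which is exactly the assertion of the corollary.

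There is no real obstacle here: the statement is a direct numerical consequence of two results already established in the excerpt, and the entire content of the argument is the arithmetic $2^{2m}/2^{3m/2}=2^{m/2}$ together with $\log(2^m)=\Theta(m)$. The only thing worth flagging explicitly in the write-up is the indexing convention mentioned above, so that a reader does not mistakenly read the $3n/2$ exponent in Corollary~\ref{cor:z2_ind} as referring to the group order rather than to $m$.
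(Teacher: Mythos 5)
Your proposal is correct and follows exactly the paper's route: the paper also proves this corollary by combining Theorem~\ref{kappa_and_corners} (with $k=3$ and group order $2^m$) with the bound $\ind_3(A^{\mathbb{Z}_2^m}_3)\ge 2^{3m/2}$ from Corollary~\ref{cor:z2_ind}. Your explicit handling of the $n$-versus-$m$ notational clash is a sensible addition but does not change the argument.
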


\begin{proof}
Follows from Theorem~\ref{kappa_and_corners} and Corollary~\ref{cor:z2_ind}.
\end{proof}

The bound in Corollary~\ref{cor:bound_z_n} is similar to the bound of Ada, Chattopadhyay, Fawzi and Nguyen \cite{ada2015nof} for the case $k=3$,
with slight improvement in the log factors. Ada et al. proved $\chr_3(f^{\mathbb{Z}_2^m}_{3}) \le O(m^{k+1} 2^{m/2^{k-2}})$,
by observing that this function is a composed function of the form $NOR \circ XOR$ and giving non trivial protocols
for such cases.

Note that the proof of Theorem~\ref{kappa_and_corners} yields a cover of $[n]^{k-1}$ by $A$-star free sets, but this is easily turned into a partition, since a subset of an $A$-star free set is also $A$-star free.
Therefore, any lower bound on $\ind_k(f^G_{k})$ can be translated into an upper bound on $\chr_k(f^G_{k})$ which in turn
implies an efficient (non-explicit) protocol for $f^G_{k}$ (By Theorem~\ref{D_chr_gf}). Another interesting consequence
of Theorem~\ref{kappa_and_corners} is that any lower bound on $\chr_k(f^G_{k})$ significantly larger than $\log n$
improves the known bounds for the size of a corner-free subset of $G$. This clearly boosts our interest in
the multiparty communication complexity of $f^G_{k}$.

We wonder whether there are analogs of Theorem~\ref{kappa_and_corners} for every permutation.
\begin{question}
How large can $\chr_k(A)\cdot \ind_k(A)/n^{k-1}$ be for an arbitrary permutation $A$?
\end{question}

\section{Lower Bounds}

\subsection{Nonconstructive Lower Bounds}

We first prove a nearly tight but nonconstructive lower bound on the communication complexity of random high-dimensional permutations.
\begin{theorem}
\label{th:random_permutation_cc}
For every integer $k \ge 3$, and for most $(k-1)$-dimensional permutations $f: [n]^k\to \{0,1\}$, $$\log \chr_k(f) \ge \Omega(\frac{\log n}{k}).$$
\end{theorem}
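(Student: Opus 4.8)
The plan is to use a counting argument in the spirit of \cite{BDPW07}, comparing the number of $(k-1)$-dimensional permutations of order $n$ against the number of such permutations whose $1$-set can be covered by few cylinder intersections. Suppose for contradiction that a large fraction of permutations $f$ satisfy $\chr_k(f) \le M$ for some small $M$; equivalently $D_k(f)$ is small, by Theorem~\ref{D_chr_gf}. Each such $f$ then admits a partition of $f^{-1}(1)$ into at most $M$ $1$-monochromatic cylinder intersections $C_1,\dots,C_M$, where each $C_j = \bigcap_{i=1}^k T^j_i$ with $T^j_i$ a cylinder in dimension $i$. A cylinder in dimension $i$ is determined by an arbitrary subset of $[n]^{k-1}$ (the coordinates other than $i$), so there are $2^{n^{k-1}}$ cylinders in each fixed direction, hence at most $2^{k n^{k-1}}$ cylinder intersections, and at most $2^{k M n^{k-1}}$ ways to choose an ordered $M$-tuple of them. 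Since a permutation $f$ with $\chr_k(f)\le M$ is completely recovered from such a tuple (its $1$-set is the union of the $C_j$, and $f$ is determined by its $1$-set), the number of permutations $f$ with $\chr_k(f)\le M$ is at most $2^{kMn^{k-1}}$, with perhaps an extra polynomial-in-$M$ bookkeeping factor that is absorbed into the exponent.

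Next I would invoke the lower bound on the total number of $(k-1)$-dimensional permutations of order $n$. By Keevash's result \cite{keevash2018existence} (and already, for the purposes of a counting lower bound, by simpler arguments), the number of such permutations is at least $\left(\frac{n}{e^{O(k)}}\right)^{n^{k-1}}$, i.e. $2^{\Omega(n^{k-1}\log n)}$ for fixed $k$ (more precisely $2^{\Omega(n^{k-1}(\log n)/k)}$ once the $k$-dependence in the base is tracked, consistent with the $2$-dimensional count $((1+o(1))n/e^2)^{n^2}$ of Latin squares quoted in the introduction). Comparing the two bounds: if $\chr_k(f)\le M$ held for, say, at least half of all permutations, we would need
\[
2^{kMn^{k-1}} \ \ge\ \tfrac12\cdot 2^{\Omega(n^{k-1}(\log n)/k)},
\]
forcing $kM \ge \Omega\!\left(\frac{\log n}{k}\right)$, i.e. $M \ge \Omega\!\left(\frac{\log n}{k^2}\right)$; taking logs, $\log \chr_k(f) = \log M \ge \log\log n - O(\log k)$. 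To get the stated bound $\log\chr_k(f)\ge\Omega(\tfrac{\log n}{k})$ rather than merely $\Omega(\log\log n)$, I would instead run the comparison at the level of $\log\chr_k$ directly: writing $L = \log\chr_k(f)$, a permutation with parameter $L$ is described by $2^L$ cylinder intersections, costing $\le k\, 2^L n^{k-1}$ bits to specify, so the number of permutations with $\log\chr_k(f)\le L$ is at most $2^{k 2^L n^{k-1}}$; matching this against $2^{\Omega(n^{k-1}(\log n)/k)}$ yields $k\,2^L \ge \Omega(\tfrac{\log n}{k})$, hence $2^L \ge \Omega(\tfrac{\log n}{k^2})$ and $L \ge \log\log n - O(\log k) = \Omega(\tfrac{\log n}{k})$ for $k$ in the relevant regime. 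Since the exceptional set (permutations not satisfying this) has density at most, say, $2^{-\Omega(n^{k-1}\log n /k)} \to 0$, the bound holds for almost all permutations.

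The main obstacle is getting the constants and the $k$-dependence to line up so that the final bound is genuinely $\Omega(\tfrac{\log n}{k})$ and not something weaker like $\Omega(\tfrac{\log n}{k^2})$ or $\Omega(\log\log n)$. This requires (i) a sufficiently precise lower bound on the number of high-dimensional permutations with the right exponent — here one must be careful that Keevash's enumeration gives base $n/e^{O(k)}$ inside the $n^{k-1}$-th power, so that $\log$(count) is $\Theta(n^{k-1}(\log n - O(k)))$, which is $\Theta(n^{k-1}\log n / k)$ only in a restricted range of $k$ (e.g. $k \le \mathrm{poly}\log n$), and the theorem statement should be read with that understanding — and (ii) bounding the number of cylinder-intersection covers tightly, in particular making sure the "describe $f$ by its cover" map is injective (it is, since $f$ is determined by $f^{-1}(1) = \bigcup_j C_j$) and that reordering/overcounting the $2^L$ pieces only contributes a lower-order $2^L\log(2^L) = O(L\,2^L)$ term to the description length, negligible next to $k\,2^L n^{k-1}$. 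A secondary point to verify is that the argument really uses that $f$ is a permutation only through the enumeration lower bound; the upper bound on the number of small-$\chr_k$ functions is completely general, which is exactly why this method applies to permutations whereas the graph-function argument of \cite{BDPW07} does not transfer directly.
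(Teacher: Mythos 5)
There is a genuine gap, and it sits in the counting step. Your encoding of a permutation with $\chr_k(f)\le M$ lists the $M$ cylinder intersections explicitly, at a cost of about $kn^{k-1}$ bits each, i.e.\ $kMn^{k-1}$ bits in total. Comparing this with the $2^{\Omega(n^{k-1}\log n)}$ permutations guaranteed by Keevash forces only $M\ge\Omega(\log n/k)$ (or $\Omega(\log n/k^2)$ with the extra $1/k$ you inserted into the enumeration bound), i.e.\ $\log\chr_k(f)\ge\log\log n-O(\log k)$. That is exponentially weaker than the claim: the theorem asserts $\log\chr_k(f)\ge\Omega(\frac{\log n}{k})$, i.e.\ $\chr_k(f)\ge n^{\Omega(1/k)}$. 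Your final step, ``$L\ge\log\log n-O(\log k)=\Omega(\frac{\log n}{k})$ for $k$ in the relevant regime,'' fails exactly in the regime the theorem is about: for every fixed $k\ge3$ (indeed for all $k=o(\log n/\log\log n)$) one has $\log\log n=o(\log n/k)$, so no choice of constants rescues that equality. As written, the proposal proves only a $\log\log n$-type lower bound on $\log\chr_k$, not the theorem.

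The missing idea is a much more succinct encoding, and this is what the paper's proof does. Fix a partition of $f^{-1}(1)$ into $\chi=\chr_k(f)$ monochromatic cylinder intersections $C_1,\dots,C_{\chi}$ and record, for each direction $i\in[k]$ and each line $(a,i)$, only the \emph{name} $A_i(a)\in[\chi]$ of the part containing that line's unique $1$-entry. This amounts to $k$ functions $[n]^{k-1}\to[\chi]$, i.e.\ $kn^{k-1}\log\chi$ bits instead of your $k\chi n^{k-1}$. The map $f\mapsto(A_1,\dots,A_k)$ is injective because $f(x)=1$ if and only if the $k$ names $A_i(x^{-i})$ all coincide: one direction is immediate, and conversely if they all equal $j$, the $k$ points witnessing these names agree with $x$ in all but one coordinate and lie in $C_j$, so the star-closure property of cylinder intersections (Lemma~\ref{lem_cyl_int_gf}) forces $x\in C_j\subseteq f^{-1}(1)$. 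Then $2^{kn^{k-1}\log\chi}\ge 2^{\Omega(n^{k-1}\log n)}$ gives $\log\chi\ge\Omega(\log n/k)$ for most permutations, as claimed. Your side points (injectivity of the cover description, the reordering factor) are fine but moot once the encoding is changed; also note that the paper uses Keevash's count simply as $2^{\Omega(n^{k-1}\log n)}$, and the $1/k$ in the theorem comes from the $k$ functions $A_i$, not from any $k$-dependence in the enumeration bound.
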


\begin{proof}
The lower bound on the number of high-dimensional permutations was recently improved by
Keevash \cite{keevash2018existence} who showed that there are at least $2^{\Omega(n^d\log n)}$ $d$-dimensional permutations. If we view a permutation as a map $[n]^k\to \{0,1\}$, this means at least $2^{\Omega(n^{k-1}\log n)}$ permutations. In the spirit of the proof of Lemma~{3.5} in \cite{BDPW07}, we now estimate the number of such permutation for which $\chr_k(f)$ is bounded.
Note that we cannot simply use the estimate from \cite{BDPW07} since it only works for functions $f: [n]^{k-1} \times [N] \to \{0,1\}$
with $N$ that is much smaller than $n$, roughly $N \le \sqrt{\frac{n}{k}}$.

Let $f: [n]^k\to \{0,1\}$ be a $(k-1)$-dimensional permutation, and let $\{C_1,\ldots,C_{\chi}\}$ be a partition of $f^{-1}(1)$ into $\chi=\chr_k(f)$ cylinder intersections.
For $i \in [k]$ define a function $A_i : [n]^{k-1} \to [\chi]$ as follows: For $a = (a_1,\ldots,a_{k-1}) \in [n]^{k-1}$, let $L = (a,i)$ be a line in $[n]^{k-1}$.
There is a unique $1$ entry in $L$ and this entry is in exactly one of the cylinder intersections $\{C_1,\ldots,C_{\chi}\}$, say $C_j$. In this case we define $A_i(a_1,\ldots,a_{k-1})=j$.

As seen in the proof of Theorem~\ref{D_chr_gf}, it is possible to recover $f$ from knowledge of
the functions $A_1,\ldots,A_k$. Namely, $f(x_1,\ldots,x_k) = 1$
if and only if all the values $A_i(x_1,\ldots,x_{i-1},x_{i+1},\ldots,x_{k-1})$ for $i=1,\ldots ,k$ are equal.
But for every $i \in [k]$ there are $\chi^{n^{k-1}}$ possible functions $A_i : [n]^{k-1} \to [\chi]$.
Thus, the number of $(k-1)$-dimensional permutations $f: [n]^k\to \{0,1\}$
with $\chr_k(f) \le \chi$ is at most $(\chi^{n^{k-1}})^k = 2^{kn^{k-1}\cdot\log \chi}$. Combining this with Keevash's lower bound we get that
\[
\log \chi \ge \Omega(\frac{\log n}{k}).
\]
for most $(k-1)$-dimensional permutations.
\end{proof}

A simple corollary of Theorem~\ref{th:random_permutation_cc},  and Theorem~\ref{D_chr_gf} is:
\begin{corollary}
For every integer $k \ge 2$, almost all
$(k-1)$-dimensional permutations $f: [n]^k\to \{0,1\}$ satisfy
\[
D_k(f) \ge \Omega(\frac{\log n}{k}).
\]
\end{corollary}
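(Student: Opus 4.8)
The plan is to deduce the statement directly from the two results just established: Theorem~\ref{D_chr_gf}, which gives $D_k(f)\ge\log\chr_k(f)$ for every graph function, and Theorem~\ref{th:random_permutation_cc}, which gives $\log\chr_k(f)\ge\Omega(\log n/k)$ for almost every $(k-1)$-dimensional permutation. The combination is immediate once one checks that Theorem~\ref{D_chr_gf} is applicable to permutations; the only genuinely separate point is the boundary case $k=2$, which Theorem~\ref{th:random_permutation_cc} does not cover.

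First I would note that a $(k-1)$-dimensional permutation $f:[n]^k\to\{0,1\}$ is in particular a graph function in the earlier sense: viewing the $k$-th coordinate as the distinguished one, the permutation property says that for every $(x_1,\dots,x_{k-1})$ there is a unique $b$ with $f(x_1,\dots,x_{k-1},b)=1$, i.e.\ $f$ is a graph function with $N=n$. Hence Theorem~\ref{D_chr_gf} applies and yields $D_k(f)\ge\log\chr_k(f)$ for every such $f$. For $k\ge 3$, Theorem~\ref{th:random_permutation_cc} then says that all but a $o(1)$ fraction of the $(k-1)$-dimensional permutations of order $n$ satisfy $\log\chr_k(f)\ge\Omega(\log n/k)$, and composing the two inequalities gives $D_k(f)\ge\Omega(\log n/k)$ on that same dense family of permutations.

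It remains to treat $k=2$. A $1$-dimensional permutation is an $n\times n$ permutation matrix, and any combinatorial rectangle contained in its $1$-set is a single cell: two $1$'s lying in distinct rows and columns span a rectangle that necessarily contains a $0$. Hence $\chr_2(f)=n$ and $D_2(f)\ge\log n\ge\Omega(\log n/2)$, and in fact this holds for every such $f$, not merely for almost all. There is no real obstacle in the argument; the only things worth a word of care are checking that a permutation qualifies as a graph function so that Theorem~\ref{D_chr_gf} applies, and observing that the ``almost all'' in the conclusion is witnessed by exactly the exceptional set of Theorem~\ref{th:random_permutation_cc} (with $k=2$ handled separately and more strongly).
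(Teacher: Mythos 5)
Your proposal is correct and is essentially the paper's own argument: the corollary is obtained by combining Theorem~\ref{D_chr_gf} (which applies since every $(k-1)$-dimensional permutation is a graph function with $N=n$) with the counting bound of Theorem~\ref{th:random_permutation_cc}. Your separate treatment of $k=2$, where the counting theorem is not stated but $\chr_2(f)=n$ holds for every permutation matrix, is a small extra patch beyond what the paper records, but it does not change the substance of the argument.
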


Theorem~\ref{th:random_permutation_cc} proves the lower bound $\chr_k(f) \ge 2^{\Omega(\frac{\log n}{k})}$ for a random
permutation $f: [n]^k \to \{0,1\}$.
It is interesting to find out how this extends for a random linjection $f: [n]^{k-1}\times [N] \to \{0,1\}$ with $n < N$. It is also interesting to see whether the dependency
on $k$ can be removed.

Finally we turn to the case $k=3$. The number of 2-dimensional permutations (aka Latin squares) is known to be $((1+o(1))\frac{n}{e^2})^{n^2}$ (see \cite{van2001course}).
It follows that for most $2$-dimensional permutations $f$ there holds $\log \chr_3(f) \ge \frac{1}{3}\log n-\Theta(1)$.

\subsection{Lower Bounds for $\chi_k(n,N)$}

We  prove an upper bound on $\ind_k(n,N)$, using its graph theoretic interpretation from
Section~\ref{sec:graph_theoretic_characterization}. This naturally entails a corresponding
lower bound on $\chi_k$. We start with the case $k=3$.

\begin{theorem}
\label{1}
Let $A: [n] \times [n] \to [N]$ be a linjection, where $N \le n\cdot 2^{c\log^* (n)}$. Then,
$$
\ind_3(A) \le O\left(\frac{n^2}{2^{c\log^* (n)}}\right).
$$
Here $c > 0$ is an absolute constant.
\end{theorem}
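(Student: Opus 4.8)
The plan is to push the problem into the graph-theoretic language of Section~\ref{sec:graph_theoretic_characterization} and then apply the quantitative triangle removal lemma of Fox, following the scheme by which Solymosi deduced the corners theorem from triangle removal. I treat the stated case $k=3$. Write $\alpha=\ind_3(A)$, and let $S\subseteq f^{-1}(1)$ be a star-free set of size $\alpha$, where $A=A(f)$. Exactly as in the proof of Theorem~\ref{equiv-rho-kappa}, $T:=\{\langle x,y,b\rangle:(x,y,b)\in S\}$ is a family of $\alpha$ pairwise edge-disjoint triangles in $G(A)\subseteq K_{n,n,N}$ that contains no $G$-star; edge-disjointness uses only that $f$ is a linjection. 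Let $G'$ be the subgraph of $K_{n,n,N}$ whose edge set is the union of the $3\alpha$ edges of the triangles in $T$, viewed as a graph on $m:=2n+N$ vertices. The key structural point is that $G'$ has \emph{exactly} $\alpha$ triangles: every triangle of $G'$ is of the form $\langle x,y,b\rangle$ with the three vertices in the three parts, each of its edges lies by edge-disjointness in a unique member of $T$, and a short case analysis shows that either these three members coincide (so the triangle belongs to $T$) or they form a $G$-star with center $\langle x,y,b\rangle$, which is excluded. Thus $G'$ is an edge-disjoint union of exactly $\alpha$ triangles on $m$ vertices.

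Next I would apply triangle removal to $G'$. Since the $\alpha$ triangles are edge-disjoint, $G'$ cannot be made triangle-free by deleting fewer than $\alpha$ edges. On the other hand $G'$ has exactly $\alpha$ triangles, and the trivial bound $\alpha\le|f^{-1}(1)|=n^2$ holds; since $N\ge n$ (a linjection needs $N\ge n$) we have $m=2n+N\ge 3n$, hence $\alpha\le n^2\le m^3/(27n)$. Therefore, if $\epsilon$ is chosen so that Fox's $\delta(\epsilon)$ satisfies $1/\delta(\epsilon)\le 27n$, the triangle removal lemma applies to $G'$ and destroys all of its triangles by deleting at most $\epsilon m^2$ edges; combined with the previous sentence this yields $\alpha\le \epsilon m^2$.

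It remains to choose $\epsilon$ and the constant $c$, and here the \emph{quantitative} form of the removal lemma is essential. By Fox's bound $1/\delta(\epsilon)\le \mathrm{tow}(C_0\log(1/\epsilon))$ for an absolute constant $C_0$, where $\mathrm{tow}$ denotes the tower function. Take $\epsilon:=2^{-(\log^* n-1)/C_0}$, so that $C_0\log(1/\epsilon)=\log^* n-1$ and hence $1/\delta(\epsilon)\le\mathrm{tow}(\log^* n-1)<n\le 27n$, as needed above. Using the hypothesis $N\le n\cdot 2^{c\log^* n}$ we get $m\le 3n\cdot 2^{c\log^* n}$, so $\alpha\le\epsilon m^2\le 9\cdot 2^{1/C_0}\cdot n^2\cdot 2^{(2c-1/C_0)\log^* n}$. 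Setting the absolute constant $c:=1/(3C_0)$ (any smaller positive value works as well) makes the exponent exactly $-c\log^* n$, so $\ind_3(A)=\alpha=O(n^2/2^{c\log^* n})$, which is the claim.

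The only genuinely non-routine ingredient is this last balancing act. A removal lemma with a weaker, $\mathrm{tow}(\mathrm{poly}(1/\epsilon))$-type dependence (as from the naive regularity lemma) would only afford $\epsilon$ as small as $1/(\log^* n)^{\Omega(1)}$ and yield a bound of the shape $n^2/(\log^* n)^{\Omega(1)}$ under the much stronger hypothesis $N\le n\cdot(\log^* n)^{O(1)}$; one truly needs Fox's improvement, in which the tower has height only $O(\log(1/\epsilon))$, to be able to take $\epsilon=2^{-\Omega(\log^* n)}$ while still keeping the hypothesis of the removal lemma valid and beating the $2^{2c\log^* n}$ blow-up introduced by the extra $N$ vertices of the part $W$. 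For general $k$ the same scheme should go through with the $k$-partite $(k-1)$-uniform hypergraph $H(A)$ of Section~\ref{sec:graph_theoretic_characterization} in place of $G'$ and the hypergraph removal lemma in place of triangle removal.
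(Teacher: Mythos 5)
Your proposal is correct and follows essentially the same route as the paper's proof: pass to the tripartite graph $G(A)$, observe that the trivial triangles coming from a star-free set are edge-disjoint and their union contains no further triangles (an extra triangle would force a $G$-star), and then apply Fox's quantitative triangle removal lemma, balancing parameters with $c$ on the order of $1/(3\cdot 405)$. The only difference is bookkeeping — you fix $\epsilon=2^{-(\log^* n-1)/C_0}$ and verify the lemma's hypothesis directly, while the paper takes $\delta,\epsilon$ to be the actual triangle/edge densities and uses $\delta<1/N$ — which amounts to the same calculation.
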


The proof of Theorem~\ref{1} is an adaptation of Solymosi's \cite{Solymosi2003}
simplification of
Ajtai and Szemer\'edi's \cite{ajtai1974sets} Corners Theorem. 
We will use the improved version of the
triangle removal lemma \cite{ruzsa1978triple} due to Fox \cite{fox2010}:
\begin{lemma}[Triangle removal lemma]
\label{2}
For every $\epsilon >0$ there is a $\delta >0$ such that every $n$-vertex graph with at most $\delta n^3$ triangles can be made
triangle-free by removing $\epsilon n^2$ edges. Specifically $\delta^{-1}$ can be taken as a tower of twos of height $405 \log \epsilon^{-1}$.
\end{lemma}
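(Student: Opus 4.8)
The plan is to follow Fox's argument \cite{fox2010}, which replaces Szemer\'edi's Regularity Lemma by a more economical iterative regularization. The first step is the standard reduction to the tripartite case: given $G$ on $n$ vertices, form the tripartite graph $\widetilde G$ with classes $V_1,V_2,V_3$, each a copy of $V(G)$, joining $u\in V_i$ to $v\in V_j$ ($i\ne j$) exactly when $uv\in E(G)$. Since $G$ has no loops there are no degenerate triangles, and triangles of $\widetilde G$ are in $6$-to-$1$ correspondence with triangles of $G$; moreover a set of edges whose removal destroys all triangles of $\widetilde G$ pulls back to such a set in $G$, and the passage from $n$ to $3n$ vertices costs only a constant factor, absorbed into the tower height. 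So it suffices to prove: if $\widetilde G$ has at most $\delta n^3$ triangles then one can delete at most $\epsilon n^2$ edges from each of the three bipartite graphs and leave $\widetilde G$ triangle-free.

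The heart of the proof is a \emph{structure lemma} producing vertex partitions $V_i=W_i^{1}\cup\dots\cup W_i^{m}$ into $m\le f(\epsilon)$ parts, where $f(\epsilon)$ is a tower of twos of height $O(\log\epsilon^{-1})$, such that a triangle \emph{counting lemma} holds: for all but a negligible fraction of triples of parts, whenever the three bipartite densities $d_{12},d_{13},d_{23}$ between $W_1^{a},W_2^{b},W_3^{c}$ are each at least $\epsilon$, the number of triangles with one vertex in each part is at least $\tfrac12 d_{12}d_{13}d_{23}\,|W_1^{a}||W_2^{b}||W_3^{c}|$. This is obtained by an iteration starting from the trivial partition: as long as the counting lemma fails somewhere one refines, splitting each vertex according to the density profile of its neighbourhood across the (at most $O(m)$) parts on the other side, quantized into $O(\epsilon^{-1})$ buckets. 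A single refinement round thus multiplies the number of parts by at most a singly-exponential factor, so after $k$ rounds the number of parts is at most a tower of height $O(k)$.

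Given the structure lemma, the conclusion is quick. Suppose $\widetilde G$ has at most $\delta n^3$ triangles, and first discard the few edges incident to parts much smaller than $n/m$. If some remaining triple of parts had all three densities $\ge\epsilon$ and obeyed the counting lemma, it would contain at least $\tfrac12\epsilon^3|W_1^{a}||W_2^{b}||W_3^{c}|\gtrsim \epsilon^3 (n/m)^3$ triangles, forcing $\delta\gtrsim \epsilon^3/m^3$. Hence once $\delta^{-1}$ exceeds a fixed polynomial in $m$ and $\epsilon^{-1}$ — still a tower of height $O(\log\epsilon^{-1})$ — no such triple exists, so every triangle of $\widetilde G$ lives in a triple of parts at least one of whose densities is below $\epsilon$. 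Deleting all edges lying between pairs of parts of density $<\epsilon$ removes at most $\epsilon n^2$ edges per bipartite graph and leaves $\widetilde G$ triangle-free. Tracking the constants through the iteration and the counting lemma is what yields the explicit claim that $\delta^{-1}$ may be taken as a tower of twos of height $405\log\epsilon^{-1}$.

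The main obstacle is the structure lemma, specifically showing that the iteration stops after only $O(\log\epsilon^{-1})$ rounds rather than the $\mathrm{poly}(\epsilon^{-1})$ rounds that a naive mean-square-density (energy) increment would give. Fox's key idea is to measure progress not by a quantity that increases additively by $\Omega(\mathrm{poly}(\epsilon))$ per round, but by a bounded irregularity defect of the current partition for which the chosen refinement rule provably shrinks the defect by a constant multiplicative factor each round; a defect in $[0,1]$ that shrinks geometrically reaches $\epsilon$ in $O(\log\epsilon^{-1})$ rounds, and combined with the singly-exponential per-round blow-up this gives a tower of height $O(\log\epsilon^{-1})$. Making this multiplicative decay hold for the \emph{same} refinement that keeps the per-round blow-up singly exponential is the delicate part, as is the bookkeeping needed to pin down the constant $405$; everything else — the tripartite reduction, the counting estimate, and the final edge removal — is routine.
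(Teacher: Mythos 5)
The paper itself offers no proof of this lemma: it is imported as a black box, the qualitative statement going back to Ruzsa and Szemer\'edi \cite{ruzsa1978triple} and the tower-of-height-$O(\log\epsilon^{-1})$ bound to Fox \cite{fox2010}. So there is no internal argument to compare against, and what you have written is an attempted reconstruction of Fox's proof. As such it has a genuine gap exactly where the theorem is hard. The tripartite reduction, the final deletion of edges between low-density pairs, and the arithmetic ``a good triple of parts with all densities $\ge\epsilon$ forces $\delta\gtrsim\epsilon^3/m^3$'' are indeed routine, and any iteration whose per-round blow-up in the number of parts is singly exponential in the current number of parts yields a tower of height proportional to the number of rounds. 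The entire content of Fox's improvement is that $O(\log\epsilon^{-1})$ rounds suffice, and your proposal does not prove this: you posit a bounded ``irregularity defect'' that the refinement shrinks by a constant multiplicative factor per round, but you neither define that quantity nor verify the multiplicative decay, and you explicitly flag it as ``the delicate part.'' The standard mean-square-density increment yields only an additive gain of $\mathrm{poly}(\epsilon)$ per round, hence $\mathrm{poly}(\epsilon^{-1})$ rounds and the classical tower bound; it is not automatic (and is not how Fox argues) that a refinement with singly-exponential growth admits a geometrically decaying defect. Fox's actual argument analyses the whole nested sequence of partitions at once, with an averaging/counting argument across levels rather than a per-round geometric decay, and that is precisely the ingredient your sketch replaces by an assertion. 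The constant $405$ is likewise taken on faith rather than derived.

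For the purposes of this paper nothing more is needed, since the lemma is legitimately cited from the literature; but judged as a standalone proof, your proposal establishes only the outer shell of the argument and leaves its core unproven.
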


\begin{proof}[of Theorem~\ref{1}]
Let $G=G(A)$, $V = V(G)$. Notice that $|V| = 2n+N$.
Let $S \subset [n]^2$ be an $A$-star free subset of size $\ind_3(A)$.
As in the proof of Theorem~\ref{equiv-rho-kappa}
we let $T = \{<x,y,A(x,y)>|(x,y) \in S\}$ be the family of triangles in $G$ that corresponds to $S$.
Let $F$ be that subgraph of $G$ whose edge set is the union of all triangles in $T$. This graph contains the $|S|$
edge-disjoint triangles in $T$, and no additional triangles.

Thus, if we denote $\delta = |S|/|V|^3$ and $\epsilon = |S|/|V|^2$, then $F$ contains exactly $\delta |V|^3$ triangles
and it cannot be made triangle free by removing fewer than $\epsilon |V|^2$ edges.
Lemma~\ref{2} yields $\log^* (\delta^{-1}) \le 405 \log (\epsilon^{-1})$, and
since $\delta < \frac{n^2}{(2n+N)^3} < \frac{1}{N}$ we conclude that
$$
\epsilon \le 2^{\frac{-1}{405}\log^* (N)}.
$$
But $|S| = \epsilon |V|^2 \le 9 \epsilon N^2$, so that for $N \le 2^{c \log^* (n)} n$, with $c = (3\cdot 405)^{-1}$, there holds
$$
|S| \le O\left(\frac{n^2}{2^{c\log^* (n)}}\right).
$$
\end{proof}

We now proceed to the case of general $k$.

\label{sec:the_case_of_a_general_k}

\begin{theorem}
\label{1-general-k}
For every natural numbers $k\ge 3$ ,$n$ and $N$ it holds that
$$
\ind_k(n,N) \le O\left(\frac{kn^{k-2}N}{\log^* (n)}\right).
$$
\end{theorem}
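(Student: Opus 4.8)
The plan is to mimic the $k=3$ argument from Theorem~\ref{1}, but replacing the graph (triangle) removal lemma by the hypergraph removal lemma for $(k-1)$-uniform hypergraphs. First I would invoke the graph-theoretic characterization from Section~\ref{sec:graph_theoretic_characterization}: it suffices to bound $\rho_k(n,N)$ since $\ind_k(n,N)\le\rho_k(n,N)$, and $\rho_k(n,N)=\rho_k(H)$ for some $k$-partite $(k-1)$-uniform hypergraph $H$ on vertex classes $Q_1,\dots,Q_{k-1},W$ of sizes $n,\dots,n,N$. So fix a family $K$ of $\rho_k(n,N)$ hyperedge-disjoint cliques (each clique is a $k$-set $\{x_1,\dots,x_{k-1},b\}$ inducing $\binom{k}{k-1}=k$ hyperedges) containing no additional clique, and let $F$ be the $(k-1)$-uniform hypergraph on $V=Q_1\cup\dots\cup Q_{k-1}\cup W$ whose edge set is the union of all cliques in $K$. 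Then $F$ has exactly $|K|$ cliques (``simplices''), they are pairwise hyperedge-disjoint, and $|V|=(k-1)n+N$.

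Next I would apply the hypergraph removal lemma (Gowers, Nagle--Rödl--Schacht--Skokan, Rödl--Skokan, Tao): for every $\varepsilon>0$ there is $\delta>0$ such that any $(k-1)$-uniform hypergraph on $M$ vertices with at most $\delta M^{k}$ copies of $K^{(k-1)}_{k}$ can be made $K^{(k-1)}_{k}$-free by deleting at most $\varepsilon M^{k-1}$ edges. Since the $|K|$ simplices of $F$ are hyperedge-disjoint, destroying all of them requires deleting at least $|K|$ edges, so $F$ cannot be made simplex-free by removing fewer than $|K|$ edges; hence with $\varepsilon M^{k-1}=|K|$ (where $M=(k-1)n+N$) the removal lemma forces $F$ to contain more than $\delta M^{k}$ simplices, i.e. $|K|>\delta M^{k}$. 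Combining $|K|=\varepsilon M^{k-1}$ with $|K|>\delta M^{k}$ gives $\varepsilon>\delta M=\delta\,((k-1)n+N)$, which relates $\varepsilon$ to $M$ via the (inverse tower-type, but here only used qualitatively) function $\delta=\delta(\varepsilon)$ from the removal lemma and yields $\varepsilon\le 1/\log^*(M)$ up to absolute constants. Then $|K|=\varepsilon M^{k-1}\le \varepsilon\,((k-1)n+N)^{k-1}$, and bounding $((k-1)n+N)^{k-1}=O(k^{k-1}(n^{k-1}+N^{k-1}))$ — more carefully, for the regime of interest $N\le(k-1)n$ this is $O((kn)^{k-1})$ while for large $N$ one absorbs into the $N$-dependent factor — one extracts the claimed bound $\ind_k(n,N)\le O\!\left(\frac{kn^{k-2}N}{\log^*(n)}\right)$.

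The main obstacle, and the place where I expect the bookkeeping to be delicate, is exactly the last step: converting the relation between $\varepsilon$ and $M$ into the stated shape $\frac{kn^{k-2}N}{\log^*(n)}$ rather than, say, $\frac{M^{k-1}}{\log^*(M)}=\frac{((k-1)n+N)^{k-1}}{\log^*((k-1)n+N)}$. One has to (i) argue that $\log^*(M)=\Theta(\log^*(n))$ as long as $N$ is at most, say, exponential in $n$ (and handle larger $N$ trivially, since then $\ind_k(n,N)\le n^{k-1}\le$ the claimed bound outright because the bound exceeds $n^{k-1}$ once $N\ge n\log^*(n)/k$), and (ii) replace one factor $M=\Theta(n)+\Theta(N)$ by $N$ at the cost of the factor $k$ and absolute constants — this uses $\ind_k(n,N)\le\rho_k(n,N)$ together with the Observation-style reduction that bounds things by a union of $N$ induced matchings, so the $W$-dimension contributes linearly in $N$ and the $Q_i$'s contribute $n^{k-2}$ after projecting out one coordinate. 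I would carry out (i) and (ii) as short lemmas on the side so that the main computation stays clean, exactly paralleling how Theorem~\ref{1} passes from $|V|=2n+N$ to the bound $|S|\le 9\varepsilon N^2\le O(n^2/2^{c\log^*n})$.
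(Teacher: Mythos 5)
Your overall strategy is the paper's: reduce to $\rho_k(n,N)$ via Theorem~\ref{equiv-rho-kappa-general-k}, take the extremal family of hyperedge-disjoint cliques with no additional clique, note that destroying all cliques requires deleting at least $\rho_k(n,N)$ hyperedges, and use the tower-type dependence of $\delta$ on $\epsilon$ to extract the $1/\log^*$ saving. The genuine gap is in which removal lemma you invoke and in the final conversion that you defer to ``bookkeeping''. The paper uses the $k$-partite form (Theorem~\ref{2-general-k}), in which the deleted hyperedges are bounded by $\epsilon\prod_{j\ne i}|X_j|$ in each direction, so the total is at most $\epsilon\bigl(n^{k-1}+(k-1)n^{k-2}N\bigr)\le \epsilon k n^{k-2}N$, and the stated bound follows for every $N$. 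You instead use the version normalized by the total vertex count $M=(k-1)n+N$, which only gives $\rho_k(n,N)\le\epsilon M^{k-1}$, and $M^{k-1}$ is genuinely larger than $kn^{k-2}N$ once $N\gg n$: already for $k=3$ and $N=n\sqrt{\log^* n}$ your route yields $\approx M^2/\log^* M=\Theta(n^2)$, i.e.\ nothing beyond the trivial bound, while the theorem claims $O(nN/\log^* n)=O\bigl(n^2/\sqrt{\log^* n}\bigr)$; and since $N<n\log^* n/k$ in this regime, your fallback ``$\ind_k\le n^{k-1}$'' does not apply either. So in the whole intermediate range $\omega(n)\le N\le o(n\log^* n/k)$ the argument as written proves strictly less than the statement, and step (ii) of your patch (``replace one factor $M$ by $N$'') is not something the vertex-normalized removal lemma provides; the induced-matching observation you cite is stated only for $k=3$ and does not by itself perform that replacement. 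This is a missing step, not mere constant-chasing.

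The gap is closable in two ways. Either quote the multipartite removal lemma as the paper does, or keep your lemma and add a genuine decomposition: recall that $\rho_k(n,N)$ is the largest star-free subset of $[n]^{k-1}\times[N]$ meeting every line at most once; partition $[N]$ into $\lceil N/n\rceil$ blocks of size at most $n$, note that star-freeness and the line condition pass to subsets, so $\rho_k(n,N)\le\lceil N/n\rceil\,\rho_k(n,n)$, and then run your argument only at $N=n$ (where $M=kn$ and $\log^* M=\Theta(\log^* n)$) to get $\rho_k(n,n)=O\bigl(k^{k-1}n^{k-1}/\log^* n\bigr)$. This recovers $O\bigl(n^{k-2}N/\log^* n\bigr)$ up to a $k$-dependent constant ($k^{k-1}$ rather than the displayed factor $k$); that looseness is tolerable if, as in the paper, the $O(\cdot)$ is allowed to absorb constants depending on the uniformity coming from the removal lemma itself, but it does not literally reproduce the factor $k$ in the statement, whereas the partite lemma does.
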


To this end we need the hypergraph removal lemma.
\begin{theorem}[\cite{gowers2007hypergraph, nagle2006counting, rodl2004regularity, rodl2006applications, tao2006variant}]
\label{2-general-k}
Let $k$ be a positive integer. For every $\epsilon > 0$ there exists $\delta > 0$
with the following property. Let $H$ be a $k$-partite $(k-1)$-uniform hypergraph with parts $X_1,\ldots, X_k$ and at most
$\delta \Pi_{i=1}^k |X_i|$ cliques. There exists for each $i$, a subset $R_i\subseteq \Pi_{j\ne i} X_j$ of at most
$\epsilon \Pi_{j\ne i} |X_j|$ hyperedges of $H$ so that the hypergraph $H\setminus\cup R_i$ is clique-free.
Specifically one can take $\delta^{-1}$ to be a tower
of twos of hight $O(\epsilon^{-1})$.
\end{theorem}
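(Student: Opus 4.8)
The statement is the multipartite $(k-1)$-uniform hypergraph removal lemma; for $k=3$ it is exactly Lemma~\ref{2} (tripartite triangle removal), so the real content is the extension to $k\ge 4$. I would prove it by the regularity method of R\"odl--Nagle--Schacht--Skokan and Gowers (an alternative is Tao's simplicial approach): first a hypergraph regularity lemma to impose structure on $H$, then a counting/embedding lemma showing that a single surviving clique already forces $\Omega(\prod_i|X_i|)$ cliques, and finally a cleaning step whose cost fits inside the $R_i$-budget. The proof runs directly in the multipartite setting, the partition hierarchy respecting the parts $X_1,\dots,X_k$.

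\textbf{Step 1: regularize.} Apply the $(k-1)$-uniform hypergraph regularity lemma to $H$. This yields a bounded-complexity nested partition --- a partition of each $X_i$ into $O(1)$ cells, a partition of pairs refining the grid induced by the vertex cells, a partition of triples refining the one induced by the pair-partition, and so on up to the $(k-2)$-tuples --- such that $H$ is $\delta_{k-1}$-quasirandom with respect to all but a $\delta_{k-1}$-fraction of the \emph{polyads}, i.e.\ the $(k-1)$-tuples of $(k-2)$-tuple cells supported on $k-1$ of the parts. The complexity of this partition, hence the final $\delta^{-1}$, is a tower of twos of height $O(\epsilon^{-1})$: this is the cost of iterated regularization, each new level of the hierarchy costing another energy-increment round, with the levels chained through rapidly decreasing tolerances $\delta_2\gg\delta_3\gg\cdots\gg\delta_{k-1}$.

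\textbf{Step 2: clean.} Delete every hyperedge of $H$ lying in a polyad that is (a) not $\delta_{k-1}$-quasirandom, or (b) of relative density below a threshold $\eta=\eta(\epsilon)>0$, or (c) supported on a lower-order polyad of density below a corresponding threshold. The regularity guarantee bounds the count from class (a), the density thresholds bound (b) and (c), and with the parameters chosen appropriately each class contributes at most $\tfrac{\epsilon}{3}\prod_{j\ne i}|X_j|$ hyperedges in each direction $i$; so the total deletion respects the $R_i$-budget of the statement. Call the surviving hypergraph $H'$. I claim $H'$ is clique-free. Suppose $\{v_1,\dots,v_k\}$ were a clique of $H'$. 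Then for each $i$ the hyperedge $\{v_1,\dots,v_k\}\setminus\{v_i\}$ survived, so it lies in a quasirandom polyad of density $\ge\eta$ over a dense lower-order structure; the cells containing $v_1,\dots,v_k$ together with these $k$ mutually consistent polyads form a quasirandom $(k-1)$-uniform complex with all densities $\ge\eta$. The counting lemma gives at least $(1-o_\eta(1))\prod(\text{densities})\cdot\prod_i|\text{cell}(v_i)|\ \ge\ c(\epsilon)\prod_{i=1}^k|X_i|$ cliques supported there, for an explicit $c(\epsilon)>0$; these are cliques of $H'\subseteq H$. Choosing $\delta\le c(\epsilon)$ contradicts the hypothesis that $H$ has fewer than $\delta\prod_i|X_i|$ cliques, so no clique survives and $H\setminus\bigcup_i R_i=H'$ is clique-free.

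\textbf{Main obstacle.} The argument is a thin scaffold over two genuinely deep results --- the $(k-1)$-uniform hypergraph regularity lemma and its companion counting/embedding lemma --- and the difficulty for $k\ge 4$ is setting these up compatibly: quasirandomness of $H$ at level $k-1$ is measured relative to a partition of $(k-2)$-tuples that must itself be regular relative to a partition of $(k-3)$-tuples, and so on, and the counting lemma has to survive this entire tower of relative notions (the ``$(\delta,r)$-regular'' bookkeeping). Making the counting lemma robust enough that a \emph{single} surviving clique yields $\Omega(\prod_i|X_i|)$ cliques, and tracking the chain $\delta_2\gg\cdots\gg\delta_{k-1}\gg\eta\gg\delta$ that produces the tower-of-twos height $O(\epsilon^{-1})$, is where essentially all the work lies. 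In a self-contained treatment I would quote the regularity lemma and the counting lemma from the cited references \cite{gowers2007hypergraph, nagle2006counting, rodl2004regularity, rodl2006applications, tao2006variant} and carry out only the cleaning-plus-counting assembly above in full.
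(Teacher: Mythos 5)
The paper does not prove this theorem at all: it is imported verbatim, with citations, from the hypergraph-removal literature, and your sketch is exactly the standard regularity--counting--cleaning argument carried out in those references, with the two deep ingredients (the $(k-1)$-uniform regularity lemma and its counting lemma) black-boxed -- so in substance you are doing what the paper does by citation, and the qualitative outline is sound. One caveat: the explicit quantitative sentence should not be presented as falling out of your Step 1, since for uniformity at least $3$ (i.e.\ $k\ge 4$ here) the known regularity-method proofs give only Ackermann-type bounds on $\delta^{-1}$, not a tower of twos of height $O(\epsilon^{-1})$; that tower-type claim is stated in the paper only on the authority of the cited works (and is really only established in the graph case $k=3$, where Fox's bound even gives height $O(\log \epsilon^{-1})$), so your sketch supports the removal statement but not the advertised dependence of $\delta$ on $\epsilon$ for general $k$.
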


\begin{proof}[of Theorem~\ref{1-general-k}]
By Theorem~\ref{equiv-rho-kappa-general-k} $\ind_k(n,N) \le \rho_k(n,N)$, so it suffices to prove that
$$
\rho_k(n,N) \le O\left(\frac{kn^{k-2}N}{\log^* (n)}\right).
$$
By definition of $\rho_k$, there is a $k$-partite $(k-1)$-uniform hypergraph $H$ with vertex sets $X_1=\cdots=X_{k-1}=[n]$
and $X_k = [N]$, containing exactly $\rho_k(n,N)$ disjoint $k$-cliques and no additional cliques.
Consequently, at least
$\rho_k(n,N)$ hyperedges must be removed to make $H$ clique-free, whence
$$
\rho_k(n,N) \le \epsilon kn^{k-2}N.
$$
But $\delta^{-1} \ge n$, so that
$\epsilon = O(\frac{1}{\log^* \delta^{-1}}) = O(\frac{1}{\log^* n})$. The claim follows.
\end{proof}

\subsection{A Lower Bound on $\chr_3(n,N)$}
\label{sec:a_lower_bounds_on_chr_3_n_N}

In this section we prove a better lower bound for the case $k=3$.

\label{sec:lower_bound_chr_3}

\begin{theorem}
\label{lb-chi_3}
$\chr_3(n,n) \ge \log \log n - O(\log\log\log n)$.
\end{theorem}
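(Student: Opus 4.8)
The plan is to show that for \emph{every} Latin square $A$ of order $n$ (equivalently, every linjection $[n]^2\times[n]\to\{0,1\}$, i.e.\ every $2$-dimensional permutation) one has $\chr_3(A)=t\ \Rightarrow\ n\le (4t)^{2^t}$; taking the minimum over $A$ and solving for $t$ then gives $\chr_3(n,n)\ge\log\log n-O(\log\log\log n)$. To set this up, recall that a subset of a star-free set is star-free, so a cover of $A^{-1}(1)$ by $t$ one-monochromatic cylinder intersections can be refined to a partition; by Lemma~\ref{lem_cyl_int_gf} this is exactly a colouring $\chi:[n]^2\to[t]$ of the cells with no monochromatic $A$-star. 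So I must prove: if $\chi$ has no monochromatic $A$-star then $n\le(4t)^{2^t}$.

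The engine is a ``one colour per level'' recursion. Call $(R,C)$, with $R,C\subseteq[n]$ and $|R|=|C|=m$, a \emph{clean block} at level $t'$ if $\chi$ uses at most $t'$ colours on $R\times C$ (the no-monochromatic-$A$-star property on $R\times C$ is automatic). I claim a clean block at level $t'$ with $m\ge 2\sqrt{nt'}$ contains a clean sub-block $(R'',C'')$ at level $t'-1$ with $m''\ge m^2/(4nt')$. Indeed, the $m^2$ cells of $R\times C$ carry at most $nt'$ distinct (symbol, colour) pairs, so some pair $(s,c)$ occurs on $\ell\ge m^2/(nt')$ cells; since symbol $s$ is a partial permutation matrix, these cells lie in distinct rows $x_1,\dots,x_\ell$ and distinct columns $y_1,\dots,y_\ell$ with $A(x_i,y_i)=s$ and $\chi(x_i,y_i)=c$. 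For $i\ne j$, the three positions $(x_j,y_i),(x_i,y_i),(x_j,y_j)$ lie in $R\times C$ and form an $A$-star (since $A(x_i,y_i)=A(x_j,y_j)=s$), hence it is not monochromatic, forcing $\chi(x_j,y_i)\ne c$. Thus in the $\ell\times\ell$ block on $X=\{x_i\}$, $Y=\{y_i\}$ only the diagonal cells $(x_i,y_i)$ can carry colour $c$; splitting $X=X_1\sqcup X_2$ into halves and taking $R''$ a $\lfloor\ell/2\rfloor$-subset of $X_1$ and $C''=\{y_i:x_i\in X_2\}$, every cell of $R''\times C''$ avoids colour $c$, which gives the sub-block.

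Now iterate from $R_0=C_0=[n]$ at level $t$: applying the recursion $t-1$ times produces a clean block $R_{t-1}\times C_{t-1}$ of side $m_{t-1}$ on which $\chi$ uses a single colour. Writing $m_k=n/D_k$, the recursion $D_{k+1}\le 4tD_k^2$ with $D_0=1$ gives $4tD_k\le (4t)^{2^k}$, hence $m_{t-1}\ge n/(4t)^{2^{t-1}}$; a routine check shows the hypothesis $m_k\ge 2\sqrt{nt}$ holds at every step provided $n>(4t)^{2^t}$. For the contradiction: as $\chi$ is constant on $R_{t-1}\times C_{t-1}$ and has no monochromatic $A$-star, this block contains \emph{no} $A$-star at all; but if $\sigma_x:=A(x,\cdot)$, then for $x\ne x'$ in $R_{t-1}$ the sets $\sigma_x(C_{t-1})$ and $\sigma_{x'}(C_{t-1})$ must be disjoint — otherwise $\sigma_x(c_1)=\sigma_{x'}(c_2)$ for some $c_1,c_2\in C_{t-1}$, necessarily with $c_1\ne c_2$ (distinct rows of a Latin square agree in no column), and then $\{(x',c_1),(x,c_1),(x',c_2)\}$ is an $A$-star inside the block. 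So the $m_{t-1}$ sets $\sigma_x(C_{t-1})$ are pairwise disjoint subsets of $[n]$ of size $m_{t-1}$, giving $m_{t-1}^2\le n$, which contradicts $m_{t-1}\ge n/(4t)^{2^{t-1}}>\sqrt n$. Hence $n\le(4t)^{2^t}$, so $\log\log n\le t+\log\log(4t)$; we may assume $t\le\log\log n$ (else there is nothing to prove), whence $\log\log(4t)=O(\log\log\log n)$ and $\chr_3(n,n)\ge\log\log n-O(\log\log\log n)$.

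I expect the delicate points to be (i) the recursion step — arranging that the cells forbidden the colour $c$ genuinely form a rectangular sub-block, which is exactly what pairing rows $X_1$ with the columns indexed by the complementary half $X_2$ achieves, rather than merely a diagonal-deficient square; and (ii) verifying that the quadratic loss $m\mapsto m^2/(4nt)$, iterated exactly $t-1$ times, still leaves a block of side exceeding $\sqrt n$ — this is what pins the threshold at $n\approx 2^{2^{t}}$, i.e.\ at $\log\log n$, rather than at a tower (as a Hales--Jewett-style argument would give) or at an iterated logarithm.
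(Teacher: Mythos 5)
Your proof is correct and follows essentially the same route as the paper's own argument (Lemma~\ref{lb-1} specialized to $N=n$): repeatedly take the most popular (symbol, colour) class in the current block, use star-freeness to exclude that colour from an induced half-sized sub-rectangle, and iterate with quadratic loss until the colours run out, yielding $n\le(4t)^{2^t}$ and hence $\chr_3(n,n)\ge\log\log n-O(\log\log\log n)$. The only cosmetic difference is your termination step (a monochromatic block forces pairwise disjoint row-symbol sets, so its side is at most $\sqrt n$) in place of the paper's running the recursion until $|S_{L+1}|\le 1$; the bounds obtained are essentially identical.
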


This is clearly the case $N=n$ of the following lemma.

\begin{lemma}
\label{lb-1}
Let $L = \chr_3(n,N)$ for some integers $N\ge n$, then
\[\log n<(2^{L+1}-1)\cdot\log(4NL/n).\]
\end{lemma}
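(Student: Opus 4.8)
The plan is to prove the (equivalent) statement that any linjection $f:[n]^2\times[N]\to\{0,1\}$ whose $f^{-1}(1)$ is covered by $L$ $1$-monochromatic cylinder intersections forces $\log n<(2^{L+1}-1)\log(4NL/n)$, and then specialise to the minimiser. First I would pass, via Lemma~\ref{lem_cyl_int_gf}, to the matrix picture: write $A=A(f):[n]^2\to[N]$ and colour its $n^2$ cells by $L$ colours so that each colour class is \emph{star-free}, meaning (Example~\ref{a-star-latin}) that it contains no three cells $(x,y),(x',y),(x,y')$ of one colour with $A(x',y)=A(x,y')$. The whole argument is then an iterated ``delete a monochromatic corner'' step in the spirit of Solymosi~\cite{Solymosi2003} and Graham~\cite{graham2006monochromatic}, but purely combinatorial -- no removal lemma, unlike Theorem~\ref{1}.

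The engine I would prove is an \textbf{excision step}: if $A:R\times C\to[N]$ is a linjection with $|R|=|C|=m$ and a star-free colouring by $\ell\ge 1$ colours and $m^2\ge 4N\ell$, then on some $R'\subseteq R$, $C'\subseteq C$ with $|R'|=|C'|\ge m^2/(4N\ell)$ the restriction of $A$ carries a star-free colouring by only $\ell-1$ colours. To get this I would let colour $1$ be the most popular colour, so it occupies $\ge m^2/\ell$ cells, and use that -- exactly as in the Observation of Section~\ref{sec:graph_theoretic_characterization} -- star-freeness of a colour class says that, for each value $v$, the colour-$1$ cells of value $v$ form an induced matching in the ``colour~$1$'' bipartite graph on $R\cup C$: if $(x,y),(x',y')$ are colour-$1$ cells of value $v$ then neither $(x,y')$ nor $(x',y)$ is a colour-$1$ cell, else one has a monochromatic star. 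Thus colour $1$ is a disjoint union of $\le N$ induced matchings, so one of them, say $M=\{(x_1,y_1),\dots,(x_s,y_s)\}$ (distinct rows, distinct columns), has $s\ge m^2/(N\ell)\ge 4$ cells, with $(x_i,y_j)$ \emph{not} coloured $1$ whenever $i\ne j$ since $M$ is induced. Taking $p=\lfloor s/2\rfloor$, $R'=\{x_1,\dots,x_p\}$, $C'=\{y_{p+1},\dots,y_{2p}\}$ gives $|R'|=|C'|=p\ge m^2/(4N\ell)$, and every cell of $R'\times C'$ is some $(x_i,y_j)$ with $i<j$, hence not coloured $1$; so colours $2,\dots,\ell$ already colour all of $R'\times C'$ and star-freeness of each class is inherited.

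Granting this, I would iterate from $m_0=n$, $\ell_0=L$: as long as $m_i^2\ge 4N(L-i)$ the step yields an $m_{i+1}\times m_{i+1}$ linjection with a star-free $(L-i-1)$-colouring and $m_{i+1}\ge m_i^2/(4N(L-i))\ge m_i^2/(4NL)$. Running the step through $i=0,\dots,L-1$ would end at a nonempty linjection admitting a star-free $0$-colouring, which is impossible; hence $m_j^2<4NL$ for some $j\le L-1$. Unrolling $\log m_{i+1}\ge 2\log m_i-\log(4NL)$ gives $\log m_j\ge 2^j\log n-(2^j-1)\log(4NL)$, and combining with $2\log m_j<\log(4NL)$ this yields $2^{j+1}\log n<(2^{j+1}-1)\log(4NL)$, hence $\frac{\log n}{\log(4NL)}<1-2^{-(j+1)}\le 1-2^{-L}<\frac{2^{L+1}-1}{2^{L+1}}$; rearranging, $2^{L+1}\log n<(2^{L+1}-1)\bigl(\log n+\log(4NL/n)\bigr)$, i.e.\ $\log n<(2^{L+1}-1)\log(4NL/n)$.

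The main obstacle -- indeed the only place with real content -- is the excision step, and specifically the need for a \emph{second} pigeonhole: just deleting the most popular colour class would be far too weak, since a pseudorandom-looking dense class has no large empty rectangle, so one must first descend to a single value to obtain a genuine induced matching, which is then trivially cleared by passing to an off-diagonal sub-rectangle. This second pigeonhole is precisely where the factor $N$ enters the bound, which explains why the estimate degrades as $N$ grows; the remaining work -- tracking floors so that the applicability threshold $m^2\ge 4N\ell$ lines up with the $4NL$ appearing in the statement -- is routine.
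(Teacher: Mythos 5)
Your proof is correct and follows essentially the same route as the paper's: the double pigeonhole (colour and value) yielding a large monochromatic constant-value set, the observation that star-freeness plus the linjection property excludes that colour from the off-diagonal sub-rectangle, and the quadratic recursion losing one colour per round are exactly the steps of the paper's argument. The differences are only bookkeeping (you track square side lengths and stop when the pigeonhole threshold fails, whereas the paper tracks $|S_i|$ and runs the recursion until $|S_{L+1}|\le 1$), and both yield the identical final inequality.
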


\begin{proof}[of Lemma \ref{lb-1}]
Let $A: [n] \times [n] \to [N]$ be a linjection with $\chr_3(A) = L$. This means that $A$'s entries can be $L$-colored
so that every color class is $A$-star free. We pick $v_1 \in [N]$, the most frequent value that appears in $A$, and then $c_1\in [L]$, the most abundant color among $A$'s $v_1$-entries. Clearly, $|S_1| \ge n^2/(N L)$,
where $S_1$ is the set of $c_1$-colored $v_1$-entries in $A$. As usual we denote the closure of $S_1$ by $\bar{S}_1$, and note
that since $S_1$ meets every row and column in $A$ at most once, there is a combinatorial rectangle
$R_1 \subseteq \bar{S}_1\setminus S_1$ of sides $\frac{|S_1|}2\times\frac{|S_1|}2$. Clearly the color $c_1$
is missing from $R_1$.

Now we recurse:
Let $v_i \in [N]$ be the most frequent value that appears in $R_{i-1}$, and $c_i$ the most abundant color among these entries. Let $S_i$ be the set of $c_i$-colored $v_i$-entries in $R_{i-1}$. Finally,
$R_i \subseteq \bar{S}_i\setminus S_i$ is a combinatorial rectangle of sides $\frac{|S_i|}2\times\frac{|S_i|}2$
that misses colors $c_1, \ldots, c_{i-1}, c_i$. It follows that for all $i \ge 1$ there holds

\[|S_{i+1}|\ge\frac{|S_i|^2}{4NL},\]

which yields by induction that
\[ |S_i| \ge \frac{n^{2^{i}}}{4^{2^{i-1}-1}(NL)^{2^{i}-1}}.\]

But since we eliminate one letter each time, this reduction process can last at most $L$ steps,
namely $|S_{L+1}| \le 1$, whence

$$
1> \frac{n^{2^{L+1}}}{(4NL)^{2^{L+1}-1}} = n\cdot\left(\frac{n}{4NL}\right)^{2^{L+1}-1}
$$
as claimed.
\end{proof}

Another simple corollary of Lemma~\ref{lb-1} is due to Meshulam and is reproduced in \cite{alon2012nearly}.

\begin{corollary}
\label{cor:chr_N}
If $\chr_3(n,N) \le L$ for some integers $N\ge n$, then $N \ge \frac{1}{4L}\cdot n^{1+1/(2^L-1)}$.
\end{corollary}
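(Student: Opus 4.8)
The plan is to go back into the recursion used to prove Lemma~\ref{lb-1} and read off the bound with a slightly sharper stopping rule. Let $A\colon[n]^2\to[N]$ be a linjection achieving $L_0:=\chr_3(A)=\chr_3(n,N)\le L$ (note that $L_0\ge 1$, since a linjection has $n^2>0$ ones), and fix a partition of its cells into $L_0$ many $A$-star-free colour classes. Exactly as in the proof of Lemma~\ref{lb-1}, build the sequence $S_1,S_2,\dots$ of ``most frequent value, then most abundant remaining colour'' subsets; this yields $|S_1|\ge n^2/(NL_0)$, the recursion $|S_{i+1}|\ge|S_i|^2/(4NL_0)$, and hence, by induction, $|S_i|\ge n^{2^i}/\bigl(4^{2^{i-1}-1}(NL_0)^{2^i-1}\bigr)$ for every $i$ for which $S_i$ is defined.

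The point I would exploit is that the process halts at some step $j\le L_0$ with $|S_j|\le 1$. Indeed, every new set $S_i$ consumes a colour $c_i\in[L_0]$ that is thereafter absent from all the rectangles $R_i,R_{i+1},\dots$, so no more than $L_0$ steps are possible; and when the process stops at step $j$, the rectangle $R_j$ avoids every one of $c_1,\dots,c_j$, hence is empty, so its side $\floor{|S_j|/2}$ equals $0$, i.e. $|S_j|\le 1$. This is marginally stronger than the bound $|S_{L_0+1}|\le 1$ literally recorded in the proof of Lemma~\ref{lb-1}, and it is precisely this strengthening that yields the exponent $2^L-1$ rather than $2^{L+1}-1$.

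Finally, plugging $i=j$ into the induction bound and using $4^{2^{j-1}-1}\le 4^{2^j-1}$ gives $(4NL_0)^{2^j-1}\ge n^{2^j}=n\cdot n^{2^j-1}$, so $4NL_0\ge n^{1+1/(2^j-1)}$. Since $L_0\le L$ and $j\le L$ — so that $2^j-1\le 2^L-1$ and therefore $n^{1+1/(2^j-1)}\ge n^{1+1/(2^L-1)}$ — we conclude $4NL\ge 4NL_0\ge n^{1+1/(2^L-1)}$, i.e. $N\ge\tfrac1{4L}\,n^{1+1/(2^L-1)}$, as claimed. The one place that requires care, and the only real obstacle, is the stopping-rule observation $|S_j|\le 1$ for some $j\le L_0$; the rest is the bookkeeping already carried out in the proof of Lemma~\ref{lb-1}.
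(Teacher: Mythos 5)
Your proof is correct, and it runs on the paper's own machinery --- the recursion $|S_{i+1}|\ge |S_i|^2/(4NL_0)$ from the proof of Lemma~\ref{lb-1} --- but your sharpened stopping rule is not a cosmetic variation: it is actually needed to get the stated constant. The paper presents the corollary as an immediate consequence of Lemma~\ref{lb-1}, yet the lemma as stated only gives $\log n<(2^{L_0+1}-1)\log(4NL_0/n)$ for $L_0=\chr_3(n,N)$, which in the boundary case $L_0=L$ yields only $N\ge\frac{1}{4L}\,n^{1+1/(2^{L+1}-1)}$, weaker than the claimed exponent $1/(2^{L}-1)$. Your observation that the process already terminates with $|S_j|\le 1$ for some $j\le L_0$ (rather than merely $|S_{L_0+1}|\le 1$) recovers exactly the stated bound, and in fact shows that Lemma~\ref{lb-1} itself could be sharpened to $\log n<(2^{L}-1)\log(4NL/n)$. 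One small wording point: at a stopping step $j<L_0$ the rectangle $R_j$ is empty simply because the process can only halt when $R_j=\emptyset$ (otherwise a nonempty $S_{j+1}$ could always be chosen, since $R_j$ misses only $j<L_0$ colours); the ``all colours $c_1,\dots,c_j$ are missing, hence $R_j$ is empty'' justification is the right one only at $j=L_0$. Either way $R_j=\emptyset$ forces $\lfloor |S_j|/2\rfloor=0$, i.e.\ $|S_j|\le 1$, and the rest of your bookkeeping (the distinctness of the $c_i$, the inequality $4^{2^{j-1}-1}\le 4^{2^{j}-1}$, and the monotonicity steps $j\le L_0\le L$) is sound.
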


\medskip

\noindent {\bf A note on the case $k > 3$.}

As we have just seen $\chr_3(A) \ge \Omega(\log \log n)$ for every $2$-dimensional
permutation $A$. It is conceivable that a similar bound holds for higher dimensions as well.
This was previously conjectured in~\cite{beigel2006multiparty} for the Exact-$T$ problem.
If we try to adapt the proof of Lemma~\ref{lb-1} to higher $k$, exactly one difficulty arises
which we formulate as a question.

\begin{question}
\label{q_scs_k}
Let $S \subseteq [n]^{k}$ be a set of cardinality $m$ that meets every line at most once. Determine, or estimate $\phi_k(n, m)$, the least possible cardinality $|\bar{S}|$ of its closure. We use the shorthand $\phi_k(m)$ when appropriate.
\end{question}

For $k=2$ the answer is easy: $\phi_2(m)=m^2$, since $|\bar{S}| = |S|^2$.
But for $k > 2$ the problem becomes very hard and no lower bound is known.
In fact, for $k \ge 3$, and for large enough $m$ there holds $\phi_k(m)=m$. In other words,
unlike the case $k=2$ it may happen that $\bar{S}=S$ for large $S$. For example, as shown in \cite{CFL83}, $\phi_3(m)=m$
when $m=n^2/2^{\Omega(\sqrt{\log n})}$, whereas it is shown in \cite{shkredov2009two} that $\phi_3(m) > m$
when $m \ge n^2/(\log\log n)^{\frac{1}{22}}$.
For $k > 3$ the situation is even worse, and all we have are the very weak lower bounds from
Section~\ref{sec:the_case_of_a_general_k}. Namely, it follows from Theorem~\ref{1-general-k} that $\phi_k(m)$
must be larger than $m$ when $m \ge \Omega\left(\frac{kn^{k-1}}{\log^* (n)}\right)$, and that is all we know.

It should be clear that proving any non-trivial bounds on $\phi_k(m)$ is a very interesting
challenge. We raise the following conjecture in an attempt of improving the lower bounds on $\chr_3(n,n)$.
\begin{conjecture}
There are constants $c_1, c_2 > 0$ such that if $S \subseteq [n]^{3}$ meets
every line at most once, and if $|S| \ge n^2/(\log\log n)^{c_1}$, then $|\bar{S}| \ge n^3/(\log\log n)^{c_2}$.
\end{conjecture}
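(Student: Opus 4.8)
The plan is to argue by contradiction. Suppose $S\subseteq[n]^3$ meets every line at most once, $m:=|S|\ge n^2/(\log\log n)^{c_1}$, and yet $|\bar S|<n^3/(\log\log n)^{c_2}$. I would first reformulate $S$ as a partial Latin square: a partial map $A$ from a domain $D\subseteq[n]^2$ into $[n]$ in which no row and no column carries a repeated symbol, and $\bar S$ as the result of repeatedly adjoining centers of $A$-stars. The elementary observation driving everything is local: for a fixed symbol $z$, let $R_z,C_z\subseteq[n]$ be the rows, resp.\ columns, in which $z$ occurs in $S$ (so $|R_z|=|C_z|=n_z$, the multiplicity of $z$), and note that every \emph{filled} cell $(x,y)\in R_z\times C_z$ whose symbol is not $z$ is the center of an $A$-star of $S$, hence $(x,y,z)\in\bar S$. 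Counting these over all $z$, and using Cauchy--Schwarz on $\sum_z n_z=m$, already yields $|\bar S|=\Omega(d^3n^3)$ with $d=m/n^2$ \emph{provided} the filled cells are not pathologically concentrated away from the boxes $R_z\times C_z$. So the first step disposes of all \emph{pseudorandom-like} $S$, and only adversarially structured $S$, engineered to keep these boxes nearly devoid of filled cells, remain.

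For the structured case I would pass to the tripartite graph $G=G(A)\subseteq K_{n,n,n}$ of Section~\ref{sec:graph_theoretic_characterization}, whose $m$ trivial triangles are edge-disjoint (since $S$ is a partial Latin square) and whose nontrivial triangles are exactly the centers of $A$-stars of $S$, i.e.\ the first layer of the closure (deeper layers require iterating the construction). Since the $m$ trivial triangles are edge-disjoint, one must delete at least $m$ edges from $G$ to destroy all triangles; the triangle removal lemma (Lemma~\ref{2}) then forces $G$ to contain at least $\delta\,(3n)^3$ triangles for the corresponding $\delta$, and in the ideal world one combines this with the structural half of the removal lemma -- delete $o(m)$ edges, keep $(1-o(1))m$ of the original triangles -- to extract a star-free set of size $(1-o(1))m$. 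Such a set would contradict the bound $O\!\left(n^2/2^{c\log^*(n)}\right)$ on star-free sets that meet every line at most once, which follows from the proof of Theorem~\ref{1} run with $|V|=3n$ (so $\delta<1/n$). One would then iterate: each pass of the closure either already certifies that $\bar S$ is large enough, or supplies a new layer of positive density, and an amortized count over the layers accumulates to $n^3/(\log\log n)^{c_2}$.

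The main obstacle is purely quantitative, and I expect it to be the crux of the difficulty. With $m$ only a $1/(\log\log n)^{c_1}$ fraction of $n^2$ one has $\epsilon\approx1/(\log\log n)^{c_1}$ in the removal lemma, so $\delta^{-1}$ is a tower of twos of height $\Theta(\log\log\log n)$ -- which exceeds any polynomial in $n$ once $\log\log\log n$ is a modest constant, making $\delta\,(3n)^3\ll1$ and the deduction above \emph{vacuous}. This is precisely the barrier that keeps $\chr_3(n,n)$ pinned at $\Theta(\log\log n)$ in Theorem~\ref{lb-chi_3}. A proof of the conjecture therefore seems to demand one of two things: a removal/regularity lemma with (at worst) quasi-polynomial bounds -- a celebrated open problem, of essentially the same strength as breaking the Behrend/\rs\ barrier -- or a genuinely new structural dichotomy for dense partial Latin squares that bypasses regularity altogether, exploiting the extra rigidity that $S$ meets \emph{every} line at most once (not merely every row and column), which is exactly the feature that separates this question from the classical corners problem.
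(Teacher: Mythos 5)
This statement is a \emph{conjecture} in the paper: the authors state it precisely because it is open, and offer no proof, so there is no argument of theirs to compare against. Your text, by your own admission in the last paragraph, is a research plan rather than a proof, and it does contain genuine gaps. First, the ``pseudorandom versus structured'' dichotomy is never actually established: the Cauchy--Schwarz count over the boxes $R_z\times C_z$ is correct as far as it goes (each filled cell of $R_z\times C_z$ carrying a symbol other than $z$ is indeed the center of a star), but it only yields $|\bar S|=\Omega(d^3n^3)$ under the unproved proviso that filled cells are not concentrated away from those boxes -- and that proviso is exactly the adversarial case the conjecture is about, so the whole content of the problem is deferred to the second step.

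Second, the removal-lemma step does not merely suffer from the quantitative tower-type barrier you identify; the intended contradiction is unavailable even in principle at this density. If one could extract a star-free subset of size $(1-o(1))m$ with $m\ge n^2/(\log\log n)^{c_1}$, this would \emph{not} contradict the bound $O\bigl(n^2/2^{c\log^*(n)}\bigr)$ of Theorem~\ref{1}: since $\log^* n=o(\log\log\log n)$, we have $2^{c\log^*(n)}=o\bigl((\log\log n)^{c_1}\bigr)$, so $n^2/(\log\log n)^{c_1}$ lies strictly \emph{below} that threshold. The only known bound operating at the conjectured density is Shkredov's result cited in the paper, namely $\phi_3(m)>m$ for $m\ge n^2/(\log\log n)^{1/22}$, which gives only $\bar S\neq S$ -- very far from the conclusion $|\bar S|\ge n^3/(\log\log n)^{c_2}$, i.e.\ that the closure fills almost the whole cube. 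Combined with the tower-type dependence of $\delta$ on $\epsilon$ in Lemma~\ref{2}, which makes $\delta n^3$ negligible compared with $n^3/(\log\log n)^{c_2}$, the second step collapses, and the iteration over layers of the closure has no base to stand on. Your closing diagnosis is accurate: proving the conjecture appears to require either removal-type lemmas with dramatically better bounds or a new structural argument exploiting the ``at most one point per line'' rigidity; as it stands, the statement remains open and your proposal does not prove it.
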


\section{Conclusion and Open Problems}

This paper raises numerous open problems. Below we collect some of the major ones and explain
some implications that would follow from progress on these questions.

\begin{question}
Improve the lower bound $\chr_3(n,n) \ge \Omega(\log \log n)$.
\end{question}

Implications:

\begin{itemize}

\item Any lower bound $\chr_3(n,n) \ge \omega(\log \log n)$ yields an improvement to the best known bound
on the number of colors required to color the $n \times n$ grid with no monochromatic
equilateral right triangles. This subject goes back to Ajtai and Szemer\'{e}di's corners theorem \cite{ajtai1974sets} and its implications in additive combinatorics due to Solymosi \cite{Solymosi2003}.

\item A lower bound $\chr_3(n,n) \ge \omega(\log n)$ would improve the best known gap between
randomized and deterministic communication complexity in the 3-players NOF model.

\item  A lower bound $\chr_3(n,n) \ge \Omega(\log n \cdot \log\log n)$ will improve the best
known upper bound on the size of corner-free subsets of $G^2$ for any abelian group $G$.

\item  A lower bound $\chr_3(n,n) \ge \Omega(\log^2 n)$ will improve the best
bounds on the size of a subset of $\mathbb{Z}_n$ with no three-term arithmetic progression. This is a classic problem that goes
back at least to the 1950's \cite{roth1953certain}.

\end{itemize}

\begin{question}
Improve the upper bound $\chr_3(n,n) \le 2^{O(\sqrt{ \log n})}$.
\end{question}

Implications:

\begin{itemize}
\item
The construction of denser \rs\ graphs than
currently known. Namely, $n$-vertex graphs which are the disjoint union of $n$ induced matchings, all of the same size $r$. This,
in turn, reflects on the many applications of these.

\item
That would improve our understanding regarding the limits of the triangle removal lemma.
Note that the current gaps between the bound in this lemma are huge.
\end{itemize}

\begin{question}
Improve the bounds on $\chr_k(n,n)$ for $k > 3$.
\end{question}

\begin{question}
Improve the bounds on $\ind_k(n,n)$ for $k > 3$.
\end{question}

That would improve our state of knownledge regarding the bounds for the hypergraph removal lemma.

It is also interesting to determine $\ind_k(n,n)$ for some small values of $n$. For example:
\begin{itemize}
\item
Determine $\ind_3(8,8)$, in particular compute $\ind_3(A^{\mathbb{Z}_2^{3}}_3)$.
\item
Determine $\ind_k(4,4)$, in particular compute $\ind_k(A^{\mathbb{Z}_2^{2}}_k)$,
for $k>3$.
\end{itemize}

\begin{question}
What is the relationship between $\rho_k(n,N)$ and $\ind_k(n,N)$ in the whole range $n\le N \le (k-1)(n-1)$?
\end{question}

\bibliographystyle{plain}
\bibliography{hdp}

\end{document}